\newtheorem{theorem}{Theorem}[section]
\newtheorem{corollary}[theorem]{Corollary}
\newtheorem{lemma}[theorem]{Lemma}
\newtheorem{proposition}[theorem]{Proposition}
\newtheorem{question}[theorem]{Question}
\theoremstyle{definition}
\newtheorem{definition}[theorem]{Definition}
\newtheorem{example}[theorem]{Example}
\newtheorem{remark}[theorem]{Remark}
\newcommand{\F}{\mathbb F}
\newcommand{\I}{\mathbb I}
\newcommand{\N}{\mathbb N}
\newcommand{\Z}{\mathbb Z}
\newcommand{\ST}{\operatorname*{ST}}
\newcommand{\colonequal}{\mathrel{\mathop:}=}
\newcommand{\nequiv}{\mathrel{\not\equiv}}
\thanks{This project has received funding from the European Research Council (ERC) under the European Union's Horizon 2020 
research and innovation programme under the Grant Agreement No 648132. }
\begin{document}

\title[Automaticity and invariant measures of linear cellular automata]{Automaticity and invariant measures \\ of linear cellular automata}

\author{Eric Rowland}
\address{
	Department of Mathematics \\
	Hofstra University \\
	Hempstead, NY \\
	USA
}

\author{Reem Yassawi}
\address{
Institut Camille Jordan, Universit\'{e} Lyon-1, France;
School of Mathematics and Statistics, Open University, UK
}

\subjclass[2010]{11B85, 37B15}

\date{February 19, 2020}

\begin{abstract}
We show that spacetime diagrams of linear cellular automata $\Phi : \F_p^\Z \to \F_p^\Z$ with $(-p)$-automatic initial conditions are automatic.
This extends existing results on initial conditions which are eventually constant.
Each automatic spacetime diagram defines a $(\sigma, \Phi)$-invariant subset of $\F_p^\Z$, where $\sigma$ is the left shift map, and if the initial condition is not eventually periodic then this invariant set is nontrivial.
For the Ledrappier cellular automaton we construct a family of nontrivial $(\sigma, \Phi)$-invariant measures on $\F_3^\Z$.
Finally, given a linear cellular automaton $\Phi$, we construct a nontrivial $(\sigma, \Phi)$-invariant measure on $\F_p^\Z$ for all but finitely many $p$.
\end{abstract}

\maketitle

\section{Introduction}\label{Introduction}

In this article, we study the relationship between $p$-automatic sequences and spacetime diagrams of linear cellular automata over the finite field $\F_p$, where $p$ is prime.
For definitions, see Section~\ref{Preliminaries}.

There are many characterisations of $p$-automatic sequences.
For readers familiar with substitutions, Cobham's theorem~\cite{Cobham-1972} tells us that they are codings of fixed points of length-$p$ substitutions.
In an algebraic setting, Christol's theorem tells us that they are precisely those sequences whose generating functions are algebraic over $\F_p(x)$.
In \cite{Rowland--Yassawi:2015}, we characterise $p$-automatic sequences as those sequences that occur as columns of two-dimensional spacetime diagrams of linear cellular automata $\Phi : \F_p^\Z \to \F_p^\Z$, starting with an eventually periodic initial condition.

We investigate the nature of a spacetime diagram as a function of its initial condition, when the initial condition is $p$-automatic.
In the special case when the initial condition is eventually $0$ in both directions and the cellular automaton has right radius $0$, this question has been thoroughly studied in a series of articles by Allouche, von Haeseler, Lange, Petersen, Peitgen, and Skordev~\cite{AvHLPS:1997, AvHPS:1997, AvHPS:1996}.
Amongst other things, the authors show that an $\N \times \N$-configuration which is generated by a linear cellular automaton, whose right radius is $0$, and an eventually $0$ initial condition, is {\em $[p,p]$-automatic}.
In \cite{Pivato--Yassawi-2008}, Pivato and the second author have also studied the substitutional nature of spacetime diagrams of more general cellular automata with eventually periodic initial conditions.

In Sections~\ref{Automaticity and algebraicity of spacetime diagrams} and \ref{Automaticity of Z x Z diagrams} we extend these previous results by relaxing the constraints imposed on the initial conditions and the cellular automata.
We allow initial conditions to be bi-infinite $(-p)$-automatic sequences or, equivalently, concatenations of two $p$-automatic sequences.
Iterating $\Phi$ produces a $\Z\times \N$-configuration, and we show in Theorem~\ref{sheared STD is automatic}, Theorem~\ref{cone STD is automatic}, and Corollary~\ref{STD is automatic}, that such spacetime diagrams are automatic, with two possible definitions of automaticity: either by {\em shearing} a configuration supported on a cone or by considering $[-p,p]$-automaticity.
Our results are constructive, in that given an automaton that generates an automatic initial condition, we can compute an automaton that generates the spacetime diagram.
We perform such a computation in Example~\ref{p=3 example}, which we use as a running example throughout the article.
While the spacetime diagram has a substitutional nature, the alphabet size makes the computation of this substitution by hand infeasible, and indeed difficult even using software.

We can also extend a spacetime diagram backward in time to obtain a $\Z \times \Z$-configuration where each row is the image of the previous row under the action of the cellular automaton.
In Lemma~\ref{initial conditions for LCA} we show that the initial conditions that generate a $\Z \times \Z$-configuration are supported on a finite collection of lines.
In Theorem~\ref{2D-STDs_are_automatic}, we show that if the initial conditions are chosen to be $p$-automatic, then the resulting spacetime diagram is a concatenation of four $[p,p]$-automatic configurations.

Apart from the intrinsic interest of studying automaticity of spacetime diagrams, one motivation for our study is a search for closed nontrivial sets in $\F_p^\Z$ which are invariant under the action of both the left shift map $\sigma$ and a fixed linear cellular automaton $\Phi$.
Analogously, we also search for measures $\mu$ on one-dimensional subshifts $(X,\sigma)$ that are invariant under the action of both $\sigma$ and $\Phi$.

We give a brief background.
Furstenberg~\cite{Furstenberg-1967} showed that any closed subset of the unit interval $I$ which is invariant under both maps $x \mapsto 2x \bmod 1$ and $x \mapsto 3x \bmod 1$ must be either $I$ or finite.
This is an example of {\em topological rigidity}.
Furstenberg asked if there also exists a {\em measure rigidity}, i.e.\ if there exists a nontrivial measure $\mu$ on $I$ which is invariant under these same two maps.
By ``nontrivial'' we mean that $\mu$ is neither the Lebesgue measure nor finitely supported.
This question is known as the $({\times 2}, {\times 3})$ problem.

The $({\times 2}, {\times 3})$ problem has a symbolic interpretation, which is to find a measure on $\F_2^\N$ which is invariant under both $\sigma$, which corresponds to multiplication by $2$, and the map $u\mapsto u+\sigma (u)$, which corresponds to multiplication by $3$ and where $+$ represents addition with carry.
One can ask a similar question for $\sigma$ and the {\em Ledrappier} cellular automaton $u\mapsto u+\sigma (u)$, where $+$ represents coordinate-wise addition modulo $2$.
One way to produce such measures is to average iterates, under the cellular automaton, of a shift-invariant measure, and to take a limit measure.
Pivato and the second author~\cite{Pivato--Yassawi:2002} show that starting with a Markov measure, this procedure only yields the Haar measure $\lambda$. Host, Maass, and Martinez~\cite{Host--Maass--Martinez:2003} show that if a $(\sigma,\Phi)$-invariant measure has positive entropy for $\Phi$ and is ergodic for the shift or the $\Z^2$-action then $\mu=\lambda$. The problem of identifying which measures are $(\sigma, \Phi)$-invariant is an open problem; see for example Boyle's survey article~\cite[Section 14]{Boyle-2008} on open problems in symbolic dynamics or Pivato's article~\cite[Section 3]{Pivato-2012} on the ergodic theory of cellular automata.

In Sections~\ref{dynamics} and \ref{Invariant measures for linear cellular automata} we apply results of Sections~\ref{Automaticity and algebraicity of spacetime diagrams} and \ref{Automaticity of Z x Z diagrams} to find $(\sigma, \Phi)$-invariant sets and measures.
Spacetime diagrams generate subshifts $(X, \sigma_1, \sigma_2)$, where $\sigma_1$ and $\sigma_2$ are the left and down shifts, and these subshifts project to closed sets in $\F_p^\Z$ that are $(\sigma,\Phi)$-invariant. Similarly, we show in Proposition~\ref{not Lebesgue} that $(\sigma_1, \sigma_2)$-invariant measures on $X$ project to $(\sigma,\Phi)$-invariant measures supported on a subset of $\F_p^\Z$.
Einsiedler~\cite{Einsiedler-2004} constructs, for each $s$ in the interval $0 \leq s \leq 1$, a $(\sigma_1,\sigma_2)$-invariant set and a $(\sigma_1,\sigma_2)$-invariant measure whose entropy in any direction is $s$ times the maximal entropy in that direction. He builds invariant sets using {\em intersection sets} as described in Section~\ref{invariant sets} and asks if every $(\sigma_1,\sigma_2)$-invariant set is an intersection set. He also asks for the nature of the invariant measures. We show in Theorem~\ref{invariant} that each automatic spacetime diagram generates a $(\sigma,\Phi)$-invariant set of small (one-dimensional) word complexity. If we assume that the initial condition is not spatially periodic and the cellular automaton is not a shift, we show in Proposition~\ref{nonperiodic} that these sets are nontrivial.
The invariant sets we find are not obviously intersection sets.

The quest for nontrivial $(\sigma, \Phi)$-invariant measures appears to be more delicate.
Let $(X_U, \sigma_1, \sigma_2)$ be a subshift generated by a $[-p, -p]$-automatic configuration $U$.
Theorem~\ref{nature of measures} states that the measures supported on such subshifts are convex combinations of measures supported on codings of substitution shifts.
We show in Theorem~\ref{complexity} that $U$ has at most polynomial complexity.
Therefore the $(\sigma, \Phi$)-invariant measures guaranteed by Proposition~\ref{not Lebesgue} are not the Haar measure.
However they may be finitely supported:
the shift $X_U$ generated by a nonperiodic spacetime diagram $U$ can contain periodic points on which a shift-invariant measure is supported.
In Theorems~\ref{coincidence} and \ref{unique letter at nonzero coefficients} we identify cellular automata and nonperiodic initial conditions that yield two-dimensional shifts containing constant configurations.
 
We show in Corollary~\ref{not point mass} that spacetime diagrams that do not contain large one-dimensional repetitions support nontrivial $(\sigma,\Phi)$-invariant measures, and in Theorem~\ref{decidability} we show that this condition is decidable.
In Theorem~\ref{power-free} we show that for the Ledrappier cellular automaton there exists a family of substitutions all of whose spacetime diagrams, including our running example, support nontrivial measures.
In Theorem~\ref{power-free general}, we generalise this last proof, showing that for any linear cellular automaton $\Phi$, nontrivial $(\sigma,\Phi)$-invariant measures exist for all but finitely many primes $p$.
Given $\Phi:\F_p^\Z\rightarrow \F_p^\Z$, to what extent it is the case that a random $p$-automatic initial condition generates a nontrivial $(\sigma, \Phi)$-invariant measure?
This remains open.

We are indebted to Allouche and Shallit's classical text~\cite{Allouche--Shallit:2003}, referring to proofs therein on many occasions, which carry through in our extended setting.
In Section~\ref{Preliminaries}, we provide a brief background on linear cellular automata, larger rings of generating functions in two variables, and $p$- and $(-p)$-automaticity. In Section~\ref{Automaticity and algebraicity of spacetime diagrams} we prove that $\Z\times \N$-indexed spacetime diagrams are automatic if we start with automatic initial conditions.
In Section~\ref{Automaticity of Z x Z diagrams} we extend these results to include $\Z \times \Z$-indexed spacetime diagrams.
In Section~\ref{dynamics} we show that automatic spacetime diagrams for $\Phi$ yield nontrivial $(\sigma, \Phi)$-invariant sets and discuss their relation to the intersection sets defined by Kitchens and Schmidt~\cite{Kitchens--Schmidt-1992}.
Finally in Section~\ref{Invariant measures for linear cellular automata}, we study $(\sigma, \Phi)$-invariant measures supported on automatic spacetime diagrams.

\section{Preliminaries}\label{Preliminaries}

\subsection{Linear cellular automata}\label{Linear cellular automata}

Let $\mathcal A$ be a finite alphabet. An element in $\mathcal A^\Z$ is called a {\em configuration} and is written $u=(u_m)_{m \in \Z}$.
 The
 (left) {\em shift map} $\sigma:
 \mathcal A^\Z \rightarrow
 \mathcal A^\Z$ is
 the map defined as $(\sigma(u))_m\colonequal u_{m+1}$. Let $\mathcal A$
 be endowed with the
 discrete topology and $\mathcal A^\Z$ with the product
 topology;
 then $\mathcal A^\Z$ is a metrisable Cantor
 space.
 A (one-dimensional) {\em cellular
automaton} is a continuous, $\sigma$-commuting map 
$\Phi: \mathcal A^\Z\rightarrow \mathcal A^\Z$. The Curtis--Hedlund--Lyndon theorem tells us that a cellular automaton is determined by a local rule $f$: there exist integers $\ell$ and $r$ with $-\ell \leq r$ and $f : \mathcal A^{r + \ell + 1} \to \mathcal A$ such that, for all $m\in \Z$, $(\Phi(u))_m=f(u_{m - \ell}, \dots, u_{m + r})$.
 Let $\N$ denote the set of non-negative integers.

\begin{definition}\label{spacetime}
Let $\Phi:\mathcal{A}^\Z\rightarrow \mathcal{A}^\Z$ be a cellular automaton and let $u\in \mathcal A^\Z$.
If $U\in \mathcal A^{\Z \times\N }$ satisfies $U|_{\Z \times \{0\}} = u$ and $\Phi(U|_{\Z\times \{n\}}) = U|_{\Z\times \{n + 1\}}$ for each $n\in \N$, we call $U=\ST_{\Phi}(u)$ the {\em spacetime diagram} generated by $\Phi$ with initial condition $u$.
\end{definition}

For the cellular automata in this article, $\mathcal A=\F_p$. The configuration space $\F_{p}^\Z$ forms a group under componentwise addition; it is also an $\F_{p}$-vector space.

\begin{definition}\label{linear}
A cellular automaton $\Phi:\F_{p}^\Z\rightarrow \F_{p}^\Z$ is {\em linear} if $\Phi$ is an $\F_{p}$-linear map, i.e.\ 
 $(\Phi(u))_m = \alpha_{-\ell} u_{m-\ell} + \dots +\alpha_0 u_m + \dots + \alpha_r u_{m+r}$ for some nonnegative integers $\ell$ and $r$, called the {\em left} and {\em right radius} of $\Phi$. The {\em generating polynomial}~\cite{AvHPS:1997} of $\Phi$, denoted $\phi$, is the Laurent polynomial
\[ \phi(x) \colonequal \alpha_{-\ell}x^{\ell} + \dots +\alpha_0 + \dots +\alpha_r x^{-r}. \]
\end{definition}

We remark that our use of $\phi$ for the generating polynomial differs from usage in the literature of $\phi$ as $\Phi$'s local rule, which is the map 
$ (u_{m-\ell}, \dots, u_{m+r}) \mapsto \alpha_{-\ell} u_{m-\ell} + \dots + \alpha_r u_{m+r}$.

The generating polynomial has the property that $\phi(x) \sum_{m \in \Z} u_m x^m = \sum_{m \in \Z} (\Phi(u))_m x^m$.
We will identify sequences $(u_m)_{m \in \Z}$ with their generating function $f(x)=\sum_{m \in \Z}u_mx^m$.
Recall that $\F_{p}[x]$ and $\F_{p}\llbracket x\rrbracket$ are the rings of polynomials and power series in the variable $x$ with coefficients in $\F_{p}$ respectively. Let
$\F_{p}(x)$ and $\F_{p}((x))$ be their respective fields of fractions: $\F_{p}(x)$ is the field of rational functions and 
$\F_{p}((x))$ is that of formal Laurent series; elements of $\F_{p}((x))$ are expressions of the form $f(x)=\sum_{ m\geq m_{0}} u_{ m}x^{ m} $, where $u_{m}\in \F_{p}$ and $m_{0}\in \Z$.

\subsection{Cones}

A \emph{cone} is a subset of $\Z \times \Z$ of the form $\{\mathbf v_0 + s \mathbf v_1 + t \mathbf v_2 : s \geq 0, t \geq 0\}$ for some $\mathbf v_0, \mathbf v_1, \mathbf v_2 \in \Z \times \Z$ such that $\mathbf v_1$ and $\mathbf v_2$ are linearly independent.
The \emph{cone generated by $\mathbf v_1$ and $\mathbf v_2$} is the cone $\{s \mathbf v_1 + t \mathbf v_2 : s \geq 0, t \geq 0\}$.

If a cellular automaton is begun from an initial condition $u$ satisfying
$u_m = 0$ for all $m \leq -1$, then the spacetime diagram $\ST_\Phi(u)$ is supported on the cone generated by $(1, 0)$ and $(-r, 1)$.
For example, see Figure~\ref{0-TM}.
If $r \geq 1$ then this cone contains points with negative entries, but we would still like to represent $\ST_\Phi(u)$ as a formal power series in some ring.
We follow the geometric exposition given by Aparicio Monforte and Kauers~\cite{Aparicio--Kauers}.

\begin{figure}
	\center{\includegraphics[scale=.75]{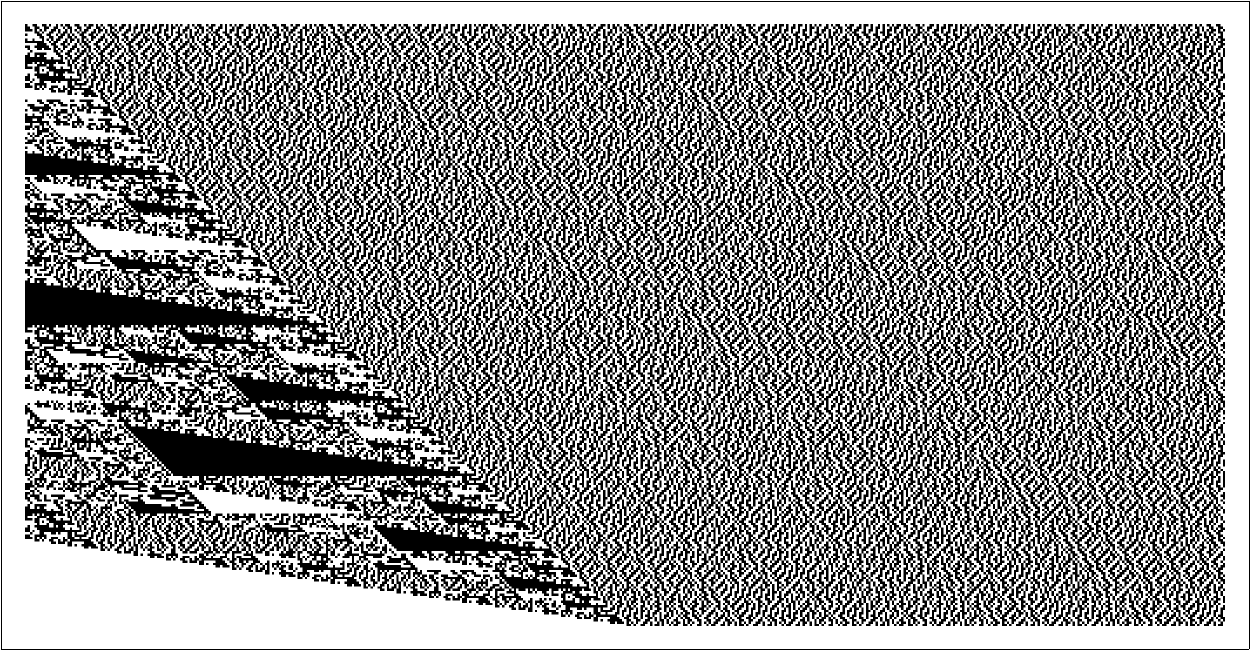}}
	\caption{Spacetime diagram $\ST_{\Phi}(u)$ for a cellular automaton with generating polynomial $\phi(x) = x^{-1} + x^{-3} + x^{-7} \in \F_2[x]$.
	The dimensions are $511 \times 256$, and time goes up the page.
	The right half $(u_m)_{m \geq 0}$ of the initial condition is the \emph{Thue--Morse sequence} (the fixed point beginning with $0$ of $0 \to 01, 1 \to 10$), and $u_m = 0$ for all $m \leq -1$.
	By Theorem~\ref{cone STD is algebraic}, this spacetime diagram, restricted to the cone generated by the vectors $(1,0)$ and $(-7,1)$, has an algebraic generating function.}
	\label{0-TM}
\end{figure}

By definition, a cone $\mathcal C$ is \emph{line-free}, that is, for every ${\mathbf n}\in \mathcal C \setminus \{ (0,0) \}$, we have $-{\mathbf n}\not\in \mathcal C$.
This places us within the scope of \cite{Aparicio--Kauers}.

For each cone $\mathcal C$, let $\F_{p, \mathcal C}\llbracket x,y\rrbracket$ be the set of all formal power series in $x$ and $y$, with coefficients in $\F_p$, whose support is in $\mathcal C$. Then (ordinary) multiplication of two elements in $\F_{p, \mathcal C}\llbracket x,y\rrbracket$ is well defined, and the product belongs to $\F_{p, \mathcal C}\llbracket x,y\rrbracket$; in fact
 $\F_{p, \mathcal C}\llbracket x,y\rrbracket$ is an integral domain~\cite[Theorems~10 and 11]{Aparicio--Kauers}.

Let $\preceq$ be the reverse lexicographic order on $\Z \times \Z$, i.e.\ $(m_1,n_1)\preceq(m_2,n_2)$ if $n_1< n_2$ or if $n_1= n_2$ and $m_1\leq m_2$. 
A cone $\mathcal C$ is {\em compatible} with $\preceq$ if $(0,0)\preceq (m, n)$ for all $(m, n) \in \mathcal C$.
Every cone contained in the set $\{(m,n):n>0 \}\cup \{(m,0): m\geq 0 \}$ is compatible with $\preceq$.
Let
\[
	\F_{p, \preceq}\llbracket x,y\rrbracket \colonequal\bigcup_\text{$\mathcal C$ compatible with $\preceq$} \F_{p, \mathcal C}\llbracket x,y\rrbracket .
\]
Then $\F_{p, \preceq}\llbracket x,y\rrbracket$ is a ring contained in the field $\bigcup_{(m,n) \in \Z \times \Z} x^{m}y^{n} \F_{p, \preceq}\llbracket x,y\rrbracket$
~\cite[Theorem~15]{Aparicio--Kauers}.
This field also contains the field $\F_{p}(x,y)$ of rational functions.
Researchers working with automatic sequences have previously worked with $\F_{p, \preceq}\llbracket x,y\rrbracket$~\cite{Adamczewski--Bell, Allouche--Deshouillers--Kamae--Koyanagi}.

\subsection{Automatic initial conditions}\label{automaticity}

Next we define automatic sequences, which we will use as initial conditions for spacetime diagrams.

\begin{definition}\label{DFAO}
A {\em deterministic finite automaton with output} (DFAO) is a $6$-tuple $(\mathcal S, \Sigma, \delta, s_0, \mathcal A, \omega)$, where $\mathcal S$ is a finite set (of states), $s_0 \in \mathcal S$ (the initial state), $\Sigma$ is a finite alphabet (the input alphabet), $\mathcal A$ is a finite alphabet (the output alphabet),
$\omega:\mathcal S\rightarrow \mathcal A$ (the output function), and $\delta:\mathcal S\times \Sigma \rightarrow \mathcal S$ (the transition function).
\end{definition}

In this article, our output alphabet is $\mathcal A= \F_p$.

The function $\delta$ extends in a natural way to the domain $\mathcal S \times \Sigma^*$, where $\Sigma^*$ is the set of all finite words on the alphabet $\Sigma$.
Namely, define $\delta(s, m_\ell \cdots m_1 m_0) \colonequal \delta(\delta(s, m_0), m_\ell \cdots m_1)$ recursively. If $\Sigma=\{0, \dots, p-1 \}$,
this allows us to feed the standard base-$p$ representation $m_\ell \cdots m_1 m_0$ of an integer $m$ into an automaton, beginning with the least significant digit.
(Recall that the standard base-$p$ representation of $0$ is the empty word.)
All automata in this article process integers by reading their least significant digit first.

A sequence $(u_m)_{m \geq 0}$ of elements in $\F_p$ is {\em $p$-automatic} if there is a DFAO $(\mathcal S, \{0, \dots, p - 1\}, \delta, s_0, \F_p, \omega)$ such that $u_m = \omega(\delta (s_0, m_\ell \cdots m_1 m_0))$ for all $m \geq 0$, where $m_\ell \cdots m_1 m_0$ is the standard base-$p$ representation of $m$.

Similarly, we say that a sequence $(U_{m,n})_{(m, n) \in \N \times \N}$ is {\em $[p, p]$-automatic} if there is a DFAO $(\mathcal S, \{0, \dots, p - 1\}^2, \delta, s_0, \F_p, \omega)$ such that
\[
	U_{m, n} = \omega(\delta(s_0, (m_\ell, n_\ell) \cdots (m_1, n_1) (m_0, n_0)))
\]
for all $(m, n) \in \N \times \N$, where $m_\ell \cdots m_1 m_0$ is a base-$p$ representation of $m$ and $n_\ell \cdots n_1 n_0$ is a base-$p$ representation of $n$. Here, if $m$ and $n$ have standard base-$p$ representations of different lengths, then we pad, on the left, the shorter representation with leading zeros.

As defined, $p$-automatic sequences are one-sided.
To specify a bi-infinite sequence, we use base $-p$.
Every integer has a unique representation in base $-p$ with the digit set $\{0, 1, \dots, p - 1\}$~\cite[Theorem~3.7.2]{Allouche--Shallit:2003}.
For example, $10$ is written in base $-2$ as
\begin{align*}
	10
	&= 16 - 8 + 4 - 2 + 0 \\
	&= 1 \cdot (-2)^4 + 1 \cdot (-2)^3 + 1 \cdot (-2)^2 + 1 \cdot (-2)^1 + 0 \cdot (-2)^0,
\intertext{so its base-$(-2)$ representation is $11110$, and }
	-9
	&= - 8 + 0 - 2 + 1 \\
	&= 1 \cdot (-2)^3 + 0 \cdot (-2)^2 + 1 \cdot (-2)^1 + 1 \cdot (-2)^0,
\end{align*}
so the base-$(-2)$ representation of $-9$ is $1011$. 
We say that a sequence $(u_m)_{m \in \Z}$ is \emph{$(-p)$-automatic} if there is a DFAO $(\mathcal S, \{0, \dots, p - 1\}, \delta, s_0, \F_p, \omega)$ such that $u_m = \omega(\delta (s_0, m_\ell \cdots m_1 m_0))$ for all $m \in \Z$, where $m_\ell \cdots m_1 m_0$ is the standard base-$(-p)$ representation of $m$.
A sequence $(u_m)_{m \in \Z}$ is $(-p)$-automatic if and only if the sequences $(u_m)_{m \geq 0}$ and $(u_{-m})_{m \geq 0}$ are $p$-automatic~\cite[Theorem~5.3.2]{Allouche--Shallit:2003}. 

In this article, we use $(-p)$-automatic sequences in $\F_p^\Z$ as initial conditions for cellular automata.
For example, the spacetime diagram in Figure~\ref{bi-infinite STD} is of a linear cellular automaton begun from a $(-2)$-automatic initial condition.

\begin{figure}
	\center{\includegraphics[scale=.75]{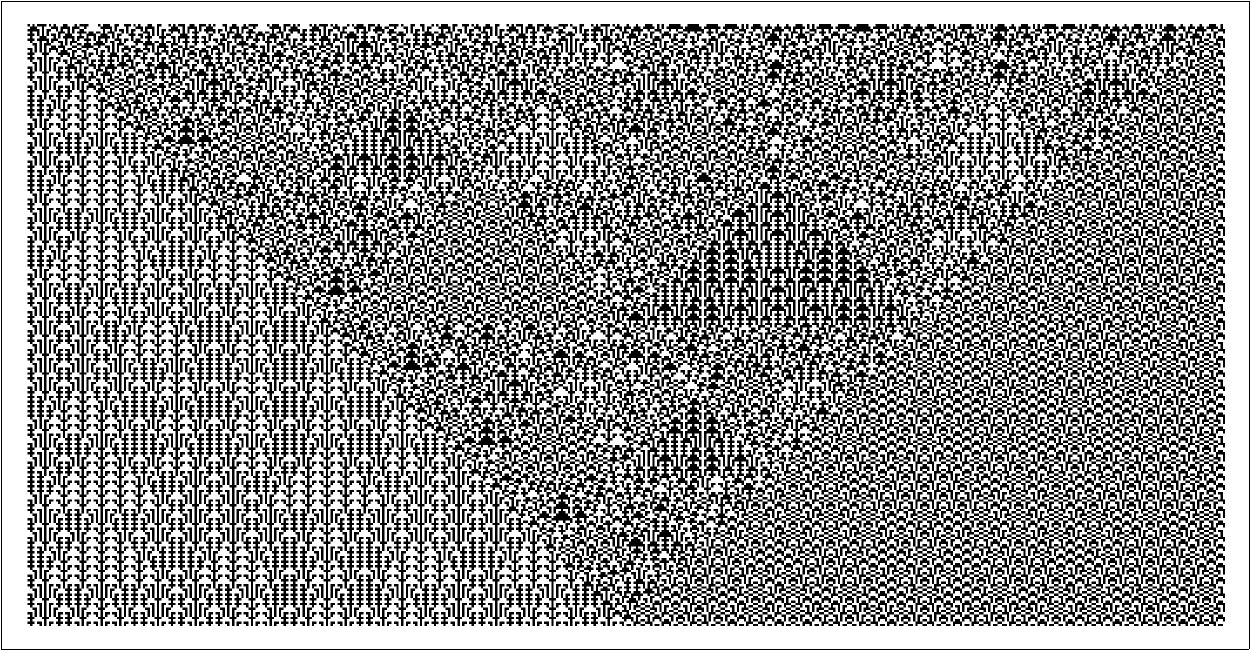}}
	\caption{Spacetime diagram for a cellular automaton with generating polynomial $\phi(x) = x + 1 + x^{-1} \in \F_2[x]$.
	The dimensions are $511 \times 256$.
	The right half $(u_m)_{m \geq 0}$ of the initial condition is the Thue--Morse sequence, and the left half $(u_{-m})_{m \geq 0}$ is the \emph{Toeplitz sequence} (the fixed point of $0 \to 01, 1 \to 00$).
	By Corollary~\ref{STD is automatic}, this spacetime diagram is $[-2, 2]$-automatic.}
	\label{bi-infinite STD}
\end{figure}

\section{Algebraicity and automaticity of spacetime diagrams}\label{Automaticity and algebraicity of spacetime diagrams}

In this section we show that a spacetime diagram obtained by evolving a linear cellular automaton from a $(-p)$-automatic initial condition $u$ is automatic in several senses. There is a natural notion of the $[p,p]$-kernel of a two-dimensional configuration extending the usual definition. First, if we consider bi-infinite initial conditions that satisfy $u_m = 0$ for all $m \leq -1$, we show in Theorem~\ref{cone STD is algebraic} that the generating functions of these cone-indexed configurations are algebraic and that they have finite $[p,p]$-kernels. Then in Section~\ref{automaticity by shearing} we show that the shear of an algebraic cone-indexed configuration is $[p,p]$-automatic.
Finally, in Section~\ref{Automaticity in base [-p, p]} we study the $[-p,p]$-automaticity of spacetime diagrams, where the coordinates $(m, n)$ are processed by reading $m$ in base $-p$. 
Specifically, we prove in Corollary~\ref{STD is automatic} that a spacetime diagram obtained by evolving a linear cellular automaton from a general $(-p)$-automatic initial condition is $[-p, p]$-automatic.

\subsection{Algebraicity and finiteness of the $[p, p]$-kernel}\label{algebraicity finite kernel}

Define the \emph{$[p, p]$-kernel} of $U = (U_{m, n})_{(m, n) \in \Z \times \N}$ to be the set
\[
	\left\{\left(U_{p^e m + i, p^e n + j}\right)_{(m, n) \in \Z \times \N} : e \geq 0, \, 0 \leq i \leq p^e - 1, \, 0 \leq j \leq p^e - 1\right\}.
\]
The $[p, p]$-kernel of a cone-indexed sequence $(U_{m, n})_{(m, n) \in \mathcal C}$ is defined by extending $U_{m, n} = 0$ for all $(m, n) \in \left(\Z\times \N \right)\setminus \mathcal C$.

Given $i, j \in \{0, 1, \dots, p - 1\}$, the {\em Cartier operator} 
 $\Lambda_{i,j}: \F_{p, \preceq}\llbracket x,y\rrbracket\rightarrow \F_{p, \preceq}\llbracket x,y\rrbracket$ is defined as
\[
	\Lambda_{i,j} \left(\sum_{(m,n)\in \mathcal C} U_{m,n} x^m y^n \right) \colonequal \sum_{(m,n): (mp+i,np+j)\in \mathcal C} U_{mp+i,np+j} x^{m}y^{n}.
\]

Let $\mathcal C$ be a cone. The {\em $[p, p]$-kernel} of a power series $F(x,y)=\sum_{(m,n)\in\mathcal C} U_{m,n}x^m y^n \in\F_{p, \mathcal C}\llbracket x,y\rrbracket$ is the set
\[
	\left\{ \Lambda_{i_\ell, j_\ell}\cdots \Lambda_{i_0,j_0} \left(F(x,y) \right) : \text{$\ell \geq 0$ and 
$0\leq i_k, j_k \leq p-1$ for $0 \leq k \leq \ell$} \right\}.
\]

If the sequence $(U_{m,n})_{(m,n)\in \mathcal C}$ is indexed by a cone, then its $[p, p]$-kernel is the set of all sequences
	$ (V_{m,n})_{(m,n)\in \mathcal C^*}$ where $ \sum_{(m,n)\in\mathcal C^*} V_{m,n}x^m y^n$ belongs to the $[p, p]$-kernel of
	$\sum_{(m,n)\in\mathcal C} U_{m,n}x^m y^n$. We show in Lemma~\ref{new cones are compatible} that such $\mathcal C^*$ are compatible with $\preceq$.

We can define analogously the one-dimensional Cartier operator $\Lambda_i:\F_p\llbracket x \rrbracket \to \F_p\llbracket x \rrbracket$ and also the $p$-kernel of a one-dimensional power series.
Eilenberg's theorem~\cite[Theorem~6.6.2]{Allouche--Shallit:2003} states that a sequence $(u_m)_{m\geq 0}$ is $p$-automatic precisely when its $p$-kernel is finite; the same is true for a $[p, p]$-automatic sequence $(U_{m, n})_{(m, n) \in \N \times \N}$~\cite[Theorem~14.4.1]{Allouche--Shallit:2003}. For a recent extension of Eilenberg's theorem to automatic sequences based on some alternative numeration systems, see \cite{Massuir-Peltomaki-Rigo}.

A power series $f(x)\in \F_p\llbracket x \rrbracket$ is {\em algebraic over $\F_{p}(x)$}
if there exists a nonzero polynomial $P(x, z) \in \F_{p}[x,z]$ such that $P(x, f(x)) = 0$.
Similarly, the cone-indexed series $f(x, y)\in \F_{p, \preceq}\llbracket x,y\rrbracket$ is {\em algebraic over $\F_{p}(x, y)$}
if there exists a nonzero polynomial $P(x, y, z) \in \F_{p}[x,y,z]$ such that $P(x, y, f(x)) = 0$.
We recall Christol's theorem for one-dimensional power series~\cite{Christol-1979, CKMR:1980}, generalised to two-dimensional power series by Salon~\cite{Salon-1987}. 

\begin{theorem}\label{Christol}
\leavevmode
\begin{enumerate}
\item
A sequence $(u_m)_{m \geq 0}$ of elements in $\F_p$ is $p$-automatic if and only if $\sum_{m \geq 0} u_m x^m$ is algebraic over $\F_{p}(x)$.
\item
A sequence of elements $(U_{m,n})_{(m, n) \in \N \times \N}$ in $\F_p$ is $[p, p]$-automatic if and only if $\sum_{(m, n) \in \N \times \N} U_{m,n} x^m y^n$ is algebraic over $\F_{p}(x,y)$.
\end{enumerate}
\end{theorem}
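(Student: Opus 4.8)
The statement is classical and is only recalled here: part~(1) is Christol's theorem~\cite{Christol-1979,CKMR:1980} and part~(2) is Salon's two-variable extension~\cite{Salon-1987}. The plan is therefore to reproduce the standard proof, phrased through the Cartier operators of Section~\ref{algebraicity finite kernel} so that it dovetails with the rest of the paper. The first step is to reduce both equivalences to statements about kernels via Eilenberg's theorem, already recalled above: a one-sided sequence is $p$-automatic precisely when its $p$-kernel is finite, and a two-dimensional sequence is $[p,p]$-automatic precisely when its $[p,p]$-kernel is finite~\cite[Theorems~6.6.2 and 14.4.1]{Allouche--Shallit:2003}. So it suffices to prove that a power series over $\F_p$ has a finite kernel if and only if it is algebraic over the appropriate field of rational functions. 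I would carry this out in the one-variable case; the two-variable case is identical after replacing each $\Lambda_i$ by $\Lambda_{i,j}$ and the identity $h=\sum_{i=0}^{p-1}x^i(\Lambda_i h)^p$ by $h=\sum_{0\le i,j\le p-1}x^iy^j(\Lambda_{i,j}h)^p$, the only extra bookkeeping being that the index sets produced when one passes to cones remain compatible with $\preceq$, which is exactly Lemma~\ref{new cones are compatible}.

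For the implication ``finite kernel $\Rightarrow$ algebraic'' (the easier half), let $\{f=f_1,\dots,f_N\}$ be the $p$-kernel of $f$, and assume $f\neq 0$. For $e\geq 0$ let $V_e$ be the $\F_p(x)$-linear span of $\{f_1^{p^e},\dots,f_N^{p^e}\}$. The identity $f_j=\sum_{i=0}^{p-1}x^i(\Lambda_i f_j)^p$ together with $\Lambda_i f_j\in\{f_1,\dots,f_N\}$ gives $f_j\in V_1$, and raising this to the $p^e$-th power gives $f_j^{p^e}\in V_{e+1}$; hence $V_0\subseteq V_1\subseteq V_2\subseteq\cdots$, all of dimension at most $N$, so $V_e=V_E$ for all $e\geq E$ and some $E$. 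This common space $V_E$ is a finite-dimensional $\F_p(x)$-vector space closed under the Frobenius $g\mapsto g^p$. Hence $g\colonequal f^{p^E}\in V_E\setminus\{0\}$, and for $D>\dim_{\F_p(x)}V_E$ the elements $g,g^p,\dots,g^{p^D}$ of $V_E$ are $\F_p(x)$-linearly dependent; clearing denominators produces a nonzero additive polynomial $\sum_e C_e(x)z^{p^e}\in\F_p(x)[z]$ vanishing at $g$. Thus $f^{p^E}$ is algebraic over $\F_p(x)$, and so is $f$, being purely inseparable over $\F_p(x)(f^{p^E})$.

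For the reverse implication ``algebraic $\Rightarrow$ finite kernel'' — the hard half, and where I expect the real work to be — I would use the two basic properties of the Cartier operators, namely $\Lambda_i(g(x)^p h(x))=g(x)\Lambda_i(h(x))$ (valid for all $g,h\in\F_p\llbracket x\rrbracket$ since the coefficients lie in the prime field) and the fact that $\Lambda_i$ shifts exponents by at most $1$, together with the base-$p$ expansion $A(x)=\sum_{a=0}^{p-1}x^a G_a(x)^p$ of an arbitrary polynomial. These express $\Lambda_i(A(x)h(x))$ explicitly as an $\F_p[x]$-linear combination of $\Lambda_0 h,\dots,\Lambda_{p-1}h$. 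Applying them to a relation $\sum_{i=0}^e A_i(x)f^{p^i}=0$ with $A_i\in\F_p[x]$, not all zero, which can after a routine normalisation be taken with $A_0\neq 0$ (such a relation exists because $f,f^p,f^{p^2},\dots$ span a finite-dimensional $\F_p(x)$-vector space), one confines, by solving the resulting linear system over $\F_p(x)$, the whole $p$-kernel of $f$ to a single finitely generated $\F_p[x]$-submodule of $\tfrac{1}{D(x)}(\F_p[x]+\F_p[x]f+\cdots+\F_p[x]f^{d-1})$, where $d$ is the degree of $f$ over $\F_p(x)$ and $D$ is a fixed denominator. The genuinely delicate point — and the main obstacle — is then to pass from ``finitely generated $\Lambda_i$-stable module'' to ``finite \emph{set}'': this requires a bound, uniform along the entire $\Lambda$-orbit, on the $x$-adic valuation and on the degree of the numerators of the kernel elements, which follows from the fact that the operators $\Lambda_i$ roughly divide these orders by $p$. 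This bookkeeping is precisely what is carried out in Allouche and Shallit~\cite{Allouche--Shallit:2003}; with it the kernel is finite, so by Eilenberg's theorem $f$ is $p$-automatic. The same argument with $\Lambda_{i,j}$ in place of $\Lambda_i$ proves part~(2), and alternatively Salon's original reduction of the two-variable case to the one-variable case via a diagonal may be used.
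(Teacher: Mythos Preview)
The paper does not give its own proof of this theorem; immediately after the statement it simply refers to \cite[Theorems~12.2.5 and 14.4.1]{Allouche--Shallit:2003}, noting that algebraicity, automaticity, and finiteness of the $p$- (or $[p,p]$-) kernel are all equivalent. Your sketch is a correct outline of precisely that classical argument via Cartier operators and Eilenberg's characterisation, so it matches what the paper invokes rather than diverging from it.
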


We refer to \cite[Theorems~12.2.5 and 14.4.1]{Allouche--Shallit:2003} for the proof of Theorem~\ref{Christol}, where it is shown that the algebraicity of a power series over a finite field is equivalent to the automaticity of its sequence of coefficients, which is equivalent to the finiteness of its $p$- or $[p, p]$-kernel. 
In related work, Allouche, Deshouillers, Kamae, and Koyanagi~\cite[Theorem~6]{Allouche--Deshouillers--Kamae--Koyanagi} show that the coefficients of an algebraic power series in $F_p((x))\llbracket y \rrbracket$ is $p$-automatic.

In the next lemma we show that the image of $\F_{p, \preceq}\llbracket x,y\rrbracket$ under $\Lambda_{i,j}$ is indeed contained in $\F_{p, \preceq}\llbracket x,y\rrbracket$.
We show more: although elements of the $[p, p]$-kernel of $F(x, y) \in \F_{p, \mathcal C}\llbracket x,y\rrbracket$ do not necessarily belong to $\F_{p, \mathcal C}\llbracket x,y\rrbracket$, their indexing sets are one of a finite set of translates of $\mathcal C$. 

\begin{lemma}\label{new cones are compatible}
Let $r \geq 0$ be an integer, let $\mathcal C$ be the cone generated by $(1, 0)$ and $(-r, 1)$, and let $F(x,y) 
\in \F_{p, \mathcal C}\llbracket x,y\rrbracket$.
Then every element of the $[p, p]$-kernel of $F(x,y)$ is supported on $\mathcal C - (t, 0)$ for some $0 \leq t \leq r$.
\end{lemma}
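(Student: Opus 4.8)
The plan is to pin down a small family of cones that contains $\mathcal C$ and is stable, on the level of supports, under every Cartier operator $\Lambda_{i,j}$; the claim is that the translates $\mathcal C - (t,0)$ with $0 \leq t \leq r$ form such a family. As a first step I would observe that the cone $\mathcal C$ generated by $(1,0)$ and $(-r,1)$ is exactly $\{(m,n) \in \Z \times \Z : n \geq 0 \text{ and } m + rn \geq 0\}$, so that for an integer $t \geq 0$ one has $\mathcal C - (t,0) = \{(m,n) \in \Z \times \Z : n \geq 0 \text{ and } m + rn \geq -t\}$. Thus membership in $\mathcal C - (t,0)$ is governed by the single linear inequality $m + rn \geq -t$ together with $n \geq 0$.

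Next I would compute the support of $\Lambda_{i,j}(G)$ when $G$ is supported on $\mathcal C - (t,0)$ and $0 \leq i, j \leq p-1$. By definition the coefficient of $x^m y^n$ in $\Lambda_{i,j}(G)$ is the coefficient of $x^{pm+i} y^{pn+j}$ in $G$, so it can be nonzero only if $(pm+i, pn+j) \in \mathcal C - (t,0)$. The condition $pn + j \geq 0$ is, since $0 \leq j \leq p-1$, equivalent to $n \geq 0$; and the condition $(pm+i) + r(pn+j) \geq -t$ rearranges to $p(m+rn) \geq -(t + i + rj)$, hence to $m + rn \geq -\lfloor (t + i + rj)/p \rfloor$ because $m+rn$ is an integer. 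Therefore $\Lambda_{i,j}(G)$ is supported on $\mathcal C - (t',0)$ with $t' \colonequal \lfloor (t + i + rj)/p \rfloor$.

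The heart of the argument is then the inequality $0 \leq t' \leq r$ whenever $0 \leq t \leq r$. Nonnegativity of $t'$ is immediate. For the upper bound, when $t \leq r$ and $i, j \leq p-1$ we get $t + i + rj \leq r + (p-1) + r(p-1) = p(r+1) - 1 < p(r+1)$, and so $t' = \lfloor (t+i+rj)/p \rfloor \leq r$. Hence the family $\{\mathcal C - (t,0) : 0 \leq t \leq r\}$ is closed under all Cartier operators in the required sense. (One small point to flag: a translate $\mathcal C - (t,0)$ with $t \geq 1$ is not itself compatible with $\preceq$, so strictly speaking I would first note that each $\Lambda_{i,j}$ extends verbatim, by the same coefficient formula, to formal series with support in such a translate; the computation above shows this extended operator again outputs a series with support in a member of the family, so the iterated Cartier operators defining the $[p,p]$-kernel are meaningful, and lie in the fraction field of $\F_{p,\preceq}\llbracket x,y\rrbracket$.)

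Finally, since $F \in \F_{p,\mathcal C}\llbracket x,y \rrbracket$ is supported on $\mathcal C = \mathcal C - (0,0)$, an induction on the number of Cartier operators applied — with $F$ itself as the base case — shows that every element $\Lambda_{i_\ell, j_\ell} \cdots \Lambda_{i_0, j_0}(F)$ of the $[p,p]$-kernel of $F$ is supported on $\mathcal C - (t,0)$ for some $0 \leq t \leq r$. I do not expect a substantive obstacle: the only thing requiring genuine care is the bookkeeping in the previous paragraph (extending the Cartier operators off $\F_{p,\preceq}\llbracket x,y \rrbracket$ and tracking the offset $t$ under iteration), and the one nontrivial inequality $t + i + rj \leq p(r+1) - 1$ is elementary.
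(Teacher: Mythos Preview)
Your proposal is correct and follows essentially the same approach as the paper: both arguments show that the family $\{\mathcal C - (t,0) : 0 \leq t \leq r\}$ is closed under each Cartier operator by computing that $\Lambda_{i,j}$ sends $\mathcal C - (t,0)$ into $\mathcal C - (t',0)$ with $t' = \lfloor (t+i+rj)/p \rfloor$, and both use the same elementary bound $t+i+rj \leq r + (p-1) + r(p-1) = p(r+1)-1$ to conclude $t' \leq r$. Your remark that the translates $\mathcal C - (t,0)$ for $t \geq 1$ are not themselves compatible with $\preceq$, and that the Cartier operators must therefore be interpreted via the same coefficient formula on series supported on such translates, is a careful point that the paper leaves implicit.
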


\begin{proof}
Let $0 \leq i \leq p - 1$, and $0 \leq j \leq p - 1$.
We abuse notation and define
$
	\Lambda_{i,j}(\mathcal C) \colonequal
	\left\{ \left(\tfrac{m - i}{p}, \tfrac{n - j}{p}\right) : (m,n) \in \mathcal C, \, m \equiv i \mod p, \, n \equiv j \mod p \right\}
$.
Let $0 \leq s \leq r$.
Then we claim that $\Lambda_{i, j}(\mathcal C - (s, 0)) = \mathcal C - (t, 0)$ for some $0 \leq t \leq r$.
The statement of the lemma follows from the claim.
Let $(m, n) \in \Z \times \Z$ be a point satisfying $n \geq 0$, $-m - r n \leq s$, $m \equiv i \mod p$, and $n \equiv j \mod p$.
Then $\Lambda_{i, j}$ maps $(m, n)$ to $\left(\frac{m - i}{p}, \frac{n - j}{p}\right)$, which satisfies $\frac{n - j}{p} \geq 0$ and
\[
	\textstyle{-\frac{m - i}{p} - r \cdot \frac{n - j}{p} \leq \frac{i + s + r j}{p} \leq \frac{(p - 1) + r + r (p - 1)}{p} = r + 1 - \frac{1}{p}}.
\]
Since $-\frac{m - i}{p} - r \cdot \frac{n - j}{p}$ is an integer, this implies $-\frac{m - i}{p} - r \cdot \frac{n - j}{p} \leq t \colonequal \left\lfloor\frac{i + s + r j}{p}\right\rfloor$ and $t \leq r$. 
\end{proof}

\begin{example}
If $p=2$ and $\mathcal C$ is generated by $(1,0)$ and $(-3,1)$, then $\Lambda_{0,0}$ and $\Lambda_{1,0}$ map $\mathcal C$ to itself.
The other Cartier operators map $\Lambda_{0,1}(\mathcal C) = \mathcal C - (1,0)$ and $\Lambda_{1,1}(\mathcal C) = \mathcal C - (2,0)$.
The cone $\mathcal C - (3,0)$ arises from $\Lambda_{1,1} \Lambda_{1,1}(\mathcal C) = \mathcal C - (3,0)$.
\end{example}

We now state Christol's theorem for $ \F_{p, \preceq}\llbracket x,y\rrbracket$. The case $r = 0$ is Salon's theorem (Part~(2) of Theorem~\ref{Christol}). We omit the proof, since it is a straightforward generalisation of the proofs in \cite[Theorems~12.2.5 and 14.4.1]{Allouche--Shallit:2003}.

\begin{theorem}\label{algebraic finite kernel}
Let $F(x,y) \in \F_{p, \preceq}\llbracket x,y\rrbracket$.
Then $F(x,y)$ is algebraic over $\F_{p}(x,y)$ if and only if $F(x,y)$ has a finite $[p, p]$-kernel.
\end{theorem}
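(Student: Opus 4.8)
The plan is to follow the proofs of Theorems~12.2.5 and~14.4.1 in \cite{Allouche--Shallit:2003} (the one- and two-variable cases of Christol's theorem, the latter due to \cite{Salon-1987}), the only genuinely new ingredient being that the Cartier operators move the cone supporting a series rather than preserving it. First I would reduce to the setting of Lemma~\ref{new cones are compatible}: $F$ is supported on some cone compatible with $\preceq$, and since such a cone lies in $\{(m,n): n>0\} \cup \{(m,0): m\geq 0\}$, the two rays bounding it lie in the cone $\mathcal C_r$ generated by $(1,0)$ and $(-r,1)$ once $r$ is large enough, so we may assume $F \in \F_{p,\mathcal C_r}\llbracket x,y\rrbracket$. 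By Lemma~\ref{new cones are compatible}, every element of the $[p,p]$-kernel of $F$ is then supported on one of the $r+1$ translates $\mathcal C_r - (t,0)$ with $0 \leq t \leq r$, so the entire kernel lies in the fixed ring $x^{-r}\F_{p,\mathcal C_r}\llbracket x,y\rrbracket$. The identities I would use are the additivity of each $\Lambda_{i,j}$, the reconstruction identity $G = \sum_{0 \leq i,j \leq p-1} x^i y^j \Lambda_{i,j}(G)^p$ valid in characteristic $p$, and its consequence $\Lambda_{i,j}(P \cdot H^p) = \Lambda_{i,j}(P) \cdot H$ for $P \in \F_p[x,y]$; each is a coefficientwise identity and so carries over verbatim from \cite[Chapter~14]{Allouche--Shallit:2003}, and all make sense in $\operatorname{Frac}(\F_{p,\preceq}\llbracket x,y\rrbracket)$, the common ambient field, since $\Lambda_{i,j}$ maps $\F_{p,\preceq}\llbracket x,y\rrbracket$ into itself.

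For the implication that a finite $[p,p]$-kernel forces algebraicity, list the kernel as $\{F_1, \dots, F_N\}$ with $F = F_1$. Because the kernel is closed under every $\Lambda_{i,j}$, the reconstruction identity gives, for each $k$, a relation $F_k = \sum_{0 \leq i,j \leq p-1} x^i y^j F_{\tau(k,i,j)}^p$ with $\tau(k,i,j) \in \{1,\dots,N\}$, so each $F_k$ is a polynomial over $\F_p(x,y)$ in $F_1^p, \dots, F_N^p$. The field-theoretic argument in \cite[proof of Theorem~14.4.1]{Allouche--Shallit:2003} then shows that a finite list of elements each of which is a polynomial over $\F_p(x,y)$ in the $p$-th powers of the list must consist of elements algebraic over $\F_p(x,y)$; in particular $F$ is algebraic.

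For the converse, suppose $F$ is algebraic over $\F_p(x,y)$. The Frobenius iterates $F, F^p, F^{p^2}, \dots$ all lie in the finite-dimensional $\F_p(x,y)$-vector space $\F_p(x,y)(F)$, hence are $\F_p(x,y)$-linearly dependent; clearing denominators gives a nontrivial relation $\sum_{i=0}^{m} A_i F^{p^i} = 0$ with $A_i \in \F_p[x,y]$ and $A_m \neq 0$. Iterating this relation confines every $F^{p^j}$, $j \geq 0$, to the $\F_p(x,y)$-span of $F, F^p, \dots, F^{p^{m-1}}$ after clearing a bounded power of the fixed polynomial $A_m$, and combining this with the reconstruction identity and $\Lambda_{i,j}(P H^p) = \Lambda_{i,j}(P) H$ one shows, by induction on the length of the Cartier word, that every element of the $[p,p]$-kernel of $F$ has the shape $A_m^{-a}\, Q(x,y)\, G$, where $G$ runs over a fixed finite set of monomials in $F, F^p, \dots, F^{p^{m-1}}$, and both the exponent $a$ and the bidegree of $Q$ are bounded. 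The bounds are obtained exactly as in \cite[Theorem~12.2.5]{Allouche--Shallit:2003}: each $\Lambda_{i,j}$ divides the governing degree measure by $p$, while each step multiplies only by the fixed polynomial data coming from the relation and from the reconstruction identity, so once the measure passes a fixed threshold every step strictly decreases it, trapping the kernel in a finite set; together with the translate bound from Lemma~\ref{new cones are compatible} this yields finiteness of the $[p,p]$-kernel.

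The step I expect to be the main obstacle is this last degree-bounding argument, carried out for cone-indexed rational-type series in two variables: one must attach to a series $A_m^{-a} Q G$ a single integer ``complexity'' assembled from the exponent $a$, a bidegree of $Q$, and the index $t$ of the translate $\mathcal C_r - (t,0)$ carrying the support, and verify that no $\Lambda_{i,j}$ raises this complexity above a fixed bound. The contribution of the support cone is precisely what Lemma~\ref{new cones are compatible} controls, while the replacement of a field of formal Laurent series by $\operatorname{Frac}(\F_{p,\preceq}\llbracket x,y\rrbracket)$ changes only where the power-series manipulations take place, not their content; this is what makes the proof the straightforward generalisation of \cite[Theorems~12.2.5 and~14.4.1]{Allouche--Shallit:2003} that the statement advertises.
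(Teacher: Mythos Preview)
Your proposal is correct and is precisely the approach the paper intends: the paper omits the proof entirely, stating only that it is ``a straightforward generalisation of the proofs in \cite[Theorems~12.2.5 and 14.4.1]{Allouche--Shallit:2003},'' and your sketch carries out exactly that generalisation, correctly identifying Lemma~\ref{new cones are compatible} as the one new ingredient needed to control the supports under the Cartier operators. The reduction to a cone $\mathcal C_r$ generated by $(1,0)$ and $(-r,1)$, the reconstruction identity $G=\sum_{i,j}x^iy^j(\Lambda_{i,j}G)^p$, the rule $\Lambda_{i,j}(PH^p)=\Lambda_{i,j}(P)\,H$, and the Ore-type degree-bounding argument are all used just as in the classical proof, so nothing further is needed.
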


Next we prove that a linear cellular automaton begun from a $p$-automatic initial condition produces an algebraic spacetime diagram.
A special case appears in Allouche et al.~\cite[Lemma~2]{AvHPS:1997}, when the initial condition is eventually $0$ in both directions. The proof in the general case is similar.

\begin{theorem}\label{cone STD is algebraic}
Let $\Phi:\F_p^{\Z}\rightarrow \F_p^{\Z}$ be a linear cellular automaton.
If $u \in \F_p^\Z$ is such that $(u_m)_{m \geq 0}$ is $p$-automatic and $u_m = 0$ for all $m \leq -1$, then the generating function of $\ST_\Phi(u)$ is algebraic and so has a finite $[p, p]$-kernel.
\end{theorem}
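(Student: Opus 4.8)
The plan is to express the generating function of $\ST_\Phi(u)$ as a rational combination of the (one-variable) generating function of the initial condition $u$, and then invoke Theorem~\ref{algebraic finite kernel} together with Christol's theorem for the initial condition. First I would write $f(x) = \sum_{m \geq 0} u_m x^m \in \F_p\llbracket x \rrbracket$ for the generating function of the initial condition; since $(u_m)_{m \geq 0}$ is $p$-automatic, Theorem~\ref{Christol}(1) tells us $f(x)$ is algebraic over $\F_p(x)$. Iterating $\Phi$ once multiplies the generating function by the generating polynomial $\phi(x)$, so the $n$-th row of $\ST_\Phi(u)$ has generating function $\phi(x)^n f(x)$. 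Summing over all rows, weighted by $y^n$, gives the candidate two-variable generating function
\[
	F(x,y) = \sum_{n \geq 0} \phi(x)^n f(x) \, y^n = \frac{f(x)}{1 - y\,\phi(x)}.
\]

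The first thing to check is that $F(x,y)$ genuinely lies in $\F_{p,\preceq}\llbracket x,y\rrbracket$, i.e.\ that its support lies in a cone compatible with $\preceq$. Because $u$ is supported on $m \geq 0$ and $\phi(x) = \alpha_{-\ell} x^\ell + \cdots + \alpha_r x^{-r}$ has lowest-degree term $x^{-r}$, the coefficient of $y^n$ in $F$ is supported on exponents $m \geq -rn$; hence the support of $F$ lies in the cone $\mathcal C$ generated by $(1,0)$ and $(-r,1)$, which is contained in $\{(m,n) : n > 0\} \cup \{(m,0) : m \geq 0\}$ and so is compatible with $\preceq$. In particular $1 - y\phi(x)$ is a unit in the relevant ring (its constant term is $1$), so the rational expression makes sense there and $F = f/(1 - y\phi)$ is well-defined in $\F_{p,\preceq}\llbracket x,y\rrbracket$.

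Next I would establish algebraicity. Since $f(x)$ is algebraic over $\F_p(x)$, it is algebraic over $\F_p(x,y)$, and $F(x,y) = f(x)/(1 - y\phi(x))$ is a rational function of $y$ with coefficients in a field extension of $\F_p(x,y)$ generated by the algebraic element $f$; hence $F$ is algebraic over $\F_p(x,y)$. Concretely, if $P(x,z)$ is a nonzero annihilating polynomial for $f$, then $Q(x,y,z) \colonequal P\big(x,\, (1 - y\phi(x))\, z\big)$ (cleared of denominators, using that $\phi$ is a Laurent polynomial) is a nonzero polynomial with $Q(x,y,F) = 0$. By Theorem~\ref{algebraic finite kernel}, $F$ therefore has a finite $[p,p]$-kernel.

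The main obstacle, and the only genuinely delicate point, is reconciling the two-dimensional kernel of the \emph{series} $F$ with the $[p,p]$-kernel of the \emph{cone-indexed sequence} $\ST_\Phi(u)$ as defined in the paper — the latter requires extending $U_{m,n} = 0$ off $\mathcal C$, and the Cartier operators $\Lambda_{i,j}$ move the supporting cone around. This is exactly what Lemma~\ref{new cones are compatible} is for: every element of the $[p,p]$-kernel of $F$ is supported on one of the finitely many translates $\mathcal C - (t,0)$ with $0 \leq t \leq r$, all compatible with $\preceq$, so the kernel computation stays inside a well-behaved finite family of rings and the finiteness of the kernel of $F$ transfers to finiteness of the $[p,p]$-kernel of the cone-indexed configuration $\ST_\Phi(u)$. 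I would close by invoking Lemma~\ref{new cones are compatible} to handle the bookkeeping of the shifted cones and conclude that $\ST_\Phi(u)$, viewed as a cone-indexed configuration, has a finite $[p,p]$-kernel, as claimed.
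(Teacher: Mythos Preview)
Your proposal is correct and follows essentially the same approach as the paper: both express the spacetime generating function as $F(x,y) = f_u(x)/(1-\phi(x)y)$, construct the annihilating polynomial $Q(x,y,z) = P(x,(1-\phi(x)y)z)$ from a Christol polynomial $P$ for $f_u$, and invoke Theorem~\ref{algebraic finite kernel}. Your final paragraph about reconciling the kernel of the series with that of the cone-indexed configuration is unnecessary caution---by the paper's own definition the $[p,p]$-kernel of a cone-indexed sequence \emph{is} the kernel of its generating function (extended by $0$ off the cone), so Theorem~\ref{algebraic finite kernel} applies directly without further bookkeeping.
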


\begin{proof} 
Let the generating polynomial of $\Phi$ be
$ \phi(x) \colonequal \alpha_{-\ell}x^{\ell} + \dots +\alpha_0 + \dots +\alpha_r x^{-r}$.
Let $f_{u}(x)\in \F_p\llbracket x\rrbracket$ be the generating function of $ u$. 
The $n$-th row of $\ST_\Phi( u)$ is the sequence whose generating function is the Laurent series $\phi(x)^n f_{u}(x) $.
Let $\mathcal C$ be the cone generated by $(1,0)$ and $(-r,1)$.
Note that $U \colonequal \ST_\Phi(u)$ is identically $0$ on $\left(\Z\times \N \right)\setminus \mathcal C$, so its generating function satisfies $F_U(x,y)\in \F_{p,\mathcal C}\llbracket x,y\rrbracket\subseteq \F_{p,\preceq}\llbracket x,y\rrbracket$.
 Also,
\[
	F_U(x,y) = \sum_{n=0}^\infty \phi(x)^n f_{u}(x) y^n = \frac{f_{ u}(x)}{1-\phi(x) y}.
\]

Since $(u_m)_{m \geq 0}$ is $p$-automatic, Part~(1) of Theorem~\ref{Christol} guarantees the existence of a polynomial $P(x,z)\in \F_p[x,z]$ such that $P(x, f_u(x)) = 0$.
Let $Q(x, y, z) \colonequal P(x, (1 - \phi(x) y) z)$.
Then
\[
	Q(x, y, F_U(x, y))
	= P(x, (1 - \phi(x) y) F_U(x, y))
	= P(x, f_u(x))
	= 0,
\]
so $F_U(x,y)$ is algebraic. 
By Theorem~\ref{algebraic finite kernel}, $U=(U_{m,n})_{(m,n)\in \mathcal C}$ has a finite $[p, p]$-kernel.
\end{proof}

In Figure~\ref{0-TM} we have an illustration of a spacetime diagram satisfying the conditions of Theorem~\ref{cone STD is algebraic}.

Let $\mathcal C$ be the cone generated by $(1, 0)$ and $(-r, 1)$.
An interesting question is the following.
Given a polynomial equation $Q(x, y, z) = 0$ satisfied by $z = F(x,y) \in \F_{p, \mathcal C}\llbracket x,y\rrbracket$, is it decidable whether $F(x,y)$ is the generating function of $\ST_\Phi(u)$ for some linear cellular automaton $\Phi$?
The initial condition $u$ is determined by $F(x, 0)$, so it would suffice to obtain an upper bound on the left radius $\ell$.

\subsection{Automaticity by shearing}\label{automaticity by shearing}

If $r \geq 1$, then the cone generated by $(1, 0)$ and $(-r, 1)$ contains points $(m, n)$ where $m \leq -1$.
In this section, we feed these indices into an automaton by shearing the sequence so that it is supported on $\N \times \N$.

\begin{definition}\label{shear definition}
Let $\mathcal C$ be the cone generated by $(1,0)$ and $(-r,1)$, and let $s\geq 0$.
The {\em shear} of a sequence $(U_{m,n})_{(m,n) \in \mathcal C -(s,0)}$ is the sequence $(V_{m,n})_{(m,n) \in \N \times \N}$ defined by $V_{m,n}=U_{m - s - r n, n}$ for each $(m,n) \in \N \times \N$.
\end{definition}
 
The next lemma enables us to move between the $[p, p]$-kernel of the generating function $\sum_{(m,n) \in \mathcal C} U_{m,n} x^m y^n$ of a cone-indexed sequence and the generating function $\sum_{(m, n) \in \N \times \N}V_{m,n}x^m y^n$ of its shear.

\begin{lemma}\label{intertwining}
Let $F(x,y) \in \F_{p, \preceq}\llbracket x,y\rrbracket$.
Let $0 \leq i \leq p - 1$, $0 \leq j \leq p - 1$.
Then
\[
	\Lambda_{i, j}\left(x^\ell F(x, y)\right)
	=
	x^{-\left\lfloor\frac{i - \ell}{p}\right\rfloor} \Lambda_{(i - \ell) \bmod p, j}(F(x, y)).
\]
\end{lemma}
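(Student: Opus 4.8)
The statement is a bookkeeping identity relating the Cartier operator $\Lambda_{i,j}$ to multiplication by a monomial $x^\ell$ in the first variable. The plan is to write both sides out in terms of coefficients and compare. Write $F(x,y) = \sum_{(m,n)} U_{m,n} x^m y^n$. Then $x^\ell F(x,y) = \sum_{(m,n)} U_{m,n} x^{m+\ell} y^n$, so the coefficient of $x^a y^b$ in $x^\ell F$ is $U_{a-\ell, b}$. Applying the definition of $\Lambda_{i,j}$, the coefficient of $x^m y^n$ in $\Lambda_{i,j}(x^\ell F(x,y))$ is the coefficient of $x^{pm+i} y^{pn+j}$ in $x^\ell F$, which is $U_{pm + i - \ell, \, pn + j}$.

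Next I would do the same for the right-hand side. Write $i - \ell = p q + i'$ where $i' = (i-\ell) \bmod p \in \{0,\dots,p-1\}$ and $q = \lfloor (i-\ell)/p \rfloor$, so that $-q = -\lfloor (i-\ell)/p\rfloor$ is exactly the exponent appearing on the right. The coefficient of $x^m y^n$ in $x^{-q}\Lambda_{i',j}(F(x,y))$ is the coefficient of $x^{m+q} y^n$ in $\Lambda_{i',j}(F)$, which by definition is the coefficient of $x^{p(m+q)+i'} y^{pn+j}$ in $F$, namely $U_{p(m+q)+i', \, pn+j}$. Now $p(m+q) + i' = pm + pq + i' = pm + (i-\ell)$, so this equals $U_{pm + i - \ell, \, pn+j}$, matching the left-hand side coefficientwise. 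Hence the two power series agree.

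The only genuine subtlety is making sure everything stays inside $\F_{p,\preceq}\llbracket x,y\rrbracket$ and that the monomial prefactor $x^{-q}$ (with $q$ possibly negative, i.e.\ a genuinely negative power of $x$) is legitimate in the ambient field $\bigcup_{(m,n)} x^m y^n \F_{p,\preceq}\llbracket x,y\rrbracket$. Since $\Lambda_{i',j}$ maps $\F_{p,\preceq}\llbracket x,y\rrbracket$ into itself (as noted in the discussion preceding Lemma~\ref{new cones are compatible}, and formalised by Lemma~\ref{new cones are compatible}), multiplying by $x^{-q}$ lands in that field, and the identity is an equality of elements there; when $\ell \geq 0$ and $i \geq \ell$ we have $q = 0$ and everything stays in $\F_{p,\preceq}\llbracket x,y\rrbracket$ itself. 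There is essentially no obstacle here beyond careful tracking of the floor function and the residue; the main thing to get right is the sign convention, i.e.\ that the exponent of $x$ on the right is $-\lfloor (i-\ell)/p\rfloor$ and not its negative, which the coefficient comparison above pins down unambiguously.
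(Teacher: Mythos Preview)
Your proof is correct and takes essentially the same approach as the paper: the paper verifies the identity on monomials $F(x,y)=x^m y^n$ and then invokes linearity of $\Lambda_{i,j}$, while you compare coefficients directly, which is the same computation in different clothing. Your use of the division-with-remainder decomposition $i-\ell = pq + i'$ is exactly the paper's substitution $\ell' = -\lfloor (i-\ell)/p\rfloor$, and the arithmetic check $p(m+q)+i' = pm+i-\ell$ is the heart of both arguments.
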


\begin{proof}
Let $\ell' = -\left\lfloor\frac{i - \ell}{p}\right\rfloor$.
Let $m, n \in \Z$.
We prove the result for the monomial $F(x, y) = x^m y^n$; the general result then follows from the linearity of $\Lambda_{i, j}$.
If $n \nequiv j \mod p$, then both sides are $0$.
If $n \equiv j \mod p$, we have
\begin{align*}
	\Lambda_{i, j}\left(x^\ell \cdot x^m y^n\right)
	&= \Lambda_{i, j}\left(x^{\ell + m} y^n\right) \\
	&= \begin{cases}
		x^\frac{\ell + m - i}{p} y^\frac{n - j}{p}	& \text{if $\ell + m \equiv i \mod p$} \\
		0							& \text{otherwise}
	\end{cases} \\
	&= \begin{cases}
		x^{\ell' + \frac{m - (i - \ell + p \ell')}{p}} y^\frac{n - j}{p}	& \text{if $m \equiv i - \ell + p \ell' \mod p$} \\
		0										& \text{otherwise}
	\end{cases} \\
	&= x^{\ell'} \Lambda_{i - \ell + p \ell', j}(x^m y^n) \\
	&= x^{\ell'} \Lambda_{(i - \ell) \bmod p, j}(x^m y^n).
	\qedhere
\end{align*}
\end{proof}

Note here that for each fixed $\ell$, the map $i \mapsto (i - \ell) \bmod p$ is a bijection.

\begin{example}
Let $p=3$, and let $F(x,y) \in \F_{3, \preceq}\llbracket x,y\rrbracket$.
For each $j$, we have
\begin{align*}
	\Lambda_{0,j}(x^{-1}F(x,y)) &= \Lambda_{1,j}(F(x,y)),\\
	\Lambda_{1,j}(x^{-1}F(x,y)) &= \Lambda_{2,j}(F(x,y)), \\
	\Lambda_{2,j}(x^{-1}F(x,y)) &= x^{-1}\Lambda_{0,j}(F(x,y)).
\end{align*}
\end{example}
 
We prove a version of Eilenberg's theorem for cone-indexed automatic sequences. We show there exists an explicit automaton representation of the shear of a cone-indexed $p$-automatic sequence using its $[p, p]$-kernel.

\begin{theorem}\label{Eilenberg shear}
Let $\mathcal C$ be generated by $(1,0)$ and $(-r,1)$ for some $r \geq 0$.
A $\mathcal C$-indexed sequence $(U_{m, n})_{(m, n) \in \mathcal C}$ of elements in $\F_p$ has a finite $[p, p]$-kernel if and only if its shear is $[p, p]$-automatic.
\end{theorem}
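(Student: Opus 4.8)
The plan is to set up an explicit intertwining between the Cartier operators acting on a cone-indexed sequence (allowing cones translated by $(t,0)$) and the Cartier operators acting on its shear, and then to read off both directions from Eilenberg's theorem~\cite[Theorem~14.4.1]{Allouche--Shallit:2003}. For $0 \le t \le r$ and a power series $H$ supported on $\mathcal C - (t, 0)$, write $\operatorname{sh}_t(H)(x, y) \colonequal x^t H(x, x^r y)$. Since the support of $H$ consists of pairs $(m, n)$ with $n \ge 0$ and $m + rn + t \ge 0$, the series $\operatorname{sh}_t(H)$ is supported on $\N \times \N$, and it is exactly the generating function of the shear of $H$ in the sense of Definition~\ref{shear definition}; in particular, if $F$ is the generating function of $U$, then $\operatorname{sh}_0(F)$ is the generating function of the shear of $U$. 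With this notation, finiteness of the $[p, p]$-kernel of $U$ is the same as finiteness of the $[p,p]$-kernel of $F$ as a power series, and $[p,p]$-automaticity of the shear is, by Eilenberg's theorem, the same as finiteness of the ordinary $[p,p]$-kernel of $\operatorname{sh}_0(F)$.

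The central step is the identity
\[
	\Lambda_{i, j}\bigl(\operatorname{sh}_t(H)\bigr) = \operatorname{sh}_{t'}\bigl(\Lambda_{i'', j}(H)\bigr) \qquad (0 \le i, j \le p - 1),
\]
where $i'' \colonequal (i - t - rj) \bmod p$ and $t' \colonequal (i'' - i + t + rj)/p$. This is a direct computation: expand $\operatorname{sh}_t(H) = \sum_{(m, n)} H_{m, n}\, x^{m + rn + t} y^n$, extract the subseries of terms indexed by $(pm + i, pn + j)$, and observe that what remains is the generating function of $\Lambda_{i'', j}(H)$ after the substitution $y \mapsto x^r y$ together with multiplication by $x^{t'}$. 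The congruence $i'' \equiv i - t - rj \pmod p$ forces $t'$ to be an integer, and the ranges $0 \le i, i'' \le p - 1$ and $0 \le t \le r$ force $0 \le t' \le r$; one checks moreover that this $t'$ agrees with the value $\lfloor (i'' + t + rj)/p \rfloor$ coming from Lemma~\ref{new cones are compatible}, so that $\Lambda_{i'', j}(H)$ is supported on $\mathcal C - (t', 0)$ and the right-hand side is well defined. (Alternatively one first removes the factor $x^t$ using Lemma~\ref{intertwining} and then treats the substitution $y \mapsto x^r y$ separately.)

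For the forward implication, assume the $[p, p]$-kernel $\mathcal K$ of $F$ is finite. By Lemma~\ref{new cones are compatible} every $H \in \mathcal K$ is supported on $\mathcal C - (t, 0)$ for some $0 \le t \le r$, so the set $\mathcal D \colonequal \{\operatorname{sh}_t(H) : H \in \mathcal K,\ 0 \le t \le r,\ \operatorname{supp}(H) \subseteq \mathcal C - (t, 0)\}$ is finite, of size at most $(r + 1)\lvert \mathcal K \rvert$. The central identity shows that $\mathcal D$ is closed under every $\Lambda_{i, j}$ with $0 \le i, j \le p - 1$, because $\Lambda_{i'', j}(H) \in \mathcal K$. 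Since $\operatorname{sh}_0(F) \in \mathcal D$, the ordinary $[p, p]$-kernel of the shear of $U$ is contained in $\mathcal D$ and hence finite, so the shear is $[p, p]$-automatic by Eilenberg's theorem. This argument is constructive: it exhibits a DFAO with state set $\mathcal D$, transition $\delta(\operatorname{sh}_t(H), (i, j)) = \operatorname{sh}_{t'}(\Lambda_{i'', j}(H))$, and output function sending a state to its constant coefficient, which is the explicit automaton promised before the statement.

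For the converse, assume the shear of $U$ is $[p, p]$-automatic, so by Eilenberg's theorem its $[p, p]$-kernel $\mathcal L$ is finite. The correspondence in the central identity is invertible: given a current shift $\tau$ and a target modified digit $i''$, the actual digit $i$ with $i \equiv i'' + \tau + rj \pmod p$ and $0 \le i \le p - 1$ is unique, and then the next shift $\tau'$ is determined. Hence, for every $H \in \mathcal K$, say $H = \Lambda_{i''_\ell, j_\ell} \cdots \Lambda_{i''_0, j_0}(F)$, one can choose actual digits $(i_k, j_k)$ recursively from $\tau_0 = 0$ so that $\Lambda_{i_\ell, j_\ell} \cdots \Lambda_{i_0, j_0}\bigl(\operatorname{sh}_0(F)\bigr) = \operatorname{sh}_t(H)$ for some $0 \le t \le r$; in particular $\operatorname{sh}_t(H) \in \mathcal L$. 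Since $H \mapsto (t, \operatorname{sh}_t(H))$ is injective (one recovers $H_{m,n}$ as the coefficient of $x^{m + rn + t} y^n$), we get $\lvert \mathcal K \rvert \le (r + 1)\lvert \mathcal L \rvert < \infty$, so $U$ has a finite $[p, p]$-kernel. The main obstacle is the central identity itself: one must keep track simultaneously of the $x^t$ prefactor, the substitution $y \mapsto x^r y$, and the digit-splitting of $\Lambda_{i, j}$, and verify that these combine so that $i''$ lands in $\{0, \dots, p - 1\}$ and the residual $x$-power $t'$ is a nonnegative integer at most $r$ that matches the cone translation predicted by Lemma~\ref{new cones are compatible}. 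Once this bookkeeping is in place, both implications reduce to the short closure and inversion arguments above together with Eilenberg's theorem.
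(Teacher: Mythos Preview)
Your proof is correct and follows essentially the same approach as the paper: both establish an intertwining relation between the Cartier operators on the cone-indexed series and those on the shear (your identity $\Lambda_{i,j}(\operatorname{sh}_t(H)) = \operatorname{sh}_{t'}(\Lambda_{i'',j}(H))$ is exactly the paper's conclusion that the shear of $\Lambda_{i,j}(F)$ equals $\Lambda_{(i+s+rj)\bmod p,\,j}$ of the shear of $F$), and then bound each kernel by $(r+1)$ times the other. Your packaging via the substitution $\operatorname{sh}_t(H)(x,y)=x^t H(x,x^r y)$ is cleaner than the paper's row-by-row derivation through Lemma~\ref{intertwining}, but the underlying argument and the resulting bijection of digits $i \leftrightarrow i''$ are identical.
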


\begin{proof}
Let $(V_{m,n})_{(m, n) \in \N \times \N}$ be the shear of $(U_{m,n})_{(m,n) \in \mathcal C}$.
By \cite[Theorem~14.2.2]{Allouche--Shallit:2003}, $V$ is $[p, p]$-automatic if and only if its $[p, p]$-kernel is finite. Hence we show that $U$ has a finite $[p, p]$-kernel if and only if $V$ has a finite $[p, p]$-kernel.
 
By Lemma~\ref{new cones are compatible}, every element of the $[p, p]$-kernel of $U$ is supported on $\mathcal C - (s, 0)$ for some $0 \leq s \leq r$.
Let $W$ be an element of the $[p, p]$-kernel of $U$, supported on $\mathcal C - (s, 0)$.
Let $F(x, y) = \sum_{(m, n) \in \mathcal C - (s, 0)} W_{m, n} x^m y^n$.
Let $G_n(x,y) = x^{s + r n} \sum_{m \geq -s - r n} W_{m,n} x^m y^n$, so that
$
	G(x,y)
	= \sum_{n \geq 0} G_n(x,y)
$
is the generating function of the shear of $W$.
Similarly write $F_n(x,y)=\sum_{m\geq -s - r n} W_{m,n}x^m y^n$; then $G_n(x,y) = x^{s + r n} F_n(x,y)$.
Fix $n \equiv j \mod p$, and write $n = j + k p$ where $k\geq 0$.
By Lemma~\ref{intertwining}, we have
\begin{align*}
	\Lambda_{i,j}(F_n(x,y))
	&= \Lambda_{i,j}\left(x^{-s - r n} G_n(x,y)\right) \\
	&= x^{-\left\lfloor\frac{i + s + r n}{p}\right\rfloor} \Lambda_{(i + s + r n) \bmod p, j}(G_n(x, y)) \\
	&= x^{-\left\lfloor\frac{i + s + r j}{p}\right\rfloor - r k} \Lambda_{(i + s + r j) \bmod p, j}(G_{j + k p}(x, y)).
\end{align*}
Summing over $k \geq 0$ gives
\[
	\Lambda_{i,j}(F(x,y))
	= x^{-\left\lfloor\frac{i + s + r j}{p}\right\rfloor} \sum_{k \geq 0} x^{-r k} \Lambda_{(i + s + r j) \bmod p, j}(G_{j + k p}(x, y)).
\]
Therefore the shear of $x^{\left\lfloor\frac{i + s + r j}{p}\right\rfloor} \Lambda_{i,j}(F(x,y))$ is
\begin{align*}
	\sum_{k \geq 0} \Lambda_{(i + s + r j) \bmod p, j}(G_{j + k p}(x, y))
	&= \Lambda_{(i + s + r j) \bmod p, j} \left(\sum_{n \geq 0} G_{n}(x, y)\right) \\
	&= \Lambda_{(i + s + r j) \bmod p, j}\left(G(x,y)\right).
\end{align*}
Inductively, suppose $G(x, y)$ is the generating function of an element of the kernel of $V$.
Then the shear of $x^{\left\lfloor\frac{i + s + r j}{p}\right\rfloor} \Lambda_{i,j}(F(x, y))$ is an element of the $[p, p]$-kernel of $V$.
Note that $\Lambda_{i,j}(F(x, y))$ is supported on $\mathcal C- \left(\left\lfloor\frac{i + s + r j}{p}\right\rfloor, 0 \right)$.

We set up a map $\kappa$ from the $[p, p]$-kernel of $U$ to the $[p, p]$-kernel of $V$.
Let $\kappa(U) = V$, and define $\kappa$ recursively as follows.
For each $W$ in the $[p, p]$-kernel of $U$, let $\kappa(\Lambda_{i, j}(W)) = \Lambda_{(i + s+ r j) \bmod p, j}(\kappa(W))$ where $W$ is supported on $\mathcal C - (s, 0)$.
Since the map $i \mapsto (i +s+rj) \bmod p$ is a bijection on $\F_p$, $\kappa $ maps $\{ \Lambda_{i,j}(W): 0\leq i,j\leq p-1\} $ surjectively onto $\{ \Lambda_{i,j}(\kappa(W)): 0\leq i,j\leq p-1\} $.
It follows inductively that $\kappa$ is a surjection from the $[p,p]$-kernel of $U$ to the $[p,p]$-kernel of $V$.

If the $[p, p]$-kernel of $U$ is finite, then the surjectivity of $\kappa$ implies that the $[p, p]$-kernel of $V$ is finite.
By Lemma~\ref{new cones are compatible}, $\kappa$ is at most $(r+1)$-to-one.
Therefore if the $[p, p]$-kernel of $V$ is finite then the $[p, p]$-kernel of $U$ has at most $r+1$ times as many elements and is also finite.
\end{proof}

We can now extend Theorem~\ref{cone STD is algebraic}.

\begin{theorem}\label{sheared STD is automatic}
Let $\Phi:\F_p^{\Z}\rightarrow \F_p^{\Z}$ be a linear cellular automaton.
If $u \in \F_p^\Z$ is such that $(u_m)_{m \geq 0}$ is $p$-automatic and $u_m = 0$ for all $m \leq -1$, then the shear of $\ST_\Phi(u)$ is $[p, p]$-automatic.
\end{theorem}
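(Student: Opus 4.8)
The plan is to chain together Theorem~\ref{cone STD is algebraic} and Theorem~\ref{Eilenberg shear}, so that essentially no new work is required. First I would let $r$ be the right radius of $\Phi$ and let $\mathcal C$ be the cone generated by $(1,0)$ and $(-r,1)$, which is exactly the cone appearing in both of those statements. Since $u_m = 0$ for all $m \leq -1$, the $n$-th row of $\ST_\Phi(u)$ has generating function $\phi(x)^n f_u(x)$, a Laurent series whose lowest-order term has degree at least $-rn$; hence $U \colonequal \ST_\Phi(u)$ is supported on $\mathcal C$ and may be regarded as a $\mathcal C$-indexed sequence $(U_{m,n})_{(m,n)\in\mathcal C}$.

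Next, Theorem~\ref{cone STD is algebraic}, whose hypotheses are met verbatim, tells us that the generating function of $U$ is algebraic over $\F_p(x,y)$ and hence, by Theorem~\ref{algebraic finite kernel}, that $U$ has a finite $[p,p]$-kernel. Then I would invoke Theorem~\ref{Eilenberg shear} with this same cone $\mathcal C$: a $\mathcal C$-indexed sequence of elements in $\F_p$ has a finite $[p,p]$-kernel if and only if its shear is $[p,p]$-automatic. Applying the ``only if'' direction to $U$ yields at once that the shear of $\ST_\Phi(u)$ is $[p,p]$-automatic, which is the claim.

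The only point needing care --- bookkeeping rather than a genuine obstacle --- is to confirm that the cone $\mathcal C$ and the notion of shear agree across the two ingredient results. Both Theorem~\ref{cone STD is algebraic} and Theorem~\ref{Eilenberg shear} are phrased in terms of the cone generated by $(1,0)$ and $(-r,1)$ with $r$ the right radius, and Definition~\ref{shear definition} with $s=0$ is precisely the shear turning a $\mathcal C$-indexed sequence into an $\N\times\N$-indexed one, namely $V_{m,n} = U_{m-rn,\,n}$. Once this alignment is recorded, the proof collapses to a single line: Theorem~\ref{cone STD is algebraic} gives a finite $[p,p]$-kernel, and Theorem~\ref{Eilenberg shear} then gives that the shear is $[p,p]$-automatic.
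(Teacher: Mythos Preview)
Your proposal is correct and follows exactly the same approach as the paper: the paper's proof is a two-line chain invoking Theorem~\ref{cone STD is algebraic} to obtain a finite $[p,p]$-kernel and then Theorem~\ref{Eilenberg shear} to conclude that the shear is $[p,p]$-automatic. Your additional bookkeeping about the cone $\mathcal C$ and the alignment of Definition~\ref{shear definition} is accurate but more detailed than the paper bothers to record; note also that Theorem~\ref{cone STD is algebraic} already states the finite-kernel conclusion directly, so the intermediate appeal to Theorem~\ref{algebraic finite kernel} is unnecessary.
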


\begin{proof}
By Theorem~\ref{cone STD is algebraic}, $\ST_\Phi(u)$ has a finite $[p, p]$-kernel. By Theorem~\ref{Eilenberg shear}, we conclude that the shear of $\ST_\Phi(u)$ is $[p, p]$-automatic.
\end{proof}

\begin{figure}
	\center{$\vcenter{\hbox{\includegraphics[scale=.7]{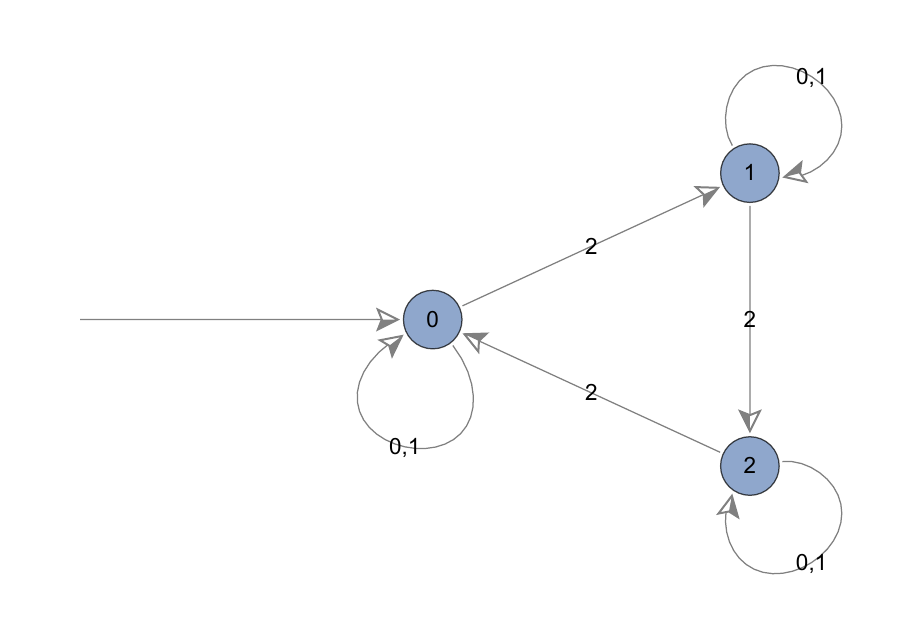}}} \qquad \vcenter{\hbox{\includegraphics[scale=.5]{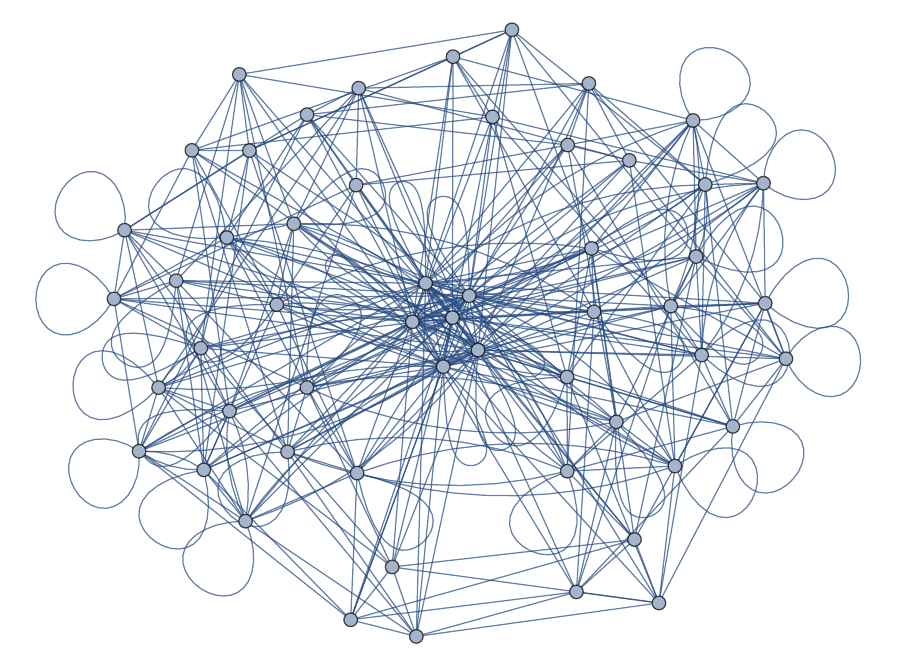}}}$}
	\caption{Automata from Example~\ref{p=3 example}. The automaton on the left generates the initial condition $u$, and the automaton on the right generates the spacetime diagram $\ST_\Phi(u)$.}
	\label{automata}
\end{figure}

\begin{example}\label{p=3 example}
Let $p = 3$, and let $\phi(x) = x + 1 \in \F_3[x]$.
Let $(u_m)_{m \geq 0}$ be the $3$-automatic sequence generated by the automaton on the left in Figure~\ref{automata}, whose first few terms are $001001112\cdots$.
The size of this automaton makes later computations feasible.
Let $u_m = 0$ for all $m \leq -1$; then the spacetime diagram $U = \ST_\Phi(u)$ is supported on $\N \times \N$.
See Figure~\ref{p=3 right part}.
We compute an automaton for the $[3, 3]$-automatic sequence $U|_{\N \times \N}$.
By Part~(1) of Theorem~\ref{Christol}, we can compute a polynomial $P(x, y)$ such that $P(x, f_u(x)) = 0$.
We compute
\begin{multline*}
	P(x, y) = x^{28} y
	+ 2 \left(x^{12} + x^{21} + x^{24} + x^{27} + x^{28} + x^{29}\right) y^3 \\
	+ \left(1 + 2 x^9 + x^{12} + x^{15} + x^{18} + x^{21} + 2 x^{24} + 2 x^{27} + x^{30}\right) y^9 \\
	+ 2 \left(1 + x^{27} + x^{54}\right) y^{27}.
\end{multline*}
Note that this is not the minimal polynomial for $f_u(x)$, but it is in a convenient form for the subsequent computation.
As in the proof of Theorem~\ref{cone STD is algebraic}, the generating function $F_U(x, y)$ of $U$ satisfies $P(x, (1 - \phi(x) y) F_U(x, y)) = 0$.
By Part~(2) of Theorem~\ref{Christol}, we can use this polynomial equation to compute an automaton for $U|_{\N \times \N}$.
The resulting automaton has $486$ states; minimizing produces an equivalent automaton with $54$ states.
This automaton is shown without labels or edge directions on the right in Figure~\ref{automata}.
These computations were performed with the \textit{Mathematica} package \textsc{IntegerSequences}~\cite{IntegerSequences}.
\end{example}

\begin{figure}
	\center{\includegraphics[scale=.75]{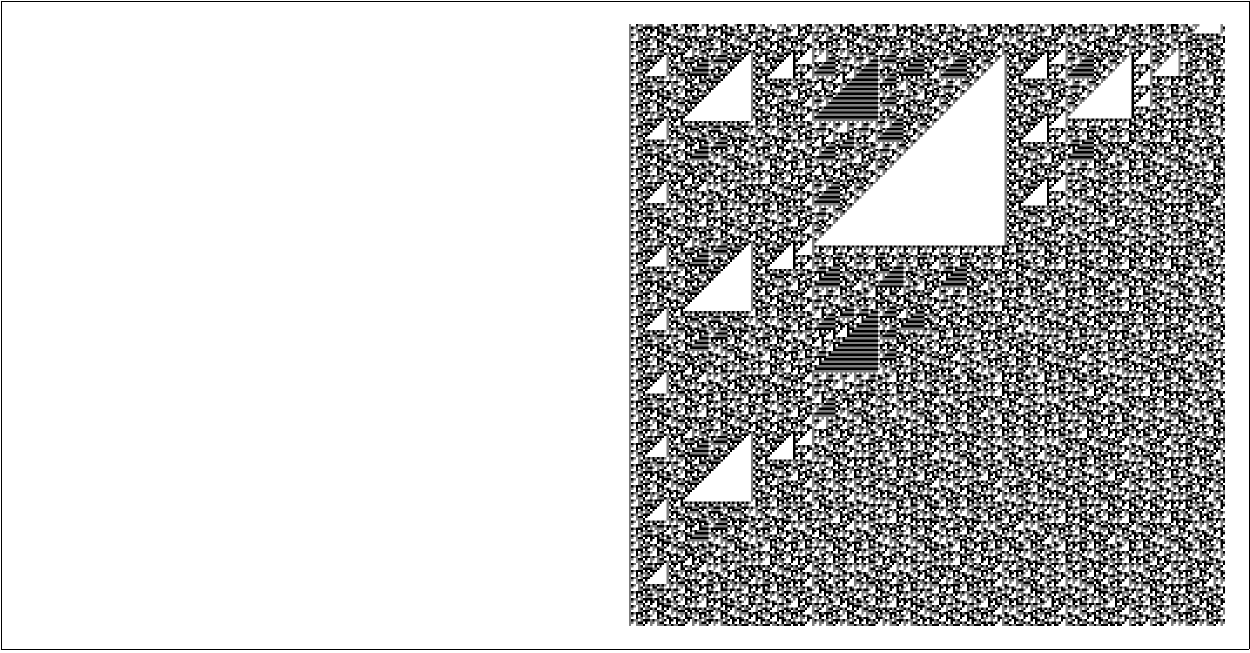}}
	\caption{Spacetime diagram for a cellular automaton with generating polynomial $\phi(x) = x + 1 \in \F_3[x]$.
	The initial condition is generated by the automaton in Example~\ref{p=3 example}.
	The line $n = m$ separates the diagram into two regions; the upper region contains arbitrarily large white patches, and the lower region does not.
	This is because the left half of the initial condition is identically $0$.
	The dimensions are $511 \times 256$.}
	\label{p=3 right part}
\end{figure}

\subsection{Automaticity in base $[-p, p]$}\label{Automaticity in base [-p, p]}

Instead of shearing, we may evaluate an automaton at negative integers by using base $-p$.
This approach gives a variant of Theorem~\ref{Eilenberg shear} and a notion of automaticity of $\ST_\Phi(u)$ for a general $(-p)$-automatic initial condition $u$.

\begin{definition}\label{[-p,p]}
A sequence $(U_{m, n})_{(m, n) \in \Z \times \N}$ is \emph{$[-p, p]$-automatic} if there is a DFAO $(\mathcal S, \{0, \dots, p - 1\}^2, \delta, s_0, \F_p, \omega)$ such that
\[
	U_{m, n} = \omega(\delta(s_0, (m_\ell, n_\ell) \cdots (m_1, n_1) (m_0, n_0)))
\]
for all $(m, n) \in \N \times \N$, where $m_\ell \cdots m_1 m_0$ is the standard base-$(-p)$ representation of $m$ and $n_\ell \cdots n_1 n_0$ is the standard base-$p$ representation of $n$, padded with zeros if necessary, as in Section~\ref{automaticity}.
\end{definition}

\begin{theorem}\label{STD Eilenberg}
A sequence $(U_{m, n})_{(m, n) \in \Z \times \N}$ has a finite $[p, p]$-kernel if and only if it is $[-p, p]$-automatic.
\end{theorem}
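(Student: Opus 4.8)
The plan is to prove the two implications by reducing both to facts already available for the first quadrant: Eilenberg's theorem for $[p,p]$-automatic sequences on $\N\times\N$, \cite[Theorem~14.2.2]{Allouche--Shallit:2003}, and the one-dimensional equivalence of base $p$ and base $-p$, \cite[Theorem~5.3.2]{Allouche--Shallit:2003}.

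First I would pass from $\Z\times\N$ to $\N\times\N$. Write $\Z\times\N=(\N\times\N)\sqcup(\Z_{<0}\times\N)$, and put $U^+=(U_{m,n})_{(m,n)\in\N\times\N}$ and $U^-=(U_{-1-m,\,n})_{(m,n)\in\N\times\N}$, so that $U$ is recovered from the pair $(U^+,U^-)$. A direct computation shows that applying a $[p,p]$-kernel operator to $U$ and then restricting to $m\geq 0$ yields a $[p,p]$-kernel element of $U^+$, while restricting to $m\leq -1$ and reindexing by $m\mapsto -1-m$ yields a $[p,p]$-kernel element of $U^-$; for instance, for $1\leq i\leq p^e-1$ one has $(U_{p^e(-1-m)+i,\,p^e n+j})_{(m,n)\in\N\times\N}=(U^-_{p^e m+(p^e-1-i),\,p^e n+j})_{(m,n)\in\N\times\N}$, and the case $i=0$ is similar. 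Consequently the $[p,p]$-kernels of $U^+$ and $U^-$ are each no larger than that of $U$, and conversely, since a function on $\Z\times\N$ is determined by its restrictions to $\N\times\N$ and to $\Z_{<0}\times\N$, the $[p,p]$-kernel of $U$ injects into the product of those of $U^+$ and $U^-$. Hence $U$ has a finite $[p,p]$-kernel if and only if $U^+$ and $U^-$ do, which by \cite[Theorem~14.2.2]{Allouche--Shallit:2003} holds if and only if $U^+$ and $U^-$ are both $[p,p]$-automatic. It therefore remains to prove the two-dimensional analogue of \cite[Theorem~5.3.2]{Allouche--Shallit:2003}: $U$ is $[-p,p]$-automatic if and only if $U^+$ and $U^-$ are both $[p,p]$-automatic.

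I would prove this analogue exactly as its one-dimensional model. Given $[p,p]$-automata for $U^+$ and $U^-$, one builds a $[-p,p]$-automaton for $U$ that reads the base-$(-p)$ digits of $m$ least-significant-first, runs in parallel the bounded-carry transducers converting them into the base-$p$ digits of $m$ (the branch $m\geq 0$) and of $-1-m$ (the branch $m\leq -1$), feeds the outputs, together with the base-$p$ digits of $n$, into the two given automata, and --- since the sign of $m$ is only revealed by its most significant digit --- keeps both branches alive until the output function selects the correct one. Conversely, given a $[-p,p]$-automaton for $U$, one obtains a $[p,p]$-automaton for $U^+$ (resp.\ for $U^-$) by prefixing the transducer that rewrites a base-$p$ representation of $m$ (resp.\ of $-1-m$) as the base-$(-p)$ representation of $m$. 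Combined with the previous paragraph and \cite[Theorem~14.2.2]{Allouche--Shallit:2003}, this gives the theorem.

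The main obstacle is not the base-$p$/base-$(-p)$ conversion as such --- that bounded-carry transducer is standard and is precisely what powers \cite[Theorem~5.3.2]{Allouche--Shallit:2003} --- but the bookkeeping forced by the interleaved encoding of $(m,n)$: the base-$(-p)$ representation of $m$ and the base-$p$ representation of $n$ are padded to a common length, and the base-$(-p)$ representation of an integer can be strictly longer than its base-$p$ representation (positive integers are ``more expensive'' in base $-p$). One must check that this length discrepancy is bounded by an absolute constant, so that throughout the simulation the machines need only retain a bounded buffer of digit pairs while the two coordinates get back in phase; granting this, all the constructions are finite-state. As an alternative route to the implication ``finite $[p,p]$-kernel $\Rightarrow$ $[-p,p]$-automatic'', one can build the automaton directly, taking as states the $[p,p]$-kernel elements of $U$ decorated by a sign $\varepsilon\in\{\pm1\}$ and an offset $q$: after reading base-$(-p)$ digits of $m$ (and base-$p$ digits of $n$) representing integers $c$ and $d$, the residual sequence is $(m,n)\mapsto U_{c+(-p)^k m,\,d+p^k n}$, which equals the kernel element $(U_{(c\bmod p^k)+p^k m,\,d+p^k n})_{(m,n)\in\Z\times\N}$ precomposed with $m\mapsto(-1)^k m+\lfloor c/p^k\rfloor$; since such a $c$ always satisfies $\lfloor c/p^k\rfloor\in\{-1,0\}$, only the offsets $q\in\{-1,0\}$ ever arise and the state set is finite.
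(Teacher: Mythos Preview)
Your argument is correct and takes a genuinely different route from the paper's. You decompose $U$ into its two half-plane pieces $U^+=(U_{m,n})_{m\geq 0}$ and $U^-=(U_{-1-m,n})_{m\geq 0}$, show that the $[p,p]$-kernel of $U$ is finite if and only if those of $U^\pm$ are (this step is clean and correct), invoke the classical Eilenberg theorem on $\N\times\N$, and then bridge $[p,p]$-automaticity of the pieces to $[-p,p]$-automaticity of $U$ via base-conversion transducers, the two-dimensional analogue of \cite[Theorem~5.3.2]{Allouche--Shallit:2003}. The padding concern you flag is real but benign: the length discrepancy between base-$p$ and base-$(-p)$ representations is at most one, and a leading-zero-invariant automaton absorbs it in the output function.

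The paper instead never splits into quadrants. It introduces a $[-p,p]$-Cartier operator $\bar\Lambda_{i,j}$ and the corresponding $[-p,p]$-kernel, and shows directly that the $[p,p]$- and $[-p,p]$-kernels are simultaneously finite: for every $[p,p]$-kernel element $W$, the four sequences $W$, $\rho(W)$, $\sigma^{-1}(W)$, $\rho(\sigma^{-1}(W))$ (where $\rho$ is the reflection $m\mapsto -m$) form a $\bar\Lambda$-closed family, and symmetrically with $\sigma$ in place of $\sigma^{-1}$ for the reverse inclusion; an Eilenberg-style argument then links the $[-p,p]$-kernel to $[-p,p]$-automaticity. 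Your ``alternative route'' at the end is exactly this construction in different clothing: the four decorations $(\varepsilon,q)\in\{\pm1\}\times\{-1,0\}$ correspond precisely to the four maps $\mathrm{id}$, $\rho$, $\sigma^{-1}$, $\rho\sigma^{-1}$. The paper's approach is self-contained and delivers the explicit factor-of-four bound with no transducer bookkeeping; your main approach buys modularity by reducing to literature results, at the price of the padding verification you identify.
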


\begin{proof}
Define the $[-p, p]$-Cartier operator $\bar{\Lambda}_{i, j}$ by
\[
	\bar{\Lambda}_{i, j} \left((W_{m, n})_{(m, n) \in \Z \times \N}\right)
	\colonequal (W_{-p m + i, p n + j})_{(m, n) \in \Z \times \N}.
\]
Define the \emph{$[-p, p]$-kernel} of $U = (U_{m, n})_{(m, n) \in \Z \times \N}$ to be
the smallest set containing $U$ that is closed under $\bar{\Lambda}_{i, j}$ for all $i, j \in \{0, 1, \dots, p - 1\}$.
We show that the $[p, p]$-kernel of $U$ is finite if and only if the $[-p, p]$-kernel of $U$ is finite.

For a sequence $(W_{m, n})_{(m, n) \in \Z \times \N}$, define $\rho(W) \colonequal (W_{-m, n})_{(m, n) \in \Z \times \N}$ and $\sigma^{-1}(W) \colonequal (W_{m - 1, n})_{(m, n) \in \Z \times \N}$.
Let $K$ be the union, over all elements $W$ in the $[p, p]$-kernel of $U$, of the set
\[
	\left\{
		W, \,
		\rho(W), \,
		\sigma^{-1}(W), \,
		\rho(\sigma^{-1}(W))
	\right\}.
\]
We claim that the $[-p, p]$-kernel of $U$ is a subset of $K$.
One verifies that $\bar{\Lambda}_{i, j}(K) \subseteq K$:
\begin{align*}
	\bar{\Lambda}_{i, j}(W) &= \rho(\Lambda_{i, j}(W)) \\
	\bar{\Lambda}_{i, j}(\rho(W)) &= \begin{cases}
		\Lambda_{0, j}(W)				& \text{if $i = 0$} \\
		\sigma^{-1}(\Lambda_{p - i, j}(W))	& \text{if $i \neq 0$}
	\end{cases} \\
	\bar{\Lambda}_{i, j}(\sigma^{-1}(W)) &= \begin{cases}
		\rho(\sigma^{-1}(\Lambda_{p - 1, j}(W)))	& \text{if $i = 0$} \\
		\rho(\Lambda_{i - 1, j}(W))				& \text{if $i \neq 0$}
	\end{cases} \\
	\bar{\Lambda}_{i, j}(\rho(\sigma^{-1}(W))) &= \sigma^{-1}(\Lambda_{p - 1 - i, j}(W)).
\end{align*}
For example, if $i \neq 0$ we have
\begin{align*}
	\bar{\Lambda}_{i, j}(\rho(W))
	&= \bar{\Lambda}_{i, j}\left((W_{-m, n})_{(m, n) \in \Z \times \N}\right) \\
	&= (W_{-(-p m + i), p n + j})_{(m, n) \in \Z \times \N} \\
	&= (W_{p (m - 1) + p - i, p n + j})_{(m, n) \in \Z \times \N} \\
	&= \sigma^{-1}\left((W_{p m + p - i, p n + j})_{(m, n) \in \Z \times \N}\right) \\
	&= \sigma^{-1}(\Lambda_{p - i, j}(W));
\end{align*}
the other identities follow similarly.
Since $U \in K$, it follows that the $[-p, p]$-kernel of $U$ is a subset of $K$.
Therefore there are at most four times as many elements in the $[-p, p]$-kernel as in the $[p, p]$-kernel, so if the $[p, p]$-kernel is finite then the $[-p, p]$-kernel is also finite.

Similarly, we can emulate $\Lambda_{i, j}$ by taking the four states $W, \rho(W), \sigma(W), \sigma(\rho(W))$ for each element $W$ in the $[-p, p]$-kernel of $U$, where $\sigma(W) \colonequal (W_{m + 1, n})_{(m, n) \in \Z \times \N}$:
\begin{align*}
	\Lambda_{i, j}(W) &= \rho(\bar{\Lambda}_{i, j}(W)) \\
	\Lambda_{i, j}(\rho(W)) &= \begin{cases}
		\bar{\Lambda}_{0, j}(W)			& \text{if $i = 0$} \\
		\sigma(\bar{\Lambda}_{p - i, j}(W))	& \text{if $i \neq 0$}
	\end{cases} \\
	\Lambda_{i, j}(\sigma(W)) &= \begin{cases}
		\sigma(\rho(\bar{\Lambda}_{0, j}(W)))	& \text{if $i = p - 1$} \\
		\rho(\bar{\Lambda}_{i + 1, j}(W))		& \text{if $i \neq p - 1$}
	\end{cases} \\
	\Lambda_{i, j}(\sigma(\rho(W))) &= \sigma(\bar{\Lambda}_{p - 1 - i, j}(W)).
\end{align*}
It follows that there are at most four times as many elements in the $[p, p]$-kernel as in the $[-p, p]$-kernel, so if the $[-p, p]$-kernel is finite then the $[p, p]$-kernel is also finite.

Now we show that the $[-p, p]$-kernel of $U$ is finite if and only if $U$ is $[-p, p]$-automatic.
The proof is similar to the usual proof of Eilenberg's characterisation, as in \cite[Theorem 6.6.2]{Allouche--Shallit:2003}.
If the $[-p, p]$-kernel of $U$ is finite, then the automaton whose states are the elements of the $[-p, p]$-kernel and whose transitions are determined by the action of $\bar{\Lambda}_{i, j}$ is finite; moreover, this automaton outputs $U_{m, n}$ when fed the base-$[-p, p]$ representation of $(m, n)$.
Conversely, if there is such an automaton, then the $[-p, p]$-kernel is finite since it can be embedded into the set of states of the automaton.
\end{proof}

\begin{theorem}\label{cone STD is automatic}
Let $\Phi:\F_p^{\Z}\rightarrow \F_p^{\Z}$ be a linear cellular automaton.
If $u \in \F_p^\Z$ is such that $(u_m)_{m \geq 0}$ is $p$-automatic and $u_m = 0$ for all $m \leq -1$, then $\ST_\Phi(u)$ is $[-p, p]$-automatic.
\end{theorem}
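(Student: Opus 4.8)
The plan is to chain together the two main results already established in this section. First I would invoke Theorem~\ref{cone STD is algebraic}: since $(u_m)_{m\geq 0}$ is $p$-automatic and $u_m = 0$ for all $m\leq -1$, the generating function of $\ST_\Phi(u)$ lies in $\F_{p,\mathcal C}\llbracket x,y\rrbracket$ (where $\mathcal C$ is the cone generated by $(1,0)$ and $(-r,1)$), is algebraic over $\F_p(x,y)$, and hence has a finite $[p,p]$-kernel. Here one must be slightly careful about the indexing convention: the $[p,p]$-kernel of the cone-indexed sequence is, by definition, computed after extending $\ST_\Phi(u)$ by $0$ on $(\Z\times\N)\setminus\mathcal C$, so we genuinely have a sequence $(U_{m,n})_{(m,n)\in\Z\times\N}$ with finite $[p,p]$-kernel in the sense used in Theorem~\ref{STD Eilenberg}.

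Next I would apply Theorem~\ref{STD Eilenberg}, which asserts precisely that a sequence $(U_{m,n})_{(m,n)\in\Z\times\N}$ has a finite $[p,p]$-kernel if and only if it is $[-p,p]$-automatic. Combining the two statements gives that $\ST_\Phi(u)$ is $[-p,p]$-automatic, which is the claim.

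There is essentially no obstacle here, since all the substantive work — the algebraicity of $F_U(x,y) = f_u(x)/(1-\phi(x)y)$, the finiteness of its $[p,p]$-kernel via Theorem~\ref{algebraic finite kernel}, and the translation between the $[p,p]$-kernel and $[-p,p]$-automaticity via the $\rho,\sigma^{\pm 1}$ bookkeeping in Theorem~\ref{STD Eilenberg} — has been carried out already. The only point worth a sentence of care is the compatibility of the two notions of ``$[p,p]$-kernel'': the one for cone-indexed sequences (Lemma~\ref{new cones are compatible} guarantees every kernel element is supported on some translate $\mathcal C-(t,0)$ with $0\le t\le r$, hence on a cone compatible with $\preceq$) and the one for $\Z\times\N$-indexed sequences appearing in Theorem~\ref{STD Eilenberg}. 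Since the former is defined by zero-extension to $\Z\times\N$ and the Cartier operators $\Lambda_{i,j}$ act identically under both viewpoints, the two kernels coincide and finiteness transfers verbatim. Thus the proof is a two-line deduction: finite $[p,p]$-kernel by Theorem~\ref{cone STD is algebraic}, then $[-p,p]$-automaticity by Theorem~\ref{STD Eilenberg}.
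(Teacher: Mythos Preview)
Your proposal is correct and matches the paper's proof exactly: invoke Theorem~\ref{cone STD is algebraic} to obtain a finite $[p,p]$-kernel, then apply Theorem~\ref{STD Eilenberg} to conclude $[-p,p]$-automaticity. Your extra remark about the compatibility of the cone-indexed and $\Z\times\N$-indexed notions of the $[p,p]$-kernel is a reasonable point of care, but the paper treats it as immediate.
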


\begin{proof}
By Theorem~\ref{cone STD is algebraic}, $\ST_\Phi(u)$ has a finite $[p, p]$-kernel.
By Theorem~\ref{STD Eilenberg}, $\ST_\Phi(u)$ is $[-p, p]$-automatic.
\end{proof}

\begin{corollary}\label{STD is automatic}
Let $\Phi:\F_p^{\Z}\rightarrow \F_p^{\Z}$ be a linear cellular automaton.
If $u \in \F_p^\Z$ is $(-p)$-automatic, then $\ST_\Phi(u)$ is $[-p, p]$-automatic.
\end{corollary}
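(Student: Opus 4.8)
The plan is to reduce to the one-sided situation already handled in Theorem~\ref{cone STD is algebraic} by splitting $u$ at the origin, using the linearity of $\Phi$, and invoking the equivalence between $[-p,p]$-automaticity and finiteness of the $[p,p]$-kernel from Theorem~\ref{STD Eilenberg}.

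Write $u = u^+ + u^-$ in $\F_p^\Z$, where $u^+_m \colonequal u_m$ for $m \geq 0$ and $u^+_m \colonequal 0$ for $m \leq -1$, while $u^-_m \colonequal u_m$ for $m \leq -1$ and $u^-_m \colonequal 0$ for $m \geq 0$. Since $u$ is $(-p)$-automatic, both $(u_m)_{m \geq 0}$ and $(u_{-m})_{m \geq 0}$ are $p$-automatic. In particular $(u^+_m)_{m \geq 0} = (u_m)_{m \geq 0}$ is $p$-automatic and $u^+_m = 0$ for $m \leq -1$, so Theorem~\ref{cone STD is algebraic} shows $\ST_\Phi(u^+)$ has a finite $[p, p]$-kernel.

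For the left half I would use a reflection. Let $\rho \colon \F_p^\Z \to \F_p^\Z$ be defined by $(\rho(w))_m = w_{-m}$, and set $\tilde\Phi \colonequal \rho \circ \Phi \circ \rho$; then $\tilde\Phi$ is the linear cellular automaton whose generating polynomial is $\phi(1/x)$, obtained from $\phi$ by interchanging the roles of the left and right radii. Since $\rho^2 = \mathrm{id}$ we have $\tilde\Phi^n = \rho \circ \Phi^n \circ \rho$, hence $\Phi^n \circ \rho = \rho \circ \tilde\Phi^n$ for all $n$. Putting $v \colonequal \rho(u^-)$, so that $u^- = \rho(v)$, the $n$-th row of $\ST_\Phi(u^-)$ is $\Phi^n(\rho(v)) = \rho(\tilde\Phi^n(v))$, that is, $\ST_\Phi(u^-)_{m, n} = \ST_{\tilde\Phi}(v)_{-m, n}$ for all $(m, n) \in \Z \times \N$. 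Now $v_m = u_{-m}$ for $m \geq 1$ and $v_m = 0$ for $m \leq 0$; the sequence $(v_m)_{m \geq 0}$ agrees with the $p$-automatic sequence $(u_{-m})_{m \geq 0}$ except at $m = 0$, so it is $p$-automatic, and $v_m = 0$ for $m \leq -1$. By Theorem~\ref{cone STD is algebraic}, $\ST_{\tilde\Phi}(v)$ has a finite $[p, p]$-kernel. To transfer this to $\ST_\Phi(u^-) = \rho(\ST_{\tilde\Phi}(v))$, I would appeal to the identities established in the proof of Theorem~\ref{STD Eilenberg}: the set $K$ formed there from the $[p, p]$-kernel of $\ST_{\tilde\Phi}(v)$ by adjoining the images under $\rho$ and $\sigma^{-1}$ is finite, contains $\rho(\ST_{\tilde\Phi}(v)) = \ST_\Phi(u^-)$, and is closed under every $\bar\Lambda_{i, j}$; hence the $[-p, p]$-kernel of $\ST_\Phi(u^-)$ is finite, and Theorem~\ref{STD Eilenberg} returns the finiteness of its $[p, p]$-kernel.

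Finally I would combine the two pieces. By linearity of $\Phi$, the $n$-th row of $\ST_\Phi(u)$ is $\Phi^n(u) = \Phi^n(u^+) + \Phi^n(u^-)$, so $\ST_\Phi(u) = \ST_\Phi(u^+) + \ST_\Phi(u^-)$ coordinatewise. Each $[p, p]$-Cartier operator is $\F_p$-linear, hence so is every composition of them, and therefore the $[p, p]$-kernel of $\ST_\Phi(u)$ is contained in the set of coordinatewise sums of an element of the $[p, p]$-kernel of $\ST_\Phi(u^+)$ with an element of the $[p, p]$-kernel of $\ST_\Phi(u^-)$, which is finite. Thus $\ST_\Phi(u)$ has a finite $[p, p]$-kernel, and by Theorem~\ref{STD Eilenberg} it is $[-p, p]$-automatic. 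The step I expect to require the most care is the reflection: verifying that $\ST_\Phi(u^-)$ is genuinely the coordinate-reflection of the spacetime diagram of a linear cellular automaton run from a one-sided $p$-automatic initial condition supported on $\{m \geq 0\}$, and that coordinate-reflection preserves finiteness of the $[p, p]$-kernel; everything else is bookkeeping on top of Theorems~\ref{cone STD is algebraic} and \ref{STD Eilenberg}.
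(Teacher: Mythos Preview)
Your proof is correct and follows essentially the same route as the paper: split $u$ into its left and right halves, invoke the one-sided result for each, and add the two spacetime diagrams using linearity of $\Phi$. The paper dispatches the left half with the phrase ``a straightforward modification of Theorem~\ref{cone STD is automatic}'' and combines at the level of $[-p,p]$-automaticity, whereas you spell out that modification via the reflection $\rho$ and the conjugate automaton $\tilde\Phi$ and combine at the level of $[p,p]$-kernels; these are the same argument with the bookkeeping made explicit.
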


\begin{proof}
Consider the two initial conditions $\cdots u_{-2} u_{-1} \cdot 0 0 \cdots$ and $\cdots 0 0 \cdot u_0 u_1 \cdots$.
By Theorem~\ref{cone STD is automatic}, $\ST_\Phi(\cdots 0 0 \cdot u_0 u_1 \cdots)$ is $[-p, p]$-automatic.
A straightforward modification of Theorem~\ref{cone STD is automatic} shows that $\ST_\Phi(\cdots u_{-2} u_{-1} \cdot 0 0 \cdots)$ is also $[-p, p]$-automatic.
Since $\Phi$ is linear, $\ST_\Phi(u)$ is the termwise sum of these two spacetime diagrams. The sum of two $[-p, p]$-automatic sequences is automatic; therefore $\ST_\Phi(u)$ is $[-p, p]$-automatic.
\end{proof}

\begin{figure}
	\center{\includegraphics[scale=.75]{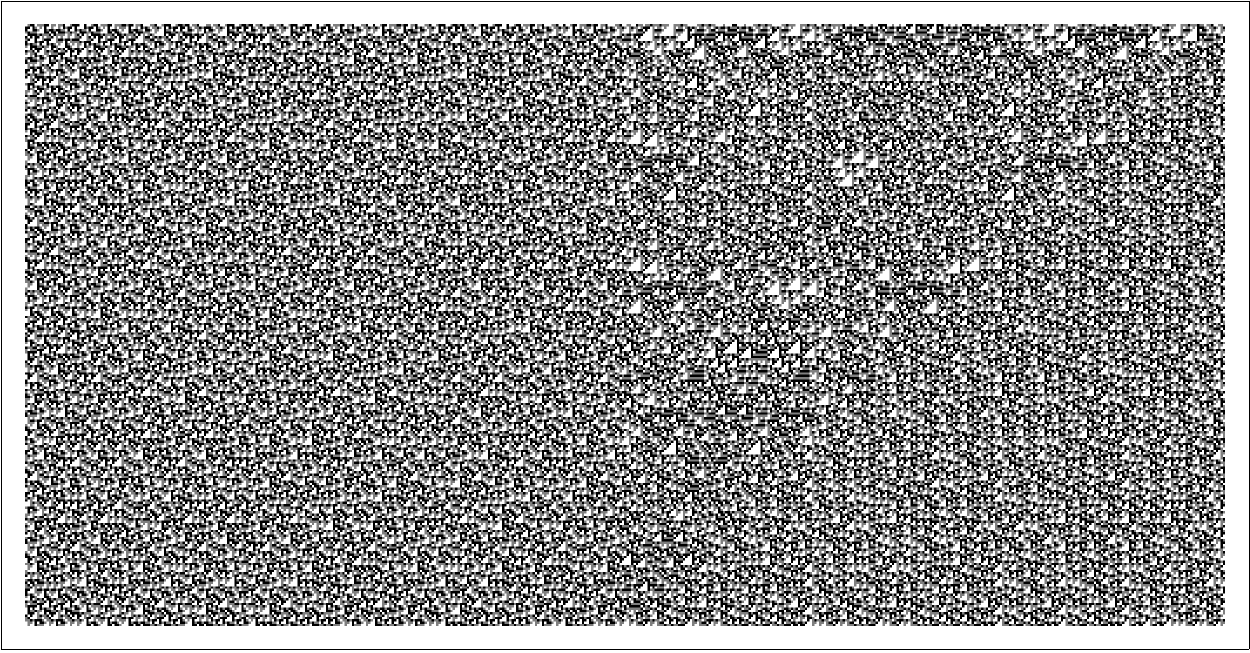}}
	\caption{Spacetime diagram for the linear cellular automaton with generating polynomial $\phi(x) = x + 1 \in \F_3[x]$ begun from the $3$-automatic initial condition described in Example~\ref{p=3 example part 2}.
	The dimensions are $511 \times 256$.}
	\label{p=3 both parts}
\end{figure}

\begin{example}\label{p=3 example part 2}
As in Example~\ref{p=3 example}, let $p = 3$, let $\phi(x) = x + 1 \in \F_3[x]$, and let $(u_m)_{m \geq 0}$ be $3$-automatic sequence generated by the automaton on the left in Figure~\ref{automata}.
We extend $(u_m)_{m \geq 0}$ to a $(-3)$-automatic sequence $(u_m)_{m \in \Z}$ by setting $u_m = u_{-m}$ for all $m \leq -1$.
The resulting spacetime diagram is shown in Figure~\ref{p=3 both parts}.
By Corollary~\ref{STD is automatic}, $\ST_\Phi(u)$ is $[-3, 3]$-automatic.

To compute an automaton for $\ST_\Phi(u)$, we start with the $54$-state automaton computed in Example~\ref{p=3 example} for the right half $(U_{m, n})_{(m, n) \in \N \times \N}$ of the spacetime diagram in Figure~\ref{p=3 right part}.
We convert this $[3, 3]$-automaton using Theorem~\ref{STD Eilenberg} to a $[-3, 3]$-automaton for the spacetime diagram $(U_{m, n})_{(m, n) \in \Z \times \N}$ in Figure~\ref{p=3 right part} whose left half is identically $0$; minimizing produces an automaton $\mathcal M$ with $204$ states.

We also need an automaton for the $\Z \times \N$-indexed spacetime diagram with initial condition $\cdots u_{-2} u_{-1} 0 0 0 \cdots$, shown in Figure~\ref{p=3 left part}.
The symmetry $x^{-1} \phi(x) = \phi(x^{-1})$ implies that a shear of this diagram is the left--right reflection $(U_{-m, n})_{(m, n) \in \Z \times \N}$ of the diagram in Figure~\ref{p=3 right part}.
Since $(U_{-m, n})_{(m, n) \in \Z \times \N}$ is an element of the $[-3, 3]$-kernel of $U$, we obtain an automaton for $(U_{-m, n})_{(m, n) \in \Z \times \N}$ simply by changing the initial state in $\mathcal M$ to be the state corresponding to this kernel sequence; hence $(U_{-m, n})_{(m, n) \in \Z \times \N}$ is generated by an automaton $\mathcal M'$ with $204$ states.
Shearing $(U_{-m, n})_{(m, n) \in \Z \times \N}$ produces $(U_{-m + n, n})_{(m, n) \in \Z \times \N}$, the spacetime diagram in Figure~\ref{p=3 left part}.
Using a variant of Theorem~\ref{Eilenberg shear} for the $[-p, p]$-kernel of a $\Z \times \N$-indexed sequence, we compute an automaton with $204$ states for this spacetime diagram.

Finally, since $u_0 = 0$, the product of the automata for $(U_{m, n})_{(m, n) \in \Z \times \N}$ and $(U_{-m + n, n})_{(m, n) \in \Z \times \N}$ is an automaton for the sum $\ST_\Phi(u)$ of the spacetime diagrams in Figures~\ref{p=3 right part} and \ref{p=3 left part}, which is the diagram in Figure~\ref{p=3 both parts}.
The product automaton has $204^2$ states, but minimizing reduces this to $1908$ states.
\end{example}

\begin{figure}
	\center{\includegraphics[scale=.75]{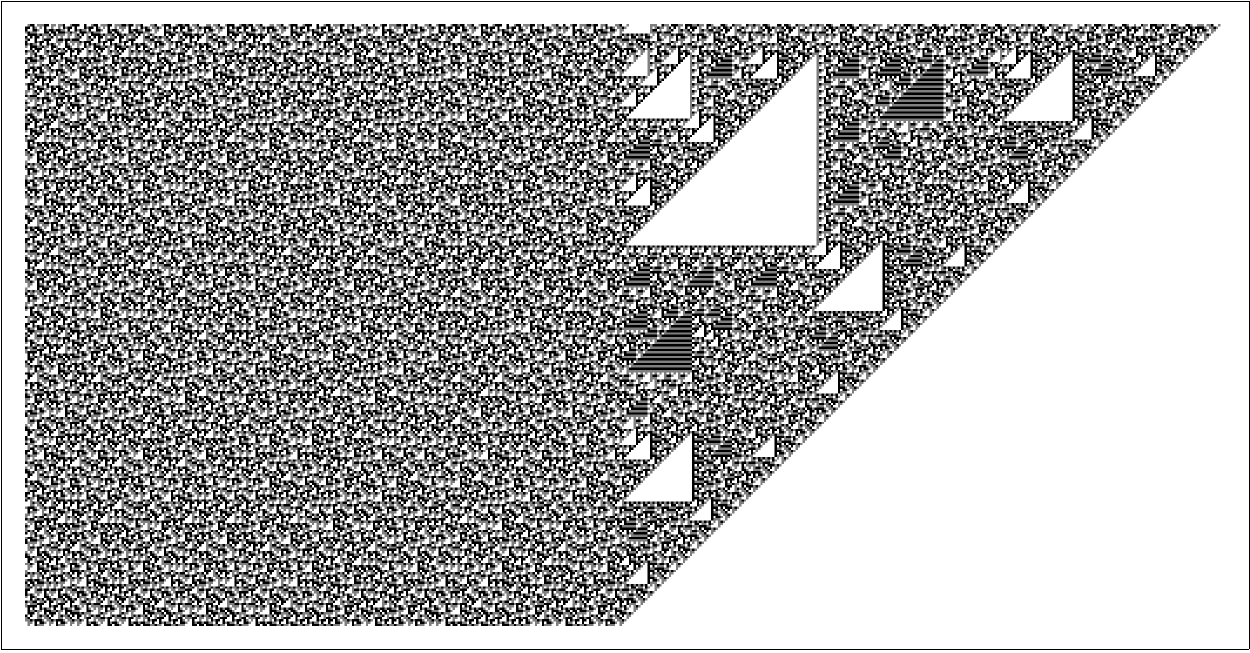}}
	\caption{Spacetime diagram whose sum with the diagram in Figure~\ref{p=3 right part} is the diagram in Figure~\ref{p=3 both parts}.}
	\label{p=3 left part}
\end{figure}

\section{Automaticity of $\Z\times \Z$-indexed spacetime diagrams}\label{Automaticity of Z x Z diagrams}

In Corollary~\ref{STD is automatic}, we showed that if $u$ is $(-p)$-automatic then the $\Z\times \N$-configuration $\ST_\Phi(u)$ is $[-p,p]$-automatic. 
Our aim in this section is to extend Corollary~\ref{STD is automatic} to $\Z \times \Z$-configurations. We remark that the results of this section can be further extended to statements about two-dimensional linear recurrences with constant coefficients. We also note that Bousquet-M\'{e}lou and Petko\v{v}sek~\cite{Bousquet-Melou--Petkovsek-2000} prove similar results, with different proofs, for linear recurrences on $\N\times \N$ over fields of characteristic $0$.

\begin{definition}
If $U\in \F_p^{\Z \times\Z }$ satisfies $\Phi(U|_{\Z\times \{n\}}) = U|_{\Z\times \{n + 1\}}$ for each $n\in \Z$, we call $U$ a {\em spacetime diagram} for $\Phi$.
\end{definition}

Note that if $\Phi: \F_p^\Z \rightarrow \F_p^\Z$ is a linear cellular automaton with left and right radii $\ell$ and $r$ respectively, then it is surjective, and every sequence in $\F_p^\Z$ has $p^{\ell+r}$ preimages.
Hence if $\ell+r\geq 1$ there are infinitely many $\Z\times \Z$-indexed spacetime diagrams $U$ such that $U|_{\Z \times \{0\}} = u$.

Let $\Phi$ have generating polynomial $\phi(x) = \alpha_{-\ell}x^{\ell} + \dots +\alpha_0 + \dots +\alpha_r x^{-r}$.
A configuration $U=(U_{m,n})_{(m,n)\in \Z \times \Z}$ is a spacetime diagram for $\Phi$ if and only if
\[
	(1 - \phi(x) y) \sum_{(m,n) \in \Z \times \Z} U_{m,n} x^m y^n = 0.
\]
In the following lemma we identify which initial conditions determine a spacetime diagram for $\Phi$.

\begin{lemma}\label{initial conditions for LCA}
Let $\Phi:\F_p^\Z \rightarrow \F_p^\Z$ be a linear cellular automaton with generating polynomial $\phi(x) = \alpha_{-\ell}x^{\ell} + \dots +\alpha_0 + \dots +\alpha_r x^{-r}$. Let 
\[
	\I = (\Z\times \{ 0 \}) \cup \bigcup_{i=0}^{\ell+r-1}(\{ i \} \times -\N).
\]
Then every $U \in \F_p^\I$ can be uniquely extended to a spacetime diagram $U \in \F_p^{\Z \times \Z}$ for $\Phi$.
\end{lemma}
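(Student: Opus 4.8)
The plan is to construct the extension one row at a time, by induction on $|n|$ outward from row $0 = U|_{\Z \times \{0\}}$. Upward there is nothing to choose: each row with $n \geq 1$ equals $\Phi$ applied to the row below it, so these rows are forced and automatically consistent, and no point of $\I$ with positive second coordinate is involved. All the content is in the downward direction $n \leq -1$, where $\Phi$ is $(\ell + r)$-to-one, and the prescribed values on the columns $\{i\} \times -\N$ for $0 \leq i \leq \ell + r - 1$ are exactly what compensate for the non-injectivity.

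For the downward step, suppose rows $0, -1, \dots, n+1$ have been built consistently, and that we also have the prescribed entries $U_{0, n}, U_{1, n}, \dots, U_{\ell + r - 1, n}$ read off from $\I$. The relation between rows $n$ and $n + 1$ is, for every $m \in \Z$,
\[
	U_{m, n+1} = \alpha_{-\ell} U_{m - \ell, n} + \dots + \alpha_0 U_{m, n} + \dots + \alpha_r U_{m + r, n},
\]
which, taking $\alpha_{-\ell} \neq 0$ and $\alpha_r \neq 0$ (the case in which $\Phi$ is $p^{\ell + r}$-to-one, as used above), is a spatial linear recurrence of order $\ell + r$ with the fully known sequence $(U_{m, n+1})_m$ as inhomogeneous term. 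Solving the equation for the rightmost unknown $U_{m + r, n}$ and running $m = \ell, \ell + 1, \ell + 2, \dots$ determines $U_{\ell + r, n}, U_{\ell + r + 1, n}, \dots$, hence every $U_{j, n}$ with $j \geq \ell + r$; each step uses only the initial block, values already found, and one entry of row $n + 1$. Symmetrically, solving for the leftmost unknown $U_{m - \ell, n}$ and running $m = \ell - 1, \ell - 2, \dots$ determines $U_{-1, n}, U_{-2, n}, \dots$, hence every $U_{j, n}$ with $j \leq -1$. This defines all of row $n$, and since the recurrence at each $m \in \Z$ has been invoked exactly once --- those with $m \geq \ell$ in the rightward pass and those with $m \leq \ell - 1$ in the leftward pass --- the constructed row satisfies $\Phi(U|_{\Z \times \{n\}}) = U|_{\Z \times \{n+1\}}$ in full, which gives existence; and since each entry was pinned down the moment it was introduced, row $n$ is the only possibility, which gives uniqueness. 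Iterating down through row $n - 1$ and beyond produces the unique spacetime diagram $U \in \F_p^{\Z \times \Z}$ agreeing with the given data on $\I$. (The degenerate case $\ell + r = 0$, where the block is empty and $\phi$ is a nonzero constant, so the recurrence reads $U_{m, n} = \alpha_0^{-1} U_{m, n+1}$, is covered by the same argument.)

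The step I expect to need the most care is the index bookkeeping in the downward pass: one must check that the $\ell + r$ prescribed entries $U_{0, n}, \dots, U_{\ell + r - 1, n}$ form precisely the block of consecutive spatial values needed to launch both the rightward recursion (from $m = \ell$) and the leftward recursion (from $m = \ell - 1$), and that the two passes together use each equation of the recurrence once and only once. This is what makes the construction simultaneously unobstructed (hence existence) and rigid (hence uniqueness). Equivalently, writing $\phi(x) = x^{-r} \psi(x)$ with $\psi \in \F_p[x]$ of degree $\ell + r$ and $\psi(0) \neq 0$, one can recast the proof through generating functions, inverting $\psi$ in $\F_p((x))$ on the right and in $\F_p((x^{-1}))$ on the left; but the explicit recurrence makes the role of the index set $\I$ most transparent.
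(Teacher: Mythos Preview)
Your proof is correct and follows essentially the same approach as the paper's: both argue that the upper half is forced by iterating $\Phi$, and that each row $n \leq -1$ is uniquely determined by the row above together with the $\ell+r$ prescribed entries $U_{0,n},\dots,U_{\ell+r-1,n}$. The paper simply asserts the key step (``given a word $w\in\F_p^{\ell+r}$, there is a unique sequence $v\in\F_p^\Z$ such that $v_0\cdots v_{\ell+r-1}=w$ and $\Phi(v)=U|_{\Z\times\{0\}}$''), whereas you spell out the mechanism explicitly via the rightward and leftward passes of the spatial recurrence, using $\alpha_r\neq 0$ and $\alpha_{-\ell}\neq 0$ respectively; your index bookkeeping is correct.
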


\begin{proof}
Note that $U|_{\Z\times \{0\}}$ uniquely determines a $\Z \times \N$-indexed spacetime diagram for $\Phi$.
Next we observe that $U|_{( \Z \times \{0\} ) \cup \{(0,-1), \dots, (\ell+r-1,-1)\}}$ determines $U|_{\Z \times \{ -1\}}$ for $\Phi$.
 For, given a word $w\in \F_p^{\ell+r}$, there is a unique sequence $v\in \F_p^\Z$ such that $v_0 \cdots v_{\ell+r-1}=w$ and $\Phi(v)= U|_{\Z \times \{0\}}$. 
Similarly, $U|_{( \Z \times \{-n\} ) \cup \{(0,-n-1), \dots, (\ell+r-1,-n-1)\}}$ determines $U|_{\Z \times \{ -n-1\}}$.
 We can repeat this, determining one row at a time, once we have specified a word of length $\ell +r $ in that row.
\end{proof}

\begin{figure}
	\center{\includegraphics[scale=.75]{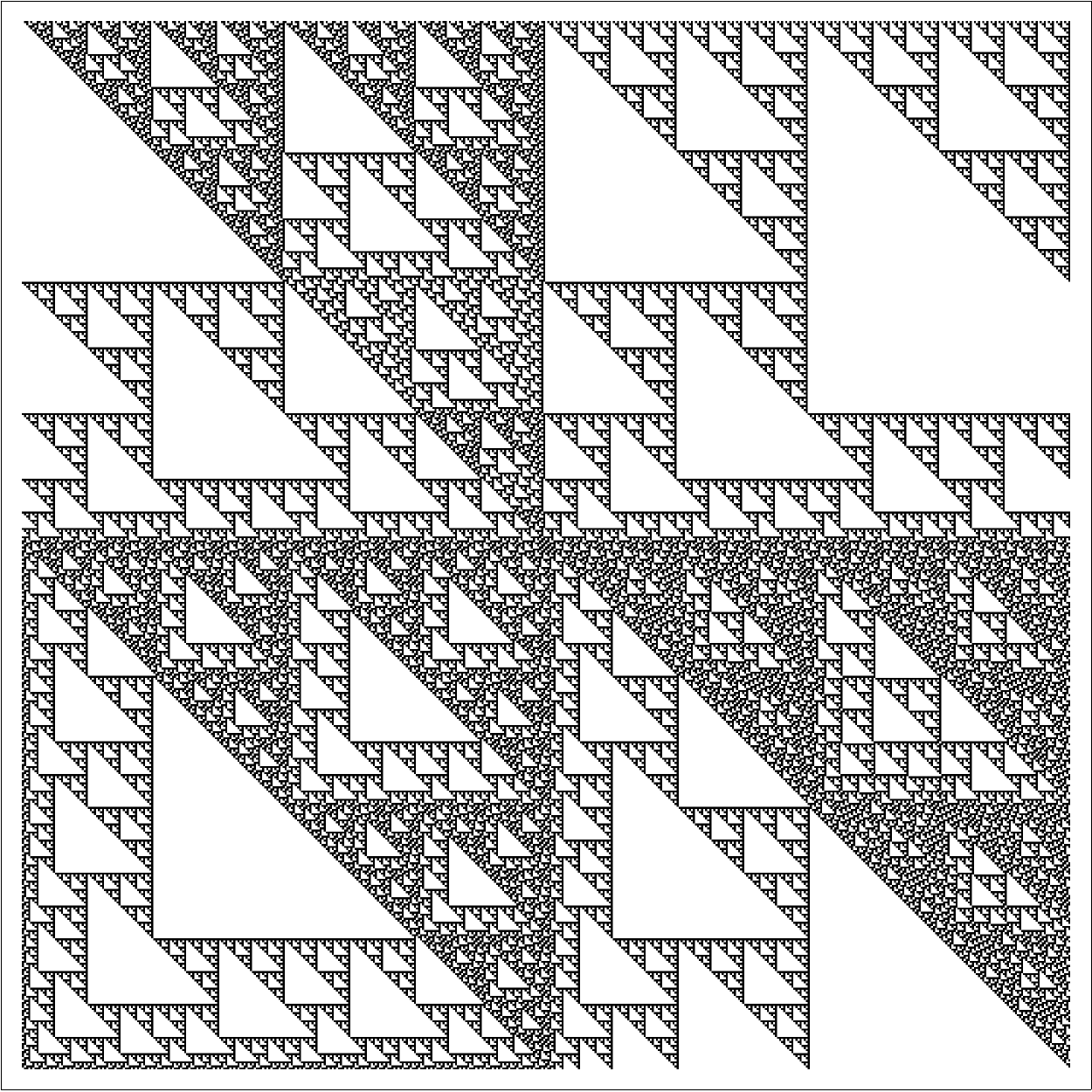}}
	\caption{A $\Z \times \Z$-indexed spacetime diagram for the Ledrappier cellular automaton.
	The initial conditions are $U_{m, 0} = T(m)$ for $m \geq 0$, $U_{m, 0} = T(-m)$ for $m \leq -1$, and $U_{0, n} = T(-n)$ for $n \leq -1$, where $T(m)_{m \geq 0}$ is the Thue--Morse sequence.
	The dimensions are $511 \times 511$.}
	\label{Z x Z}
\end{figure}

\begin{example}
Consider the {\em Ledrappier} cellular automaton $\Phi$, whose generating polynomial is $\phi(x)=1+x^{-1}$.
By Lemma~\ref{initial conditions for LCA}, $U$ is determined by its values on $(\Z \times \{0\})\cup (\{0\} \times -\N)$.
See Figure~\ref{Z x Z} for an example of a spacetime diagram for $\Phi$.
\end{example}

Definition~\ref{[-p,p]} naturally generalises to $[p,q]$-automaticity for any integers $p, q$ with $|p| \geq 2$ and $|q| \geq 2$.
Therefore we may consider $[-p,-p]$-automaticity.
One can also define $[p, p]$-automaticity for any of the four quadrants $(\pm\N) \times (\pm\N)$.

\begin{proposition}\label{-p iff p}
A sequence $U \in \F_p^{\Z\times\Z}$ is $[-p, -p]$-automatic if and only if each of $U|_{( \pm\N) \times (\pm\N)}$ is $[p, p]$-automatic.
\end{proposition}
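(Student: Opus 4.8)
The plan is to prove this by a symbolic-dynamics/automata argument that mirrors the four-state bookkeeping already used in the proof of Theorem~\ref{STD Eilenberg}, but now on both coordinates simultaneously. The key observation is that reading the coordinate $m$ in base $-p$ versus base $p$ differs only by whether a ``$-1$'' (reflection) is applied, and that this reflection interacts predictably with the Cartier-type operators. So the strategy is: (1) identify the natural $[-p,-p]$-kernel, generated by the operators $\bar{\bar\Lambda}_{i,j}(W) \colonequal (W_{-pm+i,-pn+j})$; (2) show its finiteness is equivalent to the joint finiteness of the four $[p,p]$-kernels of the restrictions $U|_{(\pm\N)\times(\pm\N)}$; and (3) invoke the Eilenberg-type equivalence (finite kernel $\iff$ automatic) in each numeration system, exactly as in the last paragraph of the proof of Theorem~\ref{STD Eilenberg}.

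First I would set up, for a doubly-infinite configuration $W\in\F_p^{\Z\times\Z}$, the eight ``symmetry states'' obtained by optionally reflecting each coordinate ($\rho_1,\rho_2$) and optionally shifting each coordinate by $\pm1$ ($\sigma_1^{\pm1},\sigma_2^{\pm1}$) — in fact four states per coordinate suffice, as in Theorem~\ref{STD Eilenberg}, so sixteen states per kernel element in the worst case. Then I would write out the analogue of the identities in that proof: applying $\bar{\bar\Lambda}_{i,j}$ to $\rho_1^{a}\sigma_1^{b}\rho_2^{c}\sigma_2^{d}(W)$ produces $\rho_1^{a'}\sigma_1^{b'}\rho_2^{c'}\sigma_2^{d'}(\Lambda_{i',j'}(W))$ for indices $i',j'$ and new decorations depending on $i,j$ and the current decorations, with case splits on whether $i=0$, $i=p-1$, etc. Because each coordinate is handled independently, these reduce to the one-coordinate identities already verified in Theorem~\ref{STD Eilenberg}. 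This shows the set of all such decorated translates of $[p,p]$-kernel elements is closed under all $\bar{\bar\Lambda}_{i,j}$, hence contains the $[-p,-p]$-kernel of $U$; therefore $[p,p]$-kernel finite $\Rightarrow$ $[-p,-p]$-kernel finite, and symmetrically (reading $p$ in terms of $-p$) the converse.

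For the bridge to the statement as phrased — in terms of the four quadrant restrictions $U|_{(\pm\N)\times(\pm\N)}$ rather than a single $[p,p]$-kernel of a $\Z\times\Z$-configuration — I would note that $U$ is $[-p,-p]$-automatic iff its $[-p,-p]$-kernel is finite (the Eilenberg direction, same as in Theorem~\ref{STD Eilenberg}), and that a base-$(-p)$ reading of $m$ naturally ranges over all of $\Z$, so the $[-p,-p]$-kernel of $U$ being finite is equivalent to the four usual $[p,p]$-kernels (one per quadrant) each being finite: restricting a kernel element to a fixed quadrant commutes with the operators up to the reflection/shift decorations above, and conversely one assembles a finite $[-p,-p]$-kernel from four finite quadrant $[p,p]$-kernels. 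Then Theorem~\ref{Christol}(2) (or the Eilenberg characterisation \cite[Theorem~14.4.1]{Allouche--Shallit:2003}) gives $[p,p]$-automaticity of each quadrant iff its $[p,p]$-kernel is finite, closing the loop.

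The main obstacle I anticipate is purely bookkeeping: getting the case analysis for the decorations correct when the reflection in coordinate $m$ forces a carry that toggles the $\sigma_1^{\pm1}$ state, precisely as in the displayed identities $\bar\Lambda_{i,j}(\rho(W))=\sigma^{-1}(\Lambda_{p-i,j}(W))$ for $i\neq0$ in Theorem~\ref{STD Eilenberg}, but now with this happening in two coordinates independently, so there are more cases. There is no conceptual difficulty — each coordinate is an instance of what was already done — but care is needed to track which of the (at most) sixteen decorated states each operator lands in, and to confirm the map between kernels is finite-to-one in both directions so that finiteness transfers. I would therefore present the one-coordinate identities as a lemma (or simply cite the body of the proof of Theorem~\ref{STD Eilenberg}) and then state that the two-coordinate version follows by applying it separately in $m$ and in $n$, writing out only one representative mixed case in full.
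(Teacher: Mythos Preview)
Your proposal is correct and takes essentially the same approach as the paper: the paper's own proof is a single sentence stating that the argument ``follows the same lines as that of \cite[Theorem~5.3.2]{Allouche--Shallit:2003}'' (the one-dimensional analogue), and your plan---applying the reflection/shift bookkeeping of Theorem~\ref{STD Eilenberg} independently in each coordinate---is precisely a fleshed-out two-variable version of that argument. Your write-up is far more detailed than what the paper provides, but the underlying idea is the same.
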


The proof of Proposition~\ref{-p iff p} follows the same lines as that of \cite[Theorem~5.3.2]{Allouche--Shallit:2003}.

\begin{theorem}\label{2D-STDs_are_automatic}
Let $\Phi:\F_p^\Z \rightarrow \F_p^\Z$ be a linear cellular automaton with left and right radii $\ell$ and $r$. Let $U \in \F_p^{\Z \times \Z}$ be a spacetime diagram for $\Phi$. 
If $U|_{ \{i\}\times -\N }$ is $p$-automatic for each $i$ in the interval $-\ell \leq i\leq r-1$ and $U|_{\Z\times \{0\}}$ is $(-p)$-automatic, 
 then
 $U$ is $[-p, -p]$-automatic.
\end{theorem}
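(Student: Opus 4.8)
The plan is to check the four-quadrant criterion of Proposition~\ref{-p iff p}: it suffices to prove that each of $U|_{(\pm\N)\times(\pm\N)}$ is $[p,p]$-automatic. Two of these quadrants lie above the line $n=0$, where the machinery is already in place, and two lie below it, where the new work happens. For the upper half, note that $U|_{\Z\times\N}=\ST_\Phi(U|_{\Z\times\{0\}})$ and that $U|_{\Z\times\{0\}}$ is $(-p)$-automatic, so Corollary~\ref{STD is automatic} says $U|_{\Z\times\N}$ is $[-p,p]$-automatic. Running the $[p,p]$-kernel computation in the proof of Theorem~\ref{STD Eilenberg} (equivalently, the mixed analogue of \cite[Theorem~5.3.2]{Allouche--Shallit:2003}) then shows that $(U_{m,n})_{(m,n)\in\N\times\N}$ and $(U_{-m,n})_{(m,n)\in\N\times\N}$ have finite $[p,p]$-kernel, hence are $[p,p]$-automatic.

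For the lower half I would pass to generating functions in the ``column'' variable. Write $\phi(x)=x^{-r}g(x)$ with $g(x)=\alpha_{-\ell}x^{\ell+r}+\dots+\alpha_r\in\F_p[x]$, so $g(0)=\alpha_r\neq0$ and $g(\ell+r\text{-th coefficient})=\alpha_{-\ell}\neq0$. Set $V_{m,n}=U_{m,-n}$ for $(m,n)\in\Z\times\N$; the spacetime identity $\Phi(U|_{\Z\times\{-n-1\}})=U|_{\Z\times\{-n\}}$ becomes $g(x)\hat V_{n+1}(x)=x^r\hat V_n(x)$, where $\hat V_n(x)=\sum_{m\in\Z}V_{m,n}x^m$. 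Let $G_m(y)=\sum_{n\geq0}V_{m,n}y^n\in\F_p\llbracket y\rrbracket$ be the generating function of the $m$-th column; for $-\ell\leq m\leq r-1$ the hypothesis together with Part~(1) of Theorem~\ref{Christol} makes $G_m(y)$ algebraic over $\F_p(y)$. Multiplying the identity by $y^{n+1}$ and summing over $n\geq0$ gives $(g(x)-x^r y)\sum_{m\in\Z}G_m(y)x^m=g(x)f_u(x)$ with $f_u(x)=\sum_{m\in\Z}u_m x^m$; extracting the coefficient of $x^m$ yields the linear recurrence
\[
	\sum_{k=0}^{\ell+r}\tilde g_k(y)\,G_{m-k}(y)=h_m\qquad(m\in\Z),
\]
where $\tilde g_k(y)=g_k$ for $k\neq r$, $\tilde g_r(y)=g_r-y$, and $h_m=\sum_k g_k u_{m-k}\in\F_p$.

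Now put $H(x,y)=\sum_{m\geq-\ell}G_m(y)x^m$. Summing $x^m$ times the recurrence over $m\geq r$ — exactly the range on which every index $m-k$ is $\geq-\ell$ — produces a functional equation $(g(x)-x^r y)H(x,y)=R(x,y)+S(x)$, where $R(x,y)$ is a Laurent polynomial in $x$ whose coefficients are $\F_p[y]$-combinations of $G_{-\ell}(y),\dots,G_{r-1}(y)$ (each algebraic over $\F_p(y)$), and $S(x)=\sum_{m\geq r}h_m x^m$ differs by a polynomial from $g(x)\sum_{m\geq0}u_m x^m$, hence is algebraic over $\F_p(x)$ since $(u_m)_{m\geq0}$ is $p$-automatic. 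So $T\colonequal R+S$ is algebraic over $\F_p(x,y)$. Because $g(0)\neq0$, the polynomial $g(x)-x^r y$ is a unit of $\F_p\llbracket x\rrbracket\llbracket y\rrbracket$, and $H,T\in x^{-\ell}\F_p\llbracket x\rrbracket\llbracket y\rrbracket$, so $H=(g(x)-x^r y)^{-1}T$ is uniquely determined; and since $g(x)-x^r y$ is a nonzero element of $\F_p[x,y]$, this forces $H$ to be algebraic over $\F_p(x,y)$. Finally $G^+(x,y)\colonequal\sum_{m\geq0}G_m(y)x^m=H(x,y)-\sum_{m=-\ell}^{-1}G_m(y)x^m$ is algebraic over $\F_p(x,y)$, and $G^+(x,y)=\sum_{(m,n)\in\N\times\N}U_{m,-n}x^m y^n$, so Part~(2) of Theorem~\ref{Christol} gives that $U|_{\N\times(-\N)}$ is $[p,p]$-automatic. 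The quadrant $U|_{(-\N)\times(-\N)}$ is handled by the mirror-image argument, propagating the recurrence leftward (using that $\tilde g_{\ell+r}$ has nonzero constant term $\alpha_{-\ell}$) and using that $(u_{-m})_{m\geq0}$ is $p$-automatic; equivalently, apply the same computation with $g(x)$ and $x^{-r}$ replaced by the reversed polynomial $x^{\ell+r}g(1/x)$ and $x^{-\ell}$. Feeding the four $[p,p]$-automaticity statements into Proposition~\ref{-p iff p} gives that $U$ is $[-p,-p]$-automatic.

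The main obstacle is conceptual rather than computational: the rows of a general $\Z\times\Z$-indexed spacetime diagram are honest bi-infinite sequences, not Laurent series, so one cannot simply invert $\phi$ to evolve backward in time — the backward evolution is multivalued, and the $\ell+r$ prescribed columns are precisely what pin down the correct branch. The device that makes this tractable is the column recurrence above: restricting to a half-line of columns converts those prescribed columns into finitely many boundary terms, and on that half-line $g(x)-x^r y$ becomes invertible, which is exactly what lets $H$ — and therefore each lower quadrant — be written as an algebraic power series. Carrying out the bookkeeping of the boundary contributions $R$ and $S$, and verifying their algebraicity, is the bulk of what remains.
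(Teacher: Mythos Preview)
Your proof is correct and follows essentially the same strategy as the paper: reduce to the four quadrants via Proposition~\ref{-p iff p}, dispatch the upper half with Corollary~\ref{STD is automatic} and Theorem~\ref{STD Eilenberg}, and for each lower quadrant express the generating function as an algebraic series divided by (a unit multiple of) $1-\phi(x)y$, then invoke Christol. Your column-recurrence bookkeeping and the substitution $V_{m,n}=U_{m,-n}$ are minor reorganisations of the paper's direct summation of the spacetime identity over $m\geq 0$, $n\leq -1$; both routes isolate exactly the boundary columns $-\ell\leq i\leq r-1$ and the row $n=0$ as the algebraic data, and arrive at the same conclusion.
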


\begin{proof} 
By Lemma~\ref{initial conditions for LCA}, $U$ is uniquely determined by its values on $(\Z\times \{0\}) \cup \bigcup_{i=-\ell}^r (\{i\}\times -\N)$.
By Proposition~\ref{-p iff p}
it is sufficient to show that each of the four quadrants $U|_{(\pm\N) \times (\pm \N)}$ is $[p,p]$-automatic. 

By Corollary~\ref{STD is automatic}, $U|_{\Z \times \N}$ is $[-p,p]$-automatic. By Theorem~\ref{STD Eilenberg}, $U|_{\Z \times \N}$ has a finite $[p,p]$-kernel. Thus each of $U|_{\pm \N \times \N}$ has a finite $[p,p]$-kernel. By Theorem~\ref{Eilenberg shear} with $r=0$, each of $U|_{\pm \N \times \N}$ is $[p,p]$-automatic.

We show that $U|_{\N \times -\N}$ is $[p,p]$-automatic; the automaticity of $U|_{-\N \times -\N}$ follows by a similar argument.
Let $\phi(x) = \alpha_{-\ell}x^{\ell} + \dots +\alpha_0 + \dots +\alpha_r x^{-r}$ be the generating polynomial of $\Phi$.
For $S\subseteq \Z\times \Z$, let $F|_S$ denote the generating function of $U|_S$.
Since $U$ is a spacetime diagram for $\Phi$, we have $U_{m,n+1}-\sum_{i=-\ell}^r \alpha_i U_{m+i,n} = 0$ for each $(m,n)\in \Z\times \Z$.
Multiplying by $x^{m}y^{n+1}$ and summing over $m\geq 0$ and $n\leq -1$ gives
\begin{align*}
	0
	&= \sum_{\substack{m\geq 0 \\ n\leq -1} }U_{m,n+1}x^m y^{n+1} - \sum_{\substack{m\geq 0 \\ n\leq -1}}\sum_{i=-\ell}^r \alpha_i U_{m+i, n} x^{m}y^{n+1}\\ 
	&= F|_{\N \times -\N} - \sum_{i=-\ell}^r\alpha_i x^{-i}y\left(\sum_{\substack{m\geq 0 \\ n\leq -1}}U_{m+i,n} x^{m+i}y^{n}\right) \\
 	&= F|_{\N \times -\N} - \sum_{i=-\ell}^{-1}\alpha_i x^{-i}y \left(
		\sum_{k=-\ell }^{i} F|_{\{k\}\times -\N}
		+ F|_{\N \times -\N}
		- F|_{\N \times \{0\}}
		- P_i(x)
	\right) \\
	&\phantom{00}
	- \alpha_0y \left(
		F|_{\N \times -\N}
		- F|_{\N \times \{0\}}
	\right)
	- \sum_{i=1}^{r - 1}\alpha_i x^{-i}y \left(
		F|_{\N \times -\N}
		- \sum_{k=0 }^{i-1} F|_{ \{ k \}\times -\N }
		- F|_{\N \times \{0\}}
		+ P_i(x)
	\right) \\
	&= (1-\phi(x)y) F|_{\N \times -\N}
		+ \phi(x) y F|_{\N \times \{0\}} \\
		&\phantom{00}
		- \sum_{i=-\ell}^{-1}\alpha_i x^{-i}y \left( \sum_{k=-\ell }^{i} F|_{ \{ k \}\times -\N } \right)
		+ \sum_{i=1}^{r - 1}\alpha_i x^{-i}y \left( \sum_{k=0 }^{i-1} F|_{ \{ k \}\times -\N } +P_i(x)\right),
\end{align*}
where $P_i(x)$ are Laurent polynomials to account for over- and under-counting.
Since each $U|_{ \{ k \}\times -\N }$ and $U|_{\N \times \{0\}}$ is automatic, each $F|_{ \{ k \}\times -\N }$ and $F|_{\N \times \{0\}}$ are algebraic by Part~(1) of Theorem~\ref{Christol}.
Hence
\[
	F|_{\N \times -\N}
	= \frac{
		G(x, y)
	}{
		1 - \phi(x) y
	}
\]
where $G(x, y)$ is algebraic.
Therefore $F|_{\N \times -\N}$ is algebraic, and $U|_{\N \times -\N}$ is $[p,p]$-automatic by Part~(2) of Theorem~\ref{Christol}.
\end{proof}

\begin{example}
Consider the Ledrappier cellular automaton with $\phi(x)=1+x^{-1}$, and let 
\begin{align*}
	L_{1} &= \N \times \{0\} \\
	 L_{2} &= \{0\} \times -\N
\end{align*}
so that $U|_{L_{1} \cup L_{2}}$ determines $U|_{\N\times -\N}$ for $\Phi$.

We have $U_{m,n}+U_{m+1,n}-U_{m,n+1}=0$ for each $(m,n)\in \Z \times \Z$, so, following the proof and notation of Theorem~\ref{2D-STDs_are_automatic}, we have
\[
	0
	= F|_{\N\times -\N} - y \left( F|_{\N\times -\N} - F|_{ L_1} \right)
	- x^{-1}y\left( F|_{\N\times -\N} - F|_{ L_1}- F|_{ L_2}+U_{0,0} \right)
\]
and therefore
\[ F|_{\N\times -\N} = \frac{ x^{-1}yU_{0,0} -(1+x^{-1})y F|_{ L_1} -x^{-1}yF|_{ L_2}}{1-(1+x^{-1})y}.
\]
If $F|_{ L_1}$ and $F|_{ L_2}$ are both algebraic, then $F|_{\N\times -\N}$ is also.
\end{example}

As we converted the $[p, p]$-kernel to the $[-p, p]$-kernel in Theorem~\ref{STD Eilenberg}, one can also convert the $[-p, p]$-kernel of a spacetime diagram in Theorem~\ref{2D-STDs_are_automatic} to the $[-p, -p]$-kernel.
For example, this enables one to compute a $[-p, -p]$-automaton for the spacetime diagram in Figure~\ref{Z x Z}.

\section{Invariant sets for linear cellular automata}\label{dynamics}

In this section and the next we apply the automaticity of spacetime diagrams, as shown in Corollary~\ref{STD is automatic} and Theorem~\ref{2D-STDs_are_automatic}, to two related questions in symbolic dynamics. We consider the $\Z\times \Z$-dynamical system $(\F_p^\Z,\sigma, \Phi)$ generated by the left shift map $\sigma$ and a linear cellular automaton $\Phi$, and we find closed subsets of $\F_p^\Z$
which are invariant under both $\sigma$ and $\Phi$. In Section~\ref{Invariant measures for linear cellular automata} we find nontrivial measures $\mu$ on $\F_p^\Z$ that are invariant under the action of $\sigma$ and $\Phi$.

By a simple transfer principle, these questions can be approached by considering dynamical systems generated by spacetime diagrams $U$ for $\Phi$.
Given a spacetime diagram $U$, one considers the subshift $(X_U,\sigma_1,\sigma_2)$, a $\Z\times \Z$-dynamical system generated by $U$; this is defined in Section~\ref{basic}. If $U$ is automatic, then $X_U$ is small in the sense of Theorem~\ref{complexity}.

The maps $\sigma$ and $\Phi$ do not exhibit the topological rigidity that Furstenberg's setting yields, as mentioned in the Introduction.
An example of a $(\sigma, \Phi)$-invariant set was first pointed out by Kitchens and Schmidt~\cite[Construction 5.2]{Kitchens--Schmidt-1992} and elaborated by Einsiedler~\cite{Einsiedler-2004}.
In Theorem~\ref{invariant} we identify a large family of $(\sigma, \Phi)$-invariant sets, and we
discuss the relationship between our invariant sets and those that are obtained by the method in \cite{Kitchens--Schmidt-1992}.

\subsection{Subshifts generated by $[-p,-p]$-automatic spacetime diagrams}\label{basic}

In this section we set up the necessary background, define subshifts generated by a spacetime diagram, and show that the subshift generated by an automatic spacetime diagram is small but infinite.
We also define substitutions, linking them to automaticity.

We equip $\F_p$ with the discrete topology and the sets $\F_p^\Z$ and $\F_p^{\Z \times \Z}$ with the metrisable product topology, noting that with this topology they are compact.
Let $\sigma_1: \F_p^{\Z\times \Z}\rightarrow \F_p^{\Z\times \Z}$ denote the left shift map $(U_{m,n})_{(m,n) \in \Z \times \Z} \mapsto (U_{m+1,n})_{(m,n) \in \Z \times \Z}$,
and let $\sigma_2: \F_p^{\Z\times \Z}\rightarrow \F_p^{\Z\times \Z}$ denote the down shift map $(U_{m,n})_{(m,n) \in \Z \times \Z} \mapsto (U_{m,n+1})_{(m,n) \in \Z \times \Z}$.
With the notation of Section~\ref{Linear cellular automata}, applying the left shift (down shift) to a sequence is equivalent to multiplying its generating function by $x^{-1}$ ($y^{-1}$).

\begin{definition}
Let $S$ and $T$ be transformations on $X$.
A set $Z\subset X$ is {\em $T$-invariant} if $T(Z)\subset Z$, and $Z$ is \emph{$(S,T)$-invariant} if it is both $S$- and $T$-invariant.
A \emph{(two-dimensional) subshift} $(X, \sigma_1,\sigma_2)$ is a dynamical system with $X$ a closed, $\sigma_1$- and $\sigma_2$-invariant subset of $\F_p^{\Z\times \Z}$.
\end{definition}

We can similarly define a one-dimensional subshift $(X, \sigma)$: here $X$ is a closed, $\sigma$-invariant subset of $\F_p^\Z$ and $\sigma$ is the left shift map.
We call $X$ the {\em shift space}.
 
Let $S\subseteq \Z \times \Z$ be a rectangle $[m_1, m_2] \times [n_1, n_2]$.
A \textit{word on $S$} is a map $w : S \to \F_p$.
These words are higher-dimensional analogues of words in one dimension, i.e.\ those indexed by a finite interval in $\Z$.
If $U\in \F_p^{\Z\times \Z}$, then $U|_S$ is the word $(U_{m, n})_{(m, n) \in S}$, and we say that the word $U|_S$ \textit{occurs} in $U$. 
Given a configuration $U\in \F_p^{\Z\times \Z}$, the \emph{language~$\mathcal{L}_U$} of $U$ is the set of all words that occur in $U$.
The \emph{language~$\mathcal{L}_X$} of a shift space $X$ is the set of all words that occur in some configuration $U\in X$.
A {\em subword} of the word $w : S \to \F_p$ is a restriction of $w$ to some rectangular $S' \subseteq S$.
The language $\mathcal L_X$ is closed under the taking of subwords, and every word in the language is extendable to a configuration in $X$.
Conversely, a language~$\mathcal{L}$ on~$\F_p$ which is closed under the taking of subwords defines a (possibly empty) subshift $(X_\mathcal{L}, \sigma_1,\sigma_2)$, where $X_\mathcal{L}$ is the set of configurations all of whose subwords belong to~$\mathcal L$.

Note that we can also define the language of an $\N\times \N$- or $\Z\times \N$-configuration $U$ and, in an analogous manner, of the $\Z\times \Z$-subshift $(X_U,\sigma_1, \sigma_2)$.

Let $U$ be a two-dimensional configuration. Recall the complexity function $c_U:\N\times \N \rightarrow \N$, where $c_U(m,n)$ is the number of distinct $m \times n$ words that occur in $U$.
We remark that the second statement of the following theorem can be improved but is sufficient for our purposes.

\begin{theorem}\label{complexity}
\leavevmode
\begin{enumerate}
\item
If the sequence $U\in \F_p^{\N \times \N}$ is $[p, p]$-automatic, then for some $K$, its complexity function satisfies $c_U(m,n) \leq K \max\{m,n \}^2$.
\item
If the sequence $U\in \F_p^{\Z \times \Z}$ is $[-p,-p]$-automatic, then for some $K$, its complexity function satisfies $c_U(m,n) \leq K \max\{m,n \}^{10}$.
\end{enumerate}
 \end{theorem}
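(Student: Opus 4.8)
The plan is to handle the two parts in turn, deducing part (2) from part (1).

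\emph{Part (1).} I would fix a DFAO $(\mathcal S,\{0,\dots,p-1\}^2,\delta,s_0,\F_p,\omega)$ generating $U$, and use that it reads base-$p$ digit pairs least-significant digit first. Given $m,n$, let $k$ be least with $p^k\ge\max\{m,n\}$, so that $p^k<p\max\{m,n\}$. For $a,b\ge 0$ call $B_{a,b}\colonequal(U_{p^ka+i,\,p^kb+j})_{0\le i,j<p^k}$ a \emph{block}. Because digits are read low-order first, computing $U_{p^ka+i,p^kb+j}$ amounts to first running the length-$k$ digit-pair representation of $(i,j)$ from $s_0$ — reaching a state $t_{i,j}$ that depends only on $k$ — and then running the digit-pair representation of $(a,b)$; hence $B_{a,b}$ is determined by the map $\mathcal S\to\mathcal S$, $s\mapsto\delta(s,(a,b))$, so there are at most $|\mathcal S|^{|\mathcal S|}$ distinct blocks. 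Any $m\times n$ window $W=[a,a+m-1]\times[b,b+n-1]$ of $U$ lies inside the $2\times2$ grid of blocks $B_{q_1+s,q_2+t}$ with $s,t\in\{0,1\}$, where $a=p^kq_1+r_1$ and $b=p^kq_2+r_2$, $0\le r_1,r_2<p^k$; thus $U|_W$ is determined by those four blocks together with the offset $(r_1,r_2)$, and
\[
	c_U(m,n)\le|\mathcal S|^{4|\mathcal S|}\,p^{2k}\le\bigl(|\mathcal S|^{4|\mathcal S|}p^2\bigr)\max\{m,n\}^2 .
\]

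\emph{Part (2).} Since $U$ is $[-p,-p]$-automatic, Proposition~\ref{-p iff p} gives that each of the four reindexed quadrant restrictions of $U$ is $[p,p]$-automatic, so by part (1) there is a single constant $K_1$ with complexity $\le K_1\max\{m,n\}^2$ for all four. Now take an $m\times n$ window $W=[a,a+m-1]\times[b,b+n-1]$ of $U$, let $c$ be the number of columns of $W$ lying on the negative side of the first axis and $d$ the number of rows on the negative side of the second, and note these are exactly the $c$ lowest-indexed columns and $d$ lowest-indexed rows of $W$. Hence, in local coordinates, $U|_W$ decomposes as a $2\times2$ array of sub-rectangles of sizes $c\times d$, $(m-c)\times d$, $c\times(n-d)$, $(m-c)\times(n-d)$, each a factor of one of the four quadrant sequences (or empty). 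So $U|_W$ is determined by the pair $(c,d)$ — at most $(m+1)(n+1)$ choices — together with these four factors, for which there are at most $1+K_1\max\{m,n\}^2$ choices each (including the empty one). Multiplying,
\[
	c_U(m,n)\le(m+1)(n+1)\bigl(1+K_1\max\{m,n\}^2\bigr)^4\le K\max\{m,n\}^{10}
\]
for a suitable $K$ depending only on $K_1$.

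\emph{Where the care is needed.} There is no deep obstacle here; the statement is deliberately coarse. The two places to be careful are: in part (1), verifying that reading digits low-order first genuinely lets one factor out the bottom $k$ digit pairs into a state $t_{i,j}$ independent of $a,b$, which is what bounds the number of blocks; and in part (2), checking that the negative columns and rows of a window always form an initial segment of its index range, so that $(c,d)$ plus the four quadrant factors really do reconstruct $U|_W$. One should also fix a convention for $-\N$ at the coordinate axes, but any single-row/column discrepancy there is absorbed into the constant $K$.
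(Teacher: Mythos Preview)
Your proposal is correct and follows essentially the same approach as the paper: part~(1) is the standard block-counting argument (which the paper simply cites from \cite[Corollary~14.3.2]{Allouche--Shallit:2003}), and part~(2) is the same quadrant decomposition the paper uses, parametrising by the split position and bounding each of the four sub-rectangles by the quadrant complexity. Your bookkeeping in part~(2) is in fact slightly cleaner than the paper's crude sum, but the idea and the resulting exponent~$10$ are identical.
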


\begin{proof}
The proof of Part~(1) is in \cite[Corollary~14.3.2]{Allouche--Shallit:2003}.
See also \cite{Allouche-Berthe-1997} and \cite{Berthe-2000}.

To see Part~(2), we recall first that, by Proposition~\ref{-p iff p}, each of $U|_{\pm \N\times \pm \N}$ is $[p,p]$-automatic, so by Part~(1), for each of them there exists a constant $K_{\pm \N \times \pm \N}$ such that $c_{U|_{\pm \N\times \pm \N}}(m,n) \leq K_{\pm \N\times \pm \N} \max\{m,n \}^2$. Let $K^*$ be the maximum of the four constants $K_{\pm \N\times \pm \N}$ and let $K\colonequal(K^*)^4$.
Let $w$ be a rectangular $m \times n$ word that occurs in $U$. If each occurrence of $w$ is entirely contained in one of the quadrants $\pm \N\times \pm \N$, then $w$ is counted by the complexity of $U$ restricted to that quadrant, and this count 
is bounded above by $ K \max\{m,n \}^2$.
Otherwise, either $S$ is partitioned into two rectangles, each of which lies in a distinct quadrant, or $S$ is partitioned into four rectangles lying in distinct quadrants. The worst case is when $S$ is a concatenation of four subrectangles, so we assume this.
There are at most $K\sum_{i=1}^{m}\sum_{j=1}^{n} \max\{ i,j\}^2 \max\{ i,n-j\}^2\max\{ m-i,j\}^2\max\{m-i,m-j\}^2$ of these subrectangles, and a crude upper estimate tells us that there are at most $K\max\{m,n\}^{10}$ such words.
\end{proof}

Theorem~\ref{complexity} tells us the languages generated by $[-p,-p]$-automatic configurations are small. On the other hand, provided that the initial conditions generating $U$ are not periodic, we now also show that they are not too small.

Let $f_u(x) = \sum_{m \in \Z} u_m x^m$ be the generating function of $u\in \F_p^\Z$ and let $F_U(x) = \sum_{m \in \Z, n\in \Z} U_{m,n} x^m y^n$ be the generating function of $U\in \F_p^{\Z\times \Z}$.
Recall that the configuration $u $ is \emph{periodic} if $ x^{-i} f_u(x)=f_u(x)$ for some $i \ge 1$ and {\em nonperiodic} otherwise.
Similarly the configuration $U $ is {\em periodic} if there exists $(i, j) \neq (0, 0)$ such that $x^{-i} y^{-j} F_U(x,y) = F_U(x,y)$ and {\em nonperiodic} otherwise.
We say that $(u_m)_{m \geq 0}$ is {\em eventually periodic} if $(x^{-i} f_u(x))|_\N$ is periodic for some $i \geq 0$.

\begin{proposition}\label{nonperiodic}
Let $u \in \F_p^\Z$ be $(-p)$-automatic,
let $\Phi:\F_p^\Z \rightarrow \F_p^\Z$ be a linear cellular automaton whose generating polynomial is neither $0$ nor a monomial, and let $U\in \F_p^{\Z \times \Z}$ be a spacetime diagram for $\Phi$ with $U|_{\Z \times \{ 0\}}=u$.
If $(u_m)_{m\geq 0} $ is not eventually periodic, then $U$ is nonperiodic.
\end{proposition}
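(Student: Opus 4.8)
\emph{Proof idea.}
The plan is to argue by contradiction, so suppose $U$ is periodic: there is $(i,j)\neq(0,0)$ with $x^{-i}y^{-j}F_U(x,y)=F_U(x,y)$, equivalently $U_{m+i,n+j}=U_{m,n}$ for all $(m,n)\in\Z\times\Z$. If $j=0$ then $i\neq0$, and evaluating the relation at $n=0$ gives $u_{m+i}=u_m$ for all $m$; thus $u$ is spatially periodic, so in particular $(u_m)_{m\geq0}$ is eventually periodic, contrary to hypothesis. So assume $j\neq0$; since the periodicity relation is symmetric under $(i,j)\mapsto(-i,-j)$, we may take $j\geq1$.

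Let $\phi$ be the generating polynomial of $\Phi$, let $f_u(x)=\sum_{m\in\Z}u_mx^m$, and for $n\geq0$ let $R_n(x)$ denote the generating function of the $n$-th row of $U$. Since rows of a spacetime diagram for $\Phi$ evolve by $R_{n+1}(x)=\phi(x)R_n(x)$, we get $R_j(x)=\phi(x)^jf_u(x)$. On the other hand, the periodicity relation at $n=0$ says $U_{m,j}=u_{m-i}$, i.e.\ $R_j(x)=x^if_u(x)$. Therefore
\[
	\bigl(\phi(x)^j-x^i\bigr)\,f_u(x)=0,
\]
an identity of bi-infinite formal series (this makes sense because $g(x)\colonequal\phi(x)^j-x^i$ is a Laurent polynomial).

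Next I would rule out the degenerate shapes of $g$. If $g=0$ then $\phi^j=x^i$ is a unit of $\F_p[x,x^{-1}]$; since this ring is a UFD whose units are precisely the monomials $cx^d$, the factor $\phi$ is itself a monomial, contradicting the hypothesis. If $g$ is a nonzero monomial, then $g(x)f_u(x)=0$ forces $f_u=0$, so $u$ is the zero configuration and $(u_m)_{m\geq0}$ is (eventually) periodic, again a contradiction. Hence $g$ has at least two nonzero terms; write $g(x)=\sum_ag_ax^a$ and let $a_{\min}<a_{\max}$ be the extreme exponents in its support. Reading off the coefficient of $x^t$ in $g(x)f_u(x)=0$ for every $t$ large enough that all indices $t-a$, $a$ in the (finite) support of $g$, are nonnegative, we obtain $g_{a_{\min}}u_{t-a_{\min}}=-\sum_{a>a_{\min}}g_au_{t-a}$ with $g_{a_{\min}}\neq0$. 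Thus $(u_m)_{m\geq0}$ satisfies a nontrivial linear recurrence with constant coefficients in $\F_p$ from some index on, and since $\F_p$ is finite, the sequence of its state vectors is eventually periodic; hence $(u_m)_{m\geq0}$ is eventually periodic, the contradiction we sought. (Equivalently, split $f_u=f^++f^-$ into its parts supported on $\N$ and on $-\N$ and write $g=x^kh$ with $h\in\F_p[x]$, $h(0)\neq0$; comparing supports, the identity forces $h(x)f^+(x)$ to be a polynomial, so $f^+$ is a rational function and $(u_m)_{m\geq0}$ is eventually periodic.)

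The step I expect to be the crux is the passage from the identity $g(x)f_u(x)=0$ to a statement about the one-sided sequence $(u_m)_{m\geq0}$. One cannot simply cancel $f_u$: the $\F_p[x,x^{-1}]$-module of bi-infinite sequences is not torsion-free — for instance $(1-x)\cdot\sum_{m\in\Z}x^m=0$ — so a nonzero Laurent polynomial can annihilate a nonzero bi-infinite series. The way around this is to localise the identity to sufficiently large positive indices, where it becomes a genuine (reversible) linear recurrence on $(u_m)_{m\geq0}$. The remaining ingredients — that the units of $\F_p[x,x^{-1}]$ are exactly the monomials, and that a linearly recurrent sequence over a finite field is eventually periodic — are standard.
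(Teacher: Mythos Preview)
Your argument is correct and follows essentially the same route as the paper's proof: assume a period $(i,j)$, reduce to the identity $(\phi(x)^{j}-x^{i})f_u(x)=0$ between bi-infinite series, observe that the Laurent polynomial on the left is nonzero (using that $\phi$ is not a monomial), and read off a linear recurrence forcing $(u_m)_{m\geq0}$ to be eventually periodic. You are more explicit than the paper in separating the case $j=0$, in justifying why $\phi^{j}=x^{i}$ is impossible via the unit structure of $\F_p[x,x^{-1}]$, and in flagging that bi-infinite series are not torsion-free over $\F_p[x,x^{-1}]$ so one must pass to large positive indices rather than cancel; the paper compresses all of this into a single line.
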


\begin{proof}
Suppose that $U$ is periodic.
Then there is $(i, j) \neq (0, 0)$ such that $x^{-i} y^{-j} F_U(x, y) = F_U(x, y)$.
We can assume without loss of generality that $-j \geq 0$.
We have $x^i F_U(x,y) = y^{-j} F_U(x,y)$. Restricting to $\Z\times \{ 0\}$, we get $x^i f_{u}(x) = \phi(x)^{-j} f_{u}(x)$, where $\phi(x)$ is the generating polynomial of $\Phi$.
In other words $\left(\phi(x)^{-j} - x^i\right) f_u(x) = 0$, where by assumption $\phi(x)^{-j} - x^i \neq 0$.
Thus $(u_m)_{m \geq \min\{i, r j\}}$ satisfies a linear recurrence and hence is eventually periodic.
\end{proof}

\begin{corollary}
Under the conditions of Proposition~\ref{nonperiodic},
if $(u_m)_{m\geq 0} $ is not eventually periodic, then $c_U(m, n) > m n$ for each $m$ and $n\in \N$.
\end{corollary}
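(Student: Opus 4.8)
The plan is to bootstrap from the nonperiodicity of $U$, which Proposition~\ref{nonperiodic} supplies, to the full quadratic bound by using the cellular automaton structure in an essential way; the statement $c_U(m,n)>mn$ for an arbitrary nonperiodic two-dimensional configuration is Nivat's conjecture and is open, so nonperiodicity alone cannot be enough.

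The first step is a reduction of the two-dimensional complexity to a one-dimensional one. Since $\Phi$ is deterministic, the $m\times n$ window of $U$ based at $(k,t)$ is the forward light-cone image of the length-$(m+(n-1)(\ell+r))$ window of row $t$ lying above it; since $\Phi$ is $\F_p$-linear with $\alpha_{-\ell}\neq 0$ and $\alpha_r\neq 0$, this row-$t$ window can conversely be recovered from the $m\times n$ window by propagating the relations $U_{m,n+1}=\sum_i\alpha_iU_{m+i,n}$ outward, a triangular system that becomes solvable once $m\ge \ell+r$. This identifies the $m\times n$ words of $U$ with the length-$(m+(n-1)(\ell+r))$ words of $X_0\colonequal\overline{\{\sigma^a(U|_{\Z\times\{t\}}):a,t\in\Z\}}$, the orbit closure of the rows of $U$ (a $\sigma$- and $\Phi$-invariant subshift containing $u$), so that
\[
	c_U(m,n)=c_{X_0}\bigl(m+(n-1)(\ell+r)\bigr)\qquad\text{whenever }m\ge\ell+r.
\]
The finitely many thin cases $1\le m<\ell+r$, as well as the cases $n=1$, are handled separately, using the horizontal and vertical nonperiodicity of $U$ (a sheared variant of the reduction, or a direct count).

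Given the reduction, the corollary for $m\ge\ell+r$ becomes the assertion $c_{X_0}(L)>mn$ for every factorisation $L=m+(n-1)(\ell+r)$ with $m\ge\ell+r$; maximising $mn$ along this line shows this is equivalent to a quantitative quadratic lower bound, of the shape $c_{X_0}(L)>\tfrac{(L+\ell+r)^2}{4(\ell+r)}$. To prove it I would combine three ingredients: (i) $u$ is not eventually periodic, so $c_u(L)\ge L+1$ for all $L$ by the Morse--Hedlund theorem; (ii) $\phi$ is not a monomial, so the iterates $\Phi^tu$ (which are rows of $U$) spread the data of $u$ across growing scales — via the Frobenius identity $\phi(x)^{p^j}=\phi(x^{p^j})$ the row $\Phi^{p^j}u$ superimposes shifts of $u$ by multiples of $p^j$, and summing the length-$L$ patterns contributed by the rows at the admissible times yields a genuinely quadratic count; (iii) a Morse--Hedlund argument internal to $X_0$: if $c_{X_0}$ grew too slowly then some strip of $U$ would be periodic, forcing $u$ to be eventually periodic, a contradiction. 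Here (i)--(iii) together control both the size of the relevant window alphabet and the way the rows populate it.

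The main obstacle is exactly this quantitative estimate: obtaining a quadratic lower bound on $c_{X_0}$ with a good enough constant, rather than merely showing $c_{X_0}$ is unbounded or quadratic up to an unspecified factor. The delicate point is bounding the overcounting — different pairs (time $t$, initial-row window) producing the same length-$L$ block of a row of $U$ — so that the rows at times $t=0,1,\dots$ are seen to contribute enough distinct patterns; this is where most of the work, and the real use of the automaticity of $u$, will be.
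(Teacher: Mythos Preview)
Your proposal has a genuine gap: you correctly isolate the quadratic lower bound on $c_{X_0}$ as the crux, but you do not prove it, and the ingredients (i)--(iii) you list do not add up to an argument. Ingredient (iii), a Morse--Hedlund-style obstruction, would at best rule out linear growth of $c_{X_0}$, not establish the sharp bound $c_{X_0}(L)>\frac{(L+\ell+r)^2}{4(\ell+r)}$ with the correct constant; ingredient (ii) is a heuristic (``yields a genuinely quadratic count'') rather than an estimate, and you yourself flag the overcounting problem without resolving it. You are, in effect, proposing to reprove from scratch a nontrivial theorem and stopping at the hard step.

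The paper's proof is one sentence: it cites a result of Kari and Moutot, who show that Nivat's conjecture holds for $\Z\times\Z$-indexed spacetime diagrams of linear cellular automata --- if $c_U(m,n)\le mn$ for some $m$ and $n$, then $U$ is periodic. Combined with Proposition~\ref{nonperiodic}, which gives nonperiodicity of $U$, this is the whole argument. Your light-cone reduction of $c_U(m,n)$ to a one-dimensional complexity is a natural first step toward such a result, so you are not on the wrong track; but the remaining work --- exactly the quantitative estimate you identify as the main obstacle --- is what Kari and Moutot supply and what your proposal leaves open.
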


\begin{proof}
This follows directly from \cite[Corollary~9 and the remark following it]{Kari-Moutot}, where Kari and Moutot show that Nivat's conjecture holds for $\Z \times \Z$-indexed spacetime diagrams $U$ of a linear cellular automaton:
If $c_U(m,n) \leq mn$ for some $m$ and $n$, then $U$ is periodic.
\end{proof}

We remark that in \cite{Quas-Zamboni-2004} and \cite{Cyr-Kra-2015} there are more general but less sharp results concerning Nivat's conjecture.

Let $\Phi:\F_p^\Z\rightarrow \F_p^\Z$ be a linear cellular automaton, and let $U$ in $\F_p^{\Z \times \Z}$ or $\F_p^{\Z \times \N}$ be a spacetime diagram for $\Phi$. Define 
\[
	X_{U}\colonequal\{ V\in \F_p^{\Z\times \Z}: \mathcal L_V\subseteq \mathcal L_U \}.
\]
We call $(X_{U}, \sigma_1, \sigma_2)$ the {\em $\Z \times \Z$-subshift defined by $U$}. 
We consider spacetime diagrams $U\in \F_p^{\Z\times \Z}$ which are $[-p,-p]$-automatic. By Theorem~\ref{2D-STDs_are_automatic}, we obtain these once we choose automatic sequences as initial conditions, in $U|_{\{i\}\times -\N}$, for $-\ell \leq i \leq r-1$, in $U|_{-\N \times \{0\}}$, and in $U|_{\N \times \{0\}}$.

\begin{lemma}\label{shift elements are spacetime diagrams}
Let $\Phi:\F_p^\Z\rightarrow \F_p^\Z$ be a linear cellular automaton, let $U\in \F_p^{\Z \times \Z}$ be a spacetime diagram for $\Phi$, and let $(X_{U},\sigma_1,\sigma_2)$ be the $\Z \times \Z$-subshift defined by $U$. Then every element of $ X_{U}$ is a spacetime diagram for $\Phi$.
\end{lemma}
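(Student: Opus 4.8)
The plan is to show that the defining property of a spacetime diagram — namely that each row is the image of the previous row under $\Phi$ — is a \emph{closed} condition, and that it passes from $U$ to every element of $X_U$ by a language argument. Concretely, fix $V \in X_U$; I need to check that $\Phi(V|_{\Z \times \{n\}}) = V|_{\Z \times \{n+1\}}$ for every $n \in \Z$. Since $\Phi$ has local rule $f$ depending only on coordinates $m-\ell, \dots, m+r$, the identity $(\Phi(V|_{\Z\times\{n\}}))_m = V_{m,n+1}$ is equivalent to the single constraint
\[
	f(V_{m-\ell,n}, \dots, V_{m+r,n}) = V_{m,n+1},
\]
which involves only the finitely many entries of $V$ in the rectangle $S = [m-\ell, m+r] \times [n, n+1]$.

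The key step is then: this local constraint holds for $V$ at every $(m,n)$ because it holds for the word $V|_S \in \mathcal L_V \subseteq \mathcal L_U$, and every word in $\mathcal L_U$ occurs in $U$, which \emph{is} a spacetime diagram. In detail: given $(m,n)$, the word $w = V|_S$ lies in $\mathcal L_V$, hence in $\mathcal L_U$; so there exist $(m',n')$ with $U|_{[m'-\ell,m'+r]\times\{n',n'+1\}} = w$ (after the appropriate translation identifying the two $(\ell+r+1)\times 2$ rectangles). Because $U$ is a spacetime diagram for $\Phi$, the entries of $U$ on that rectangle satisfy $f(U_{m'-\ell,n'},\dots,U_{m'+r,n'}) = U_{m',n'+1}$; transporting this equality through the translation gives $f(V_{m-\ell,n},\dots,V_{m+r,n}) = V_{m,n+1}$, as required. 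Ranging over all $(m,n) \in \Z \times \Z$ shows $\Phi(V|_{\Z\times\{n\}}) = V|_{\Z\times\{n+1\}}$ for all $n$, i.e.\ $V$ is a spacetime diagram for $\Phi$.

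I do not anticipate a serious obstacle here; the only thing to be careful about is the bookkeeping of translations, i.e.\ making sure that ``a word of $U$'' and ``a word of $V$'' are compared with the same indexing rectangle up to a shift, which is exactly what $\mathcal L_V \subseteq \mathcal L_U$ and shift-invariance of $\mathcal L_U$ provide. Alternatively, and even more cleanly, one can phrase the whole thing topologically: the set of spacetime diagrams for $\Phi$ is closed (it is the preimage of $0$ under the continuous map $V \mapsto (1 - \phi(x)y)F_V(x,y)$, or the intersection over $(m,n)$ of the clopen sets cut out by the local rule), it is $\sigma_1$- and $\sigma_2$-invariant since $\Phi$ commutes with $\sigma$, it contains $U$, and hence it contains the smallest such set, which contains $X_U$. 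Either route works; I would present the local-rule/language version since it is self-contained and does not require reintroducing the generating-function formalism.
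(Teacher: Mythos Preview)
Your proof is correct and is essentially the same as the paper's: both arguments observe that the local rule of $\Phi$ depends only on a finite rectangle, and then use $\mathcal L_V \subseteq \mathcal L_U$ to transport the local constraint from $U$ to $V$. The paper phrases it as a one-line contradiction (a violation in $V$ would witness a word in $\mathcal L_U$ violating $\Phi$'s rule), while you write it out directly, but the content is identical.
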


\begin{proof}
Let $\phi(x) = \alpha_{-\ell}x^{\ell} + \dots +\alpha_0 + \dots +\alpha_r x^{-r}$ be the generating polynomial of $\Phi$.
If some element $V \in X_U$ is not a spacetime diagram for $\Phi$, then $\Phi$'s local rule is violated somewhere, i.e.\ for some $m,n$ we have $ \alpha_{-\ell}V_{m,n} + \dots + \alpha_0 V_{m+\ell,n}+\dots +\alpha_r V_{m+\ell+r-1,n}\neq V_{m+\ell, n+1}$. By definition the rectangular word $w\colonequal(V_{i,j})_{m\leq i\leq m+\ell+r-1, n\leq j \leq n+1}$ belongs to the language of $U$; that is, $w$ occurs in $U$ and agrees with $\Phi$'s local rule, a contradiction.
\end{proof}

We collect some facts about constant-length substitutive sequences, 
referring the reader to \cite{Allouche--Shallit:2003} for a thorough exposition. 
A {\em substitution of length $p$} is a map $\theta:\mathcal A \rightarrow \mathcal A^p$.
We use concatenation to 
extend $\theta$ to a map on finite and infinite words 
from $\mathcal A$. 
By iterating $\theta$ on any fixed letter $a \in \mathcal A$, 
we obtain infinite configurations $u\in\mathcal A^\N$
such that $\theta^j(u)=u$ 
for some natural number $j$; we call such configurations {\em $\theta$-periodic}, or {\em $\theta$-fixed} if $j=1$.
We write $\theta^\infty(a)$ to denote a fixed point.
The pigeonhole principle implies that $\theta$ has a $\theta$-periodic configuration.
We can also define bi-infinite fixed points of $\theta$.
Given a bi-infinite sequence $u= \cdots u_{-2}u_{-1}\cdot u_{0} u_{1} \cdots \in \mathcal A^{\Z}$ and substitution $\theta$ on $\mathcal A$, define $\theta(u) = \cdots \theta(u_{-2})\theta (u_{-1})\cdot \theta (u_0) \theta (u_1)\cdots$.
If $a, b$ are letters such that $\theta(a)$ starts with $a$, $\theta(b)$ ends with $b$, and the word $ba$ occurs in $\theta^n(c)$ for some letter $c$, then we call the unique sequence $u= \cdots b \cdot a \cdots$ that satisfies $\theta(u)=u$ a {\em bi-infinite fixed point of $\theta$}.
Bi-infinite fixed points of a length-$p$ substitution $\theta$ are $(-p)$-automatic, since $p$-automatic sequences are closed under shifting to the right and the addition of finitely many new entries; see \cite[Theorem~6.8.4]{Allouche--Shallit:2003}.

We can similarly define two-dimensional substitutions $\theta:\mathcal A \rightarrow \mathcal A^{p\times p}$ and two-dimensional $\theta$-fixed points.

We recall Cobham's theorem~\cite{Cobham-1972}.
We refer to \cite[Theorems~6.3.2 and 14.2.3]{Allouche--Shallit:2003} for the proof.

\begin{theorem}\label{Cobham}
\leavevmode
\begin{enumerate}
\item
The sequence $(u_{m})_{m\geq 0}\in \F_p^\N$ is $p$-automatic if and only if it is the image, under a coding, of a fixed point of a length-$p$ substitution $\theta$.
\item
The sequence $(U_{m,n})_{m\geq 0, n\geq 0}\in \F_p^{\N\times \N}$ is $[p,p]$-automatic if and only if it is the image, under a coding, of a fixed point of a substitution $\theta:\mathcal A \rightarrow \mathcal A^{p\times p}$.
\end{enumerate}
\end{theorem}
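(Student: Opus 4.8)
The plan is to reduce both directions to the kernel characterisation of automaticity already recalled above --- Eilenberg's theorem \cite[Theorem~6.6.2]{Allouche--Shallit:2003} in dimension one and its two-dimensional form \cite[Theorem~14.4.1]{Allouche--Shallit:2003} --- and to set up an explicit dictionary between a DFAO and a uniform substitution. Since this is a classical result, I would only establish this dictionary and the finiteness of the relevant kernel; the remaining verifications are routine and are written out in full in \cite[Theorems~6.3.2 and 14.2.3]{Allouche--Shallit:2003}.

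For sufficiency in Part~(1): given $u = \tau(w)$ with $w = \theta^\infty(a) \in \mathcal A^\N$ a fixed point of a length-$p$ substitution $\theta\colon \mathcal A \to \mathcal A^p$ and $\tau\colon \mathcal A \to \F_p$ a coding, I would use the identity $w = \theta(w)$, which forces the block of $w$ occupying positions $p^e m, \dots, p^e m + p^e - 1$ to equal $\theta^e(w_m)$, so that $w_{p^e m + i} = \theta^e(w_m)_i$ for $0 \le i \le p^e - 1$. Consequently every member $(u_{p^e m + i})_{m \ge 0}$ of the $p$-kernel of $u$ is the image of the \emph{single} sequence $w$ under the coding $b \mapsto \tau(\theta^e(b)_i)$ of $\mathcal A$ into $\F_p$, of which there are at most $p^{|\mathcal A|}$; hence the $p$-kernel is finite and $u$ is $p$-automatic by Eilenberg's theorem. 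For necessity I would go the other way: starting from a DFAO $(\mathcal S, \{0,\dots,p-1\}, \delta, s_0, \F_p, \omega)$ for $u$, define $\theta\colon \mathcal S \to \mathcal S^p$ by $\theta(s) = \delta(s,0)\,\delta(s,1)\cdots\delta(s,p-1)$ and check that the sequence $m \mapsto \delta(s_0, m_\ell \cdots m_1 m_0)$ is a fixed point of $\theta$, of which $u$ is the $\omega$-coding. Part~(2) is proved by the same two arguments verbatim, replacing the digit $i$ by a digit pair $(i,j) \in \{0,\dots,p-1\}^2$, the relation $w_{pm+i} = \theta(w_m)_i$ by $W_{pm+i,\,pn+j} = \theta(W_{m,n})_{i,j}$, the $p$-kernel by the $[p,p]$-kernel, and one-dimensional Eilenberg by its two-dimensional form \cite[Theorem~14.4.1]{Allouche--Shallit:2003}.

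The one point that needs care --- rather than any genuine difficulty --- is the bookkeeping around digit order and leading zeros, since the automata in this paper read the \emph{least} significant digit first, whereas the substitution picture above is transparent only when digits are read most significant first. I would handle this by two standard reductions that do not change the class of $p$-automatic sequences: first pass to a DFAO reading digits most significant first (the $p$-kernel, hence automaticity, is insensitive to reading direction), and then, since $\mathcal S$ is finite, replace $s_0$ by a state $a$ reachable from $s_0$ along a string of zeros and satisfying $\delta(a,0) = a$, so that $\delta(a,\cdot)$ still generates $u$ and $\theta(a)$ genuinely begins with $a$; this is exactly what makes $\theta$ prolongable and yields an honest fixed point rather than merely a $\theta$-periodic point. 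In the two-dimensional setting the same normalisation is carried out in both coordinates, with the shorter base-$p$ representation padded on the left by zeros as in Section~\ref{automaticity}, after which the argument closes exactly as above.
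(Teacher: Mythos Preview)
The paper does not give its own proof of this theorem; it is recalled as a classical result of Cobham, with the reader referred to \cite[Theorems~6.3.2 and 14.2.3]{Allouche--Shallit:2003} for the details. Your sketch is correct and is precisely the standard argument found in those references: the dictionary between a DFAO (reading most significant digit first) and a length-$p$ substitution on its state set, together with Eilenberg's kernel characterisation for the converse direction. Your attention to the reading-direction and leading-zero normalisations is appropriate; the cleanest way to guarantee a state $a$ with $\delta(a,0)=a$ is to pass to the minimal automaton, where robustness to leading zeros forces $s_0$ and $\delta(s_0,0)$ to be identified, so that $\theta$ is genuinely prolongable at $s_0$.
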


\begin{example}\label{p=3 example part 3}
As in Examples~\ref{p=3 example} and \ref{p=3 example part 2}, let $p = 3$, and let $\phi(x) = x + 1 \in \F_3[x]$.
We perform a search to find substitutions $\theta : \F_3 \to \F_3^3$ with fixed points $\theta^\infty(a)$ generated by small automata under Part~(1) of Theorem~\ref{Cobham}, since a small automaton makes subsequent computations feasible.
We also require that $\theta$ is primitive, that the fixed point $(u_m)_{m \geq 0}$ is not eventually periodic, and that $(u_{3 m})_{m \geq 0}$, $(u_{3 m + 1})_{m \geq 0}$, and $(u_{3 m + 2})_{m \geq 0}$ are not eventually periodic.
Among the substitutions satisfying these criteria, the substitution $\theta$ defined by $\theta(0) = 001$, $\theta(1) = 112$, and $\theta(2) = 220$ minimizes the number of states in the corresponding automaton, producing the automaton on the left in Figure~\ref{automata} for the fixed point $\theta^\infty(0)$.
Indeed this is how we chose that automaton.
From the $54$-state automaton for $U|_{\N \times \N}$, we compute by Part~(2) of Theorem~\ref{Cobham} a substitution $\Theta : \mathcal A \to \mathcal A^{3 \times 3}$ and coding $\tau : \mathcal A \to \F_3$ such that $\tau(\Theta^\infty(a)) = U|_{\N \times \N}$ for a particular letter $a \in \mathcal A$.
The size of the alphabet is $|\mathcal A| = 75$.
\end{example}

Note that while the spacetime diagram has a substitutional nature, the alphabet size makes the computation of this substitution by hand infeasible.
This is presumably why such substitutions have not been studied in the symbolic dynamics literature.

\subsection{Automatic invariant sets and intersection sets}\label{invariant sets}

For a linear cellular automaton $\Phi:\F_p^\Z\rightarrow \F_p^\Z$, let
\[
	X_\Phi = \{V \in \F_p^{\Z \times \Z}: \text{$V$ is a spacetime diagram for $\Phi$}\}.
\]
Then $X_\Phi$ is closed in $\F_p^{\Z\times \Z}$ and $(X_\Phi, \sigma_1, \sigma_2)$ is a $\Z \times \Z$-subshift, an example of a {\em Markov subgroup} or {\em algebraic shift}~\cite{Schmidt-1995}.

We define $\pi : X_\Phi \rightarrow \F_p^{\Z}$ by $\pi(V) = V|_{\Z \times \{0\}}$.
Let $Z \subset X_\Phi$ be a closed and $(\sigma_1,\sigma_2)$-invariant subset.
Note that by construction $\Phi$ maps $ \pi(Z)$ onto $ \pi(Z)$, though $\Phi $ is not necessarily invertible on $ \pi(Z)$; i.e.\ we have two commuting transformations $\sigma$ and $ \Phi$ defined on $ \pi(Z)$ that define a monoid action of $\Z \times \N$.
The reader who prefers to work with a $\Z \times \Z$ action can take the natural extension of $( \pi(Z), \sigma, \Phi)$; see for example the exposition in \cite{Cyr--Kra-2017}.
We have
\begin{equation}\label{commute}
	\text{$\pi\circ \sigma_1=\sigma \circ \pi$ \quad and \quad $\pi\circ \sigma_2=\Phi\circ \pi$}.
\end{equation}
 
\begin{theorem}\label{invariant}
Let $\Phi:\F_p^\Z\rightarrow \F_p^\Z$ be a linear cellular automaton whose generating polynomial is neither $0$ nor a monomial, and let $u\in \F_p^{\Z}$ be a $(-p)$-automatic sequence which is not eventually periodic. Then $\pi(X_{\ST_\Phi (u)})$ is a closed $(\sigma,\Phi)$-invariant subset of $\F_p^\Z$ which is neither finite nor equal to $\F_p^\Z$.
\end{theorem}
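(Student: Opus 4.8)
The plan is to verify the three assertions separately: (i) $\pi(X_{\ST_\Phi(u)})$ is closed and $(\sigma,\Phi)$-invariant, (ii) it is infinite, and (iii) it is a proper subset of $\F_p^\Z$.

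For (i), write $U = \ST_\Phi(u)$. By Corollary~\ref{STD is automatic}, $U$ is $[-p,p]$-automatic; in particular $X_U$ is a genuine (nonempty, compact) $\Z\times\Z$-subshift, and by Lemma~\ref{shift elements are spacetime diagrams} every element of $X_U$ is a spacetime diagram for $\Phi$, so $X_U \subseteq X_\Phi$ and $\pi$ is defined on $X_U$. Since $X_U$ is compact and $\pi$ is continuous, $\pi(X_U)$ is closed. Invariance under $\sigma$ and $\Phi$ is immediate from the intertwining relations~\eqref{commute} together with the $(\sigma_1,\sigma_2)$-invariance of $X_U$: $\sigma(\pi(X_U)) = \pi(\sigma_1(X_U)) \subseteq \pi(X_U)$, and likewise $\Phi(\pi(X_U)) = \pi(\sigma_2(X_U)) \subseteq \pi(X_U)$.

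For (iii), the key input is Theorem~\ref{complexity}(2): since $U$ is $[-p,-p]$-automatic — here I would first extend $U$ to a $\Z\times\Z$ spacetime diagram, say by choosing $p$-automatic past columns, and invoke Theorem~\ref{2D-STDs_are_automatic}, or alternatively just work with the $[-p,p]$-automatic $\Z\times\N$ diagram and use Theorem~\ref{complexity}(1) on its restrictions to the two quadrants — the two-dimensional complexity of $U$ grows polynomially in $\max\{m,n\}$. Restricting attention to a single row shows the one-dimensional subshift $(\pi(X_U),\sigma)$ has at most polynomial word complexity, hence strictly less than $p^n$ for large $n$; therefore $\pi(X_U) \neq \F_p^\Z$. (One must check the polynomial bound on the horizontal strips of $U$ transfers to a polynomial bound on the factor complexity of $\pi(X_U)$, which is routine since each row of any $V\in X_U$ is a limit of rows appearing in $U$.)

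For (ii), the infinitude, I would use Proposition~\ref{nonperiodic}: because $\phi$ is neither $0$ nor a monomial and $(u_m)_{m\geq 0}$ is not eventually periodic, any spacetime diagram $U$ for $\Phi$ with middle row $u$ is nonperiodic as a two-dimensional configuration. A nonperiodic configuration has infinite orbit closure, so $X_U$ is infinite; I then need that $\pi(X_U)$ itself is infinite, not merely $X_U$. This is the step I expect to be the main obstacle: $\pi$ could in principle collapse the infinitely many distinct elements of $X_U$ onto finitely many rows. To rule this out, observe that if $\pi(X_U)$ were finite then it would consist of periodic points of $\sigma$, so the row $u = \pi(U) \in \pi(X_U)$ would be spatially periodic, i.e.\ $x^{-i}f_u(x) = f_u(x)$ for some $i\geq 1$; but then $(u_m)_{m\geq 0}$ is eventually periodic, contradicting the hypothesis. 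Hence $\pi(X_U)$ is infinite. (An alternative, if one wants to avoid even this much: a finite $\sigma$-invariant set forces $u$ periodic directly, since $u\in\pi(X_U)$ and every point of a finite shift-invariant set is periodic.) Combining (i)–(iii) gives the theorem.
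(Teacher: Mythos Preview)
Your proof follows the paper's approach exactly: closedness and $(\sigma,\Phi)$-invariance from compactness and the intertwining relations~\eqref{commute}; properness from the polynomial complexity bound of Theorem~\ref{complexity}; and infinitude from Proposition~\ref{nonperiodic}. The paper's proof is in fact terser than yours---it simply cites Proposition~\ref{nonperiodic} for infinitude without spelling out the passage from ``$U$ nonperiodic'' to ``$\pi(X_U)$ infinite''.

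One small technical point in your step (ii): since $U=\ST_\Phi(u)$ is $\Z\times\N$-indexed, $U$ is not literally an element of the $\Z\times\Z$-subshift $X_U$, so the claim ``$u=\pi(U)\in\pi(X_U)$'' is not immediate; one must check that $u$ occurs as the zeroth row of some $V\in X_U$, which amounts to showing every finite subword of $u$ recurs in arbitrarily high rows of $U$. The paper glosses over precisely the same issue, so this is not a divergence from the paper's argument but rather a shared informality. (If you want a clean patch: from $\pi(X_U)$ finite one gets that all accumulation points of $\{\Phi^n(u)\}_{n\ge0}$ are spatially periodic, hence $\Phi^n(u)$ is spatially periodic for some $n$, and then $(\phi(x)^n(1-x^{-P}))f_u(x)=0$ forces $(u_m)_{m\ge0}$ to satisfy a linear recurrence and be eventually periodic.)
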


\begin{proof}
By the identities in (\ref{commute}), any closed $(\sigma_1,\sigma_2)$-invariant set in $X_\Phi$ projects to a closed $(\sigma,\Phi)$-invariant subset of $\F_p^\Z$.
Thus $\pi(X_{\ST_\Phi (u)})$ is $(\sigma,\Phi)$-invariant, and compactness implies that it is closed in $\F_p^\Z$. By Proposition~\ref{nonperiodic}, $\pi(X_{\ST_\Phi (u)})$ is not finite. By Theorem~\ref{complexity}, $\pi(X_{\ST_\Phi (u)})\neq \F_p^\Z$.
\end{proof}

There are other examples of invariant sets for linear cellular automata.
This was first touched on by Kitchens and Schmidt~\cite[Construction 5.2]{Kitchens--Schmidt-1992} \cite[Example 29.8]{Schmidt-1995} and by Silberger~\cite[Example 3.4]{Silberger-1995}, where the following construction is described.
One starts with a finite set $H\subset \F_p^{j}$ and considers $H^\Z$.
There is a natural injection $i:H^\Z\rightarrow \F_p^\Z$ obtained by concatenating. Note that $i(H^\Z)$ is not necessarily invariant under the left shift $\sigma$, but $\bar Y \colonequal \cup_{m=0}^{j-1}\sigma^m (i(H^\Z))$ is. 
It is clear that $\bar Y$ is a proper subset of $\F_p^\Z$.
However, to extend $\bar Y$ to a ``small" set which is invariant under $\Phi$, Kitchens and Schmidt~\cite[Construction 5.2]{Kitchens--Schmidt-1992} assume in addition that $H$ is a group and that $j$ has a simple base-$p$ representation. For example, they take $j=p^k$, and then the assumption that $H=H_k$ is a group and the ``freshman's dream'' (which is that if $\Phi$ has generating polynomial $\phi(x) =\alpha_{-\ell}x^{\ell} + \dots +\alpha_0 + \dots +\alpha_r x^{-r}$ then $\Phi^{p^k}$ has generating polynomial $\phi(x)^{p^k} =\alpha_{-\ell}x^{\ell p^k} + \dots +\alpha_0 + \dots +\alpha_r x^{-r p^k}$) imply that $\Phi^{p^k}(\bar Y_{k})\subseteq \bar Y_{k}$.
Therefore $Y_{k}\colonequal\cup_{n=0}^{p^k - 1}\Phi^{n}(\bar Y_{k})$ is $(\sigma,\Phi)$-invariant and is also a proper subset of $\F_p^\Z$. 
One can also obtain more complex subshifts by taking an infinite intersection $\cap_k Y_k$ of nested shift spaces where $Y_k$ is built from a group $H_k\subset\F_p^{p^k}$ and $k\rightarrow \infty$.

\begin{example}\label{intersection set}
Let $p=2$, let $\Phi$ be the Ledrappier cellular automaton, and let $H_k=\{0^{2^k}, \theta^k(0), \theta^k(1), 1^{2^k} \}$ where $\theta $ is the Thue--Morse substitution. Then, using the freshman's dream, $\cap_k Y_k$ contains $\pi(X_{{\ST_\Phi (u)}})$, where $u\in \F_p^\Z$ is any bi-infinite fixed point of the Thue--Morse substitution.
Note that in fact here $\pi(X_{\ST_\Phi (u)})$ is almost all of $\cap_k Y_k$, as $\cap_k Y_k \backslash \pi(X_{\ST_\Phi (u)})$ consists of bi-infinite sequences which are identically $0$ to the left of some index and which are a $\theta$-fixed point to the right of that index, or vice versa. We can rectify this discrepancy by changing our initial condition. 
If one starts with the $(-2)$-automatic initial condition $u$ whose right half is a fixed point of $\theta$ and whose left half is identically $0$, then $\pi(X_{\ST_\Phi (u)})= \cap_k Y_k$.
\end{example}

This construction is explored in greater detail by Einsiedler~\cite{Einsiedler-2004}, who shows that one can find $(\sigma_1,\sigma_2)$-invariant sets of any possible entropy. His construction is based on the construction of Kitchens and Schmidt, although he expresses it differently. Precisely, recall that $X_\Phi$ is the set of all spacetime diagrams for $\Phi$.
Einsiedler works with a group $Z\subset X_\Phi$ which is invariant under the action of some $\sigma_1^m \sigma_2^n$. For example, 
if one considers the group
\[
	Z \colonequal \{V \in X_\Phi: \text{$V_{2m,2n}=0$ for each $m,n\in \Z$}\},
\]
then this group is invariant under $\sigma_1^2 \sigma_2^2$. Using the Kitchens--Schmidt construction, it can be generated by taking spacetime diagrams of sequences on $H=\{(0,0), (1,1) \}\in \F_2^2$
with the Ledrappier cellular automaton $\Phi$. For, the image of a sequence in $H^\Z$ under $\Phi$ contains a $0$ in every even index, and the image of a sequence in $H^\Z$ under $\Phi^2$ is a sequence in $H^\Z$. Einsiedler also allows addition of $Z$ by a finite set $F$.
He calls sets $Z=\cap_k (Z_k +F_k)$ {\em intersection sets}, and he asks whether there is a description of every $(\sigma_1,\sigma_2)$-invariant set in terms of intersection sets.

\begin{theorem}\label{intersection theorem}
Let $\Phi:\F_p^{\Z}\rightarrow \F_p^\Z$ be a linear cellular automaton, and let $u\in \F_p^{\Z}$ be a $(-p)$-automatic sequence which is not eventually periodic. Then $\pi(X_{\ST_\Phi (u)})$ is a $(\sigma,\Phi)$-invariant proper subset of $\F_p^\Z$ which is a subset of an intersection set.
\end{theorem}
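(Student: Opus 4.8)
The plan is to assemble the statement from pieces already in place: $(\sigma,\Phi)$-invariance and closedness come for free, properness comes from the complexity bound of Theorem~\ref{complexity}, and the intersection-set part comes from the Kitchens--Schmidt construction recalled just before Example~\ref{intersection set}, applied to groups of blocks of $\pi(X_{\ST_\Phi(u)})$ itself. For invariance and closedness: by Lemma~\ref{shift elements are spacetime diagrams} every element of $X_{\ST_\Phi(u)}$ is a spacetime diagram for $\Phi$, so $X_{\ST_\Phi(u)}\subseteq X_\Phi$, and since $X_{\ST_\Phi(u)}$ is $(\sigma_1,\sigma_2)$-invariant the intertwining relations~(\ref{commute}) show that $\pi(X_{\ST_\Phi(u)})$ is $(\sigma,\Phi)$-invariant, while compactness of $X_{\ST_\Phi(u)}$ shows it is closed; this uses nothing about $\phi$. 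For properness I would not invoke Theorem~\ref{invariant}, whose hypotheses are stronger, but argue directly: $\ST_\Phi(u)$ is $[-p,p]$-automatic by Corollary~\ref{STD is automatic}, so, exactly as in the opening of the proof of Theorem~\ref{2D-STDs_are_automatic}, each of $\ST_\Phi(u)|_{\pm\N\times\N}$ is $[p,p]$-automatic; hence by Theorem~\ref{complexity}(1) the number of horizontal $n\times 1$ words occurring in $\ST_\Phi(u)$ is at most polynomial in $n$. Since every length-$n$ factor of an element of $\pi(X_{\ST_\Phi(u)})$ is such a horizontal word, $\pi(X_{\ST_\Phi(u)})$ has subexponential factor complexity and is therefore a proper subset of $\F_p^\Z$.

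For the intersection-set statement, for each $k\ge 0$ let $H_k$ be the $\F_p$-linear span inside $\F_p^{p^k}$ of the set of all blocks $v|_{[jp^k,\,(j+1)p^k-1]}$, taken over every $j\in\Z$ and every $v\in\pi(X_{\ST_\Phi(u)})$; this is an additive subgroup of $\F_p^{p^k}$, so the Kitchens--Schmidt recipe applies. Put $\bar Y_k=\bigcup_{m=0}^{p^k-1}\sigma^m(i(H_k^\Z))$ and $Y_k=\bigcup_{n=0}^{p^k-1}\Phi^n(\bar Y_k)$. The freshman's dream gives that $\Phi^{p^k}$ has generating polynomial $\phi(x)^{p^k}$, whose monomials all have exponent divisible by $p^k$; consequently $\Phi^{p^k}$ sends an element of $i(H_k^\Z)$ to the sequence each of whose length-$p^k$ blocks is one fixed $\F_p$-linear combination of the neighbouring blocks, and since $H_k$ is a subspace this block again lies in $H_k$. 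Hence $\Phi^{p^k}(i(H_k^\Z))\subseteq i(H_k^\Z)$, so $\Phi^{p^k}(\bar Y_k)\subseteq\bar Y_k$, and therefore $Y_k$ is $(\sigma,\Phi)$-invariant, and it is closed, being a finite union of continuous images of the compact set $i(H_k^\Z)$. On the other hand each $v\in\pi(X_{\ST_\Phi(u)})$, cut into its length-$p^k$ blocks at positions divisible by $p^k$, has every block in $H_k$ by construction, so $v\in i(H_k^\Z)\subseteq Y_k$; thus $\pi(X_{\ST_\Phi(u)})\subseteq\bigcap_{k\ge 0}Y_k$.

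It remains to see that $\bigcap_k Y_k$ is an intersection set in the sense described in Section~\ref{invariant sets}, for which I would check that the $Y_k$ are nested. Identifying $\F_p^{p^{k+1}}$ with $(\F_p^{p^k})^p$ by cutting a word into $p$ consecutive length-$p^k$ pieces, every generator of $H_{k+1}$ becomes a $p$-tuple of generators of $H_k$, so $H_{k+1}\subseteq(H_k)^p$ as subspaces, whence $i(H_{k+1}^\Z)\subseteq i(H_k^\Z)$. Using that the full shift $H_k^\Z$ is shift-invariant (so $\sigma^{p^k}$ fixes $i(H_k^\Z)$) and that $\Phi^{p^k}(i(H_k^\Z))\subseteq i(H_k^\Z)$, one collapses the unions over $m$ and over $n$ modulo $p^k$ and obtains $\bar Y_{k+1}\subseteq\bar Y_k$ and then $Y_{k+1}\subseteq Y_k$. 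Thus $\bigcap_k Y_k$ is a decreasing intersection of $(\sigma,\Phi)$-invariant shift spaces each built from a group $H_k\subset\F_p^{p^k}$, i.e.\ an intersection set, and it contains $\pi(X_{\ST_\Phi(u)})$. The places needing care are the block-level bookkeeping verifying that $\Phi^{p^k}$ and $\sigma^{p^k}$ respect the $p^k$-block structure of $i(H_k^\Z)$, and the identification $H_{k+1}\subseteq(H_k)^p$; the rest is a direct application of results already established, and I expect the freshman's-dream computation for $\Phi^{p^k}$ to be the only genuinely delicate point.
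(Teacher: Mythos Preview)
Your argument is correct but takes a different route from the paper's. For the intersection-set part, the paper invokes Cobham's theorem: since $u|_\N$ and $u|_{-\N}$ are $p$-automatic, they are codings $\tau_1,\tau_2$ of fixed points of length-$p$ substitutions $\theta_1,\theta_2$, and the paper takes $H_k$ to be the group in $\F_p^{p^k}$ generated by the substitution blocks $\{\tau_1(\theta_1^k(a)):a\in\mathcal A_1\}\cup\{\tau_2(\theta_2^k(a)):a\in\mathcal A_2\}$. Because the initial row $u$ already lies in $i(H_k^\Z)$ by the fixed-point property, the Kitchens--Schmidt mechanism then yields $\pi(X_{\ST_\Phi(u)})\subseteq Y_k$ for every $k$, and that is the entire proof. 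You bypass Cobham entirely, taking $H_k$ to be the span of all aligned $p^k$-blocks of elements of $\pi(X_{\ST_\Phi(u)})$ itself, so that the containment $\pi(X_{\ST_\Phi(u)})\subseteq i(H_k^\Z)$ is tautological and the freshman's-dream closure under $\Phi^{p^k}$ does the rest. Your construction is more general---it would place $\pi(X_{\ST_\Phi(u)})$ inside an intersection set for \emph{any} initial condition, automatic or not---but the paper's substitution-based $H_k$ is explicitly generated from the initial data alone and is potentially smaller, hence gives a tighter envelope (your $H_k$ also absorbs the aligned blocks of the intermediate rows $\Phi^n(u)$ with $0<n<p^k$, which need not lie in the paper's $H_k$). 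You also verify the nesting $Y_{k+1}\subseteq Y_k$, which the paper omits, and you argue properness directly via Theorem~\ref{complexity} rather than appealing to Theorem~\ref{invariant} with its extra hypothesis on $\phi$; both are welcome refinements. One small point to tidy in your properness step: the polynomial bound on horizontal $n\times 1$ words in $\ST_\Phi(u)$ needs a line for words straddling the column $m=0$, since those are not counted by Theorem~\ref{complexity}(1) applied separately to each half; this is easily handled as in the proof of Theorem~\ref{complexity}(2).
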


\begin{proof}
By assumption, $u$ is a concatenation of two $p$-automatic sequences. By Cobham's theorem, there are substitutions $\theta_1:\mathcal A_1 \rightarrow \mathcal A_1^p$ and $\theta_2: \mathcal A_2\rightarrow \mathcal A_2^p$, and codings $\tau_1: \mathcal A_1\rightarrow \F_p$ and $\tau_2:\mathcal A_2\rightarrow \F_p$ such that $u|_{\N}$ is the $\tau_1$-coding of a right-infinite fixed point of $\theta_1$, and $u|_{-\N }$ is the $\tau_2$-coding of a left-infinite fixed point of $\theta_2$.
For each $k$ let $H_k$ be the group in $\F_p^{p^k}$ generated by $\{\tau_1(\theta_1^k(a)): a \in \mathcal A_1\} \cup \{\tau_2(\theta_2^k(a)): a \in \mathcal A_2\}$. Let $Y_k$ be the $(\sigma,\Phi)$-invariant subset of $\F_p^\Z$ as defined above using the group $H_k$. Then for each $k$, $\pi(X_{\ST_\Phi (u)}) \subset Y_k$, so $\pi(X_{\ST_\Phi (u)})\subset \cap_k Y_k$.
\end{proof}

In Example~\ref{intersection set}, we can find $u$ such that the set $\pi(X_{\ST_\Phi (u)}) $ is equal to an intersection set $\cap_k Y_k$. This is because for each $k$ the group generated by $\{\theta^k(0), \theta^k(1)\}$ is very close to the set $\{\theta^k(0), \theta^k(1)\}$.

\begin{example}\label{p=3 example part 4}
We continue with our running example, last seen in Example~\ref{p=3 example part 3}, where $p=3$, $\Phi$ is the cellular automaton with generating function $x+1$, and the initial condition is generated by the substitution $\theta(0)=001$, $\theta(1)=112$, $\theta(2)=220$. Every word of length $2$ occurs in every fixed point of $\theta$.
One shows by induction that
\begin{equation}\label{total sum} \theta^k(0) + \theta^k(1) + \theta^k(2) = 0^{3^k} \end{equation}
for each $k$.
We also have
\begin{equation}\label{group sum}
	2\theta^k(0)+\theta^k(1) = 2\theta^k(1)+\theta^k(2)= 2\theta^k(2)+\theta^k(0)=1^{3^k},
\end{equation}
so that the group generated by $\{ \theta^k(0), \theta^k(1), \theta^k(2) \}$ is
\[ H_k = \{ 0^{3^k}, 1^{3^k}, 2^{3^k}, \theta^k(0), 2\theta^k(0), \theta^k(1), 2\theta^k(1), \theta^k(2), 2\theta^k(2)\}.\]
Let $(u_m)_{m \geq 0}$ be the fixed point $\theta^\infty(0)$ and let $(u_{-m})_{m \geq 0}$ be the constant $0$ sequence. Its spacetime diagram $\ST_\Phi(u)$ is shown in Figure~\ref{p=3 right part}.
We claim that all words in $H_k$ occur horizontally in $\ST_\Phi (u)$.
The words $0^{3^k}$, $\theta^k(0)$, $\theta^k(1)$, and $\theta^k(2)$ occur in the $0$-th row of $\ST_\Phi(u)$.
Since all possible words of length $2$ occur in $u$, each element of
\[
	S_k= \{ \theta^k(a)+\theta^k(b): ab \in \F_3\times \F_3 \}
	= \{2\theta^k(0), 2\theta^k(1), 2\theta^k(2) \}
\]
occurs in the $3^k$-th row of $\ST_\Phi(u)$.
Also, since $(x + 1)^{4 \cdot 3^k} = x^{4 \cdot 3^k} + x^{3 \cdot 3^k} + x^{3^k} + 1$, Equation~\eqref{group sum} implies
\begin{align*}
	\Phi^{4 \cdot 3^k}(u)|_{[3 \cdot 3^k, 4 \cdot 3^k - 1]}
	&= u|_{[3 \cdot 3^k, 4 \cdot 3^k - 1]}
	+ u|_{[2 \cdot 3^k, 3 \cdot 3^k - 1]}
	+ u|_{[0, 3^k - 1]}
	+ u|_{[-3^k, -1]} \\
	&= \theta^k(0) + \theta^k(1) + \theta^k(0) + 0^{3^k} = 1^{3^k}.
\end{align*}
It follows that $2^{3^k -1}$ occurs in row $4 \cdot 3^k +1$; this is true for all $k$, so $2^{3^k}$ also occurs.
Therefore all words in $H_k$ occur in $\ST_\Phi (u)$, and by approximation arguments one sees that $\pi(X_{\ST_\Phi(u)})=\cap_k Y_k$.
\end{example}

In contrast, for the initial condition $u$ in Figure~\ref{p=3 both parts}, it is not so clear that $\pi(X_{\ST_\Phi(u)})$ is an intersection set. In Example~\ref{p=3 example part 5}, for a different initial condition $u$, which is also not eventually periodic in either direction, we describe $\pi(X_{\ST_\Phi(u)})$ as a modified intersection set $\cap_k Y_k$, where $Y_k$ is defined with sets of words $H_k$ which are not groups, but which nevertheless capture the words we see at levels $p^k$.

\begin{question}
Can all of the invariant sets in Theorem~\ref{invariant} be written as intersection sets?
\end{question}

\section{Invariant measures for linear cellular automata}\label{Invariant measures for linear cellular automata}

In this section we study the $(\sigma,\Phi)$-invariant measures that are supported on the invariant sets found in Theorem~\ref{invariant}.
By the same transfer principle mentioned in Section~\ref{dynamics}, a measure supported on $X_U$ that is invariant under $\sigma_1$ and $\sigma_2$ transfers to a measure on $\F_p^\Z$ which is invariant under $\sigma $ and $\Phi$. By Proposition~\ref{not Lebesgue}, these measures are never the Haar measure.
In Theorem~\ref{decidability} we identify a decidable condition which guarantees that the measure $\mu$ in question is not finitely supported, and in 
Theorem~\ref{power-free} we identify a family of nontrivial $(\sigma,\Phi)$-invariant measures when $\Phi$ is the Ledrappier cellular automaton.
In Theorem~\ref{nature of measures} we identify $(\sigma,\Phi)$-invariant measures as belonging to simplices whose extreme points are ergodic measures supported on codings of substitutional shifts. This statement implicitly contains another method by which to determine whether $\mu$ is trivial, as there exist algorithms to compute the frequency of a word for such a measure.
Finally, in Theorems~\ref{coincidence} and \ref{unique letter at nonzero coefficients}, we give conditions that guarantee that the shifts we study contain constant configurations and hence possibly lead to finitely supported $(\sigma,\Phi)$-invariant measures.

Throughout this section, we make use of the substitutional characterisation of automatic sequences to state and prove our results.

\subsection{Invariant measures on $[-p,-p]$-automatic spacetime diagrams}\label{measure}

Recall that a subshift $(X,\sigma)$ is \emph{aperiodic} if each $x \in X$ is aperiodic. We consider measures on the Borel $\sigma$-algebra of $X$.
Let $S,T:X\rightarrow X$ be transformations on $X$. 
A measure $\mu$ on $X$ is {\em $T$-invariant} if $\mu(Z)= \mu(T^{-1}(Z)) $ for every measurable $Z$, and 
 it is {\em $(S,T)$-invariant} if it is both $S$- and $T$-invariant.
A measure $\mu$ has {\em finite support $\{ x_1, \dots, x_n\}$} if it is a finite weighted sum of Dirac measures $\mu = \sum_{i=1}^nw_i \delta_{x_i}$. If the finitely-supported Borel measure $\mu$ on a shift space $X\subseteq \F_p^\Z$ is also $\sigma$-invariant, then each configuration in the support of $\mu$ is periodic. The same is true if $\mu$ is finitely supported on a two-dimensional shift space and is $(\sigma_1,\sigma_2)$-invariant. In the next proposition we list some elementary observations about the measures on $Y_U$ that are projections of measures on $X_U$. By the Krylov--Bogolyubov theorem~\cite[Theorem 6.9]{Walters-1982}, there exist $(\sigma_1,\sigma_2)$-invariant measures supported on $X_U$. Recall that the map $\pi : X_U \rightarrow \F_p^{\Z}$ is defined by $\pi(V) = V|_{\Z \times \{0\}}$.

\begin{proposition}\label{not Lebesgue}
Let $\Phi:\F_p^\Z\rightarrow \F_p^\Z$ be a linear cellular automaton, and let $U \in \F_p^{\Z\times \Z}$ be a $[-p,-p]$-automatic spacetime diagram for $\Phi$.
Let $(Y_U,\sigma) $ be the $\Z$-subshift defined by $U$.
Let $\mu$ be a $(\sigma_1,\sigma_2)$-invariant measure on $X_U$, and let $\lambda\colonequal\mu\circ \pi^{-1}$.
\begin{enumerate}
\item
Then $\lambda$ is a $(\sigma,\Phi)$-invariant measure on $Y_U$ that is not the Haar measure. 
\item
Moreover, if $\mu$ is not finitely supported, then $\lambda$ is not finitely supported.
\end{enumerate}
\end{proposition}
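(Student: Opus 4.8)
The plan is to prove the two parts separately, with Part~(1) split into the two assertions (invariance and non-Haar) and Part~(2) handled by a pushforward-support argument together with the remark preceding the proposition.

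For Part~(1), the invariance is a direct consequence of the commuting relations \eqref{commute}. First I would observe that, by Lemma~\ref{shift elements are spacetime diagrams}, every element of $X_U$ is a spacetime diagram for $\Phi$, so $\pi$ is well defined on $X_U$ and $Y_U = \pi(X_U)$ is a $(\sigma,\Phi)$-invariant shift space. Given a measurable $Z \subseteq Y_U$, the identity $\pi\circ\sigma_1 = \sigma\circ\pi$ gives $\pi^{-1}(\sigma^{-1}Z) = \sigma_1^{-1}(\pi^{-1}Z)$, hence $\lambda(\sigma^{-1}Z) = \mu(\sigma_1^{-1}\pi^{-1}Z) = \mu(\pi^{-1}Z) = \lambda(Z)$; the same computation with $\pi\circ\sigma_2 = \Phi\circ\pi$ shows $\lambda$ is $\Phi$-invariant. (One has to be slightly careful that $\pi^{-1}(\Phi^{-1}Z) = \sigma_2^{-1}(\pi^{-1}Z)$ uses only $\Phi\circ\pi = \pi\circ\sigma_2$, which holds on all of $X_U$.) For the non-Haar assertion, the key point is that $\lambda$ is supported on $Y_U$, and $Y_U$ is a proper subshift of $\F_p^\Z$: by Theorem~\ref{complexity}, $U$ has complexity $c_U(m,n) \le K\max\{m,n\}^{10}$, so in particular the one-dimensional complexity of $Y_U$ grows polynomially in the window size and $Y_U \ne \F_p^\Z$. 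Since the Haar measure on $\F_p^\Z$ assigns positive mass to every cylinder set, any measure supported on a proper subshift differs from it; thus $\lambda$ is not the Haar measure.

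For Part~(2), I would argue by contraposition on the support. Suppose $\lambda = \mu\circ\pi^{-1}$ is finitely supported, say on $\{w_1,\dots,w_n\}\subseteq Y_U$. Since $\lambda$ is $\sigma$-invariant and finitely supported, each $w_i$ is periodic (as recalled in the paragraph before the proposition). Now $\mu$ is concentrated on $\pi^{-1}(\{w_1,\dots,w_n\})$, i.e.\ on the set of spacetime diagrams $V\in X_U$ whose zeroth row is one of the $w_i$. The point is that for a linear cellular automaton with generating polynomial that is a genuine Laurent polynomial, a single row $w_i$ together with the finitely many choices of preimage determines $V$ only up to countably many possibilities, but more to the point $\mu$ being $(\sigma_1,\sigma_2)$-invariant forces it onto periodic configurations: if $w_i$ is $\sigma^{d_i}$-periodic, then $\sigma_1^{d_i}$ fixes the zeroth row of every $V$ in the support, and one pushes this periodicity up and down the spacetime diagram using $\Phi$-equivariance to conclude that the support of $\mu$ consists of configurations that are periodic under some $\sigma_1^a\sigma_2^b$; combined with $\sigma_2$-invariance of $\mu$ this yields that $\mu$ is supported on genuinely two-dimensionally periodic points, hence — being also finite on each periodic orbit — that $\mu$ itself is finitely supported. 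Contrapositively, if $\mu$ is not finitely supported, then $\lambda$ is not finitely supported.

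The main obstacle is the last step of Part~(2): upgrading ``each fiber-image row $w_i$ is periodic'' to ``$\mu$ is finitely supported on $X_U$.'' A cleaner route, which I would adopt if the equivariance argument gets delicate, is to note that $\pi$ restricted to the support of $\mu$ is a Borel map with finite image, so $\mu$ decomposes as a finite convex combination $\mu = \sum_i \mu(\pi^{-1}w_i)\,\mu_i$ with each $\mu_i$ a $(\sigma_1,\sigma_2)$-invariant measure supported on the fiber $\pi^{-1}(w_i)\cap X_U$; since $w_i$ is $\sigma^{d_i}$-periodic, this fiber is a $\sigma_1^{d_i}$-invariant \emph{finite} set (the row $w_i$ has only finitely many $\Phi$-preimage extensions to a full spacetime diagram once one fixes the finitely many ``seed'' columns of Lemma~\ref{initial conditions for LCA}, and periodicity of $w_i$ bounds these), forcing each $\mu_i$ — hence $\mu$ — to be finitely supported. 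Making the finiteness of that fiber precise, via Lemma~\ref{initial conditions for LCA} and the periodicity of $w_i$, is the technical heart of the proof.
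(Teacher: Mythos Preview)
Your Part~(1) is correct and is exactly the paper's argument: invariance from the relations~\eqref{commute}, and non-Haar from the polynomial complexity bound of Theorem~\ref{complexity}.

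For Part~(2) the contrapositive setup and the observation that each $w_i$ is $\sigma$-periodic are right, but both completions you propose have a real gap. In your ``cleaner route'' the fiber $\pi^{-1}(w_i)\cap X_U$ is \emph{not} made finite by Lemma~\ref{initial conditions for LCA}: that lemma parameterises spacetime diagrams by the zeroth row together with $\ell+r$ half-infinite seed columns $U|_{\{i\}\times(-\N)}$, which are infinite objects; horizontal periodicity of $w_i$ imposes no constraint on them, so the fiber is a priori uncountable. (Also, the conditional measures $\mu_i$ are not $\sigma_2$-invariant, since $\sigma_2$ carries the fiber over $w_i$ to the fiber over $\Phi(w_i)$.) Your first sketch fails for the same underlying reason: $\Phi$-equivariance propagates horizontal periodicity of row~$0$ forward in time, but a periodic row has many non-periodic $\Phi$-preimages, so you cannot ``push periodicity down'' without further input.

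The missing idea, which the paper supplies, is a pigeonhole step: from $\lambda(\Phi^{-1}\{w_i\})=\lambda(\{w_i\})>0$ each $w_i$ has a $\Phi$-preimage inside $\{w_1,\dots,w_n\}$, and since $\Phi$ is single-valued on a finite set this forces $\Phi$ to be a \emph{permutation} of $\{w_1,\dots,w_n\}$. Together with the observation (from $\sigma_2$-invariance of $\mu$, hence of its support) that every row of every $V\in\operatorname{supp}\mu$ lies in $\{w_1,\dots,w_n\}$, this makes each such $V$ vertically periodic; combined with the horizontal periodicity of the $w_i$ you obtain finitely many doubly periodic configurations, so $\mu$ is finitely supported.
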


\begin{proof}
By Equations~\eqref{commute}, any Borel measure $\mu$ on $ X_U$ which is $(\sigma_1,\sigma_2)$-invariant defines a $(\sigma, \Phi)$-invariant Borel measure $\lambda\colonequal\mu\circ\pi^{-1}$ on $Y_U$.
By Part~(2) of Theorem~\ref{complexity}, there is a $K$ such that there are at most $Km^{10}$ words on an $m\times 1$ rectangle in $\mathcal L_{ U}$, so there are at most $Km^{10}$ words of length $m$ in the language of $Y_U$. Thus for large $m$, there exists a word $w$ of length $m$ such that $\lambda(w)=0$. This proves the first assertion.

To see the second assertion, if $\lambda$ is supported on a finite set $\{y_1, \dots, y_n\}$, then, as $\lambda$ is invariant under $\Phi^{-1}$, for each $i$ we have $\Phi^{-1}(y_{i}) \cap \{y_1, \dots, y_n \}\neq \emptyset$.
For each $i$, this implies that $\Phi^{-1}(y_{i}) \cap \{y_1, \dots, y_n \}$ consists of exactly one element.
Therefore $\Phi$ is a permutation on $\{y_1, \dots, y_n\}$.
For each cycle in this permutation, consider the $\Z \times \Z$-configurations whose rows are elements of the cycle.
Then $\mu$ is supported on the union of the $(\sigma_1, \sigma_2)$-orbits of these $\Z \times \Z$-configurations.
Since $\lambda$ is invariant under the left shift, each $y_{i}$ is periodic.
Therefore $\mu$ is finitely supported.
\end{proof}

In the following theorem we give a condition that guarantees the existence of measures on $Y_U$ which are $(\sigma,\Phi)$-invariant and which are not finitely supported.
We say that a two-dimensional configuration $U$ is {\em horizontally $M$-power-free} if no $m \times 1$ word of the form $w^M$ with $m \geq 1$ occurs in $U$.

\begin{theorem}\label{decidability}
Let $U \in \F_p^{\Z\times \Z}$ be a $[-p,-p]$-automatic sequence, specified by an automaton.
It is decidable whether there exists $M \geq 2$ such that $U$ is horizontally $M$-power-free.
\end{theorem}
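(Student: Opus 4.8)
The plan is to convert the statement into an effective question about a single finite automaton. For a horizontal word $w$ of length $m\ge 1$, the word $w^M$ is precisely a horizontal word of length $mM$ having period $m$, i.e.\ a block $(U_{i+k,j})_{0\le k<mM}$ with $U_{i+k,j}=U_{i+k+m,j}$ for all $0\le k<m(M-1)$. Accordingly, put
\[
	R\colonequal\Bigl\{(m,\ell)\in\N^2 : m\ge 1,\ \exists\, i,j\in\Z \text{ such that } U_{i+k,j}=U_{i+k+m,j} \text{ for all } 0\le k<\ell\Bigr\}.
\]
A horizontal $M$-th power occurs in $U$ exactly when $(m,m(M-1))\in R$ for some $m\ge 1$, so $U$ is horizontally $M$-power-free if and only if $\ell/m<M-1$ for every $(m,\ell)\in R$ with $m\ge 1$. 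Since $(1,0)\in R$, this yields
\[
	\text{there is $M\ge 2$ with $U$ horizontally $M$-power-free}\iff \sup\{\ell/m:(m,\ell)\in R,\ m\ge 1\}<\infty .
\]
(Alternatively one may first pass to the four $[p,p]$-automatic quadrants $U|_{(\pm\N)\times(\pm\N)}$ via Proposition~\ref{-p iff p}: a horizontal power long enough to straddle the column $m=0$ contains a horizontal power of roughly half its exponent lying in a single quadrant, so $U$ has arbitrarily large horizontal powers if and only if some quadrant does. This trades base $-p$ for ordinary base $p$, but is not needed.)

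Next I would build a finite automaton $\mathcal A_R$ recognizing $R$. The condition defining $R$ is a first-order formula over $(\Z;+,<)$ together with the unary predicates $P_v=\{(a,b):U_{a,b}=v\}$ for $v\in\F_p$, which are $[-p,-p]$-recognizable because $U$ is given by an automaton: the variable $k$ is bounded by $\ell$, the terms $i+k$ and $i+k+m$ are sums, and ``$U_{i+k,j}=U_{i+k+m,j}$'' is the disjunction over $v$ of $P_v(i+k,j)\wedge P_v(i+k+m,j)$. The class of relations on $\Z^d$ recognizable in the base-$(-p)$ numeration is effectively closed under Boolean operations and projection, and contains the graphs of addition and of the order relation (cf.\ the treatment of negative bases in \cite{Allouche--Shallit:2003}); carrying out the usual quantifier-by-quantifier translation therefore produces, from the given automaton for $U$, an explicit finite automaton $\mathcal A_R$ --- with some computable number $N$ of states --- that reads pairs of base-$p$ digits of $m$ and $\ell$ (most significant first, the shorter representation padded with leading zeros) and accepts exactly the representations of the elements of $R$.

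The last step is to decide whether $\sup\{\ell/m:(m,\ell)\in R,\ m\ge 1\}$ is finite, and here I claim
\[
	\sup\{\ell/m:(m,\ell)\in R,\ m\ge 1\}=\infty\iff \exists\, (m,\ell)\in R\ \text{with}\ m\ge 1\ \text{and}\ \ell>p^{N+1}m,
\]
whose right-hand side is decidable: it asks whether the recognizable set $R\cap\{(m,\ell):\ell>p^{N+1}m\}$ is nonempty, and ``$\ell>p^{N+1}m$'' is an ordinary Presburger (hence recognizable) condition since $p^{N+1}$ is a fixed constant. For ``$\Leftarrow$'', suppose $(m,\ell)\in R$ with $\ell>p^{N+1}m\ge p^{N+1}p^{|m|-1}=p^{N+|m|}$; then $\ell$ has at least $N+|m|+1$ base-$p$ digits, so the standard representation of $(m,\ell)$ begins with more than $N$ digit-pairs whose first component is $0$ (the leading zeros of $m$), and by the pigeonhole principle the corresponding initial segment of the run of $\mathcal A_R$ contains a cycle $C$ all of whose digit-pairs have first component $0$. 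Write the input as $A'CB'$. Since $A'C$ lies in this initial segment, the first components of $A'$ and $C$ are all $0$, so pumping $C$ leaves the $m$-value of the input equal to that of $B'$, namely $m$; and since the very first digit-pair has nonzero second component (because $\ell$ has more digits than $m$), that nonzero second component lies in $A'$ or in $C$, so the $\ell$-value of $A'C^{t}B'$ tends to infinity with $t$. Thus $R$ contains $(m,\ell_t)$ with $m$ fixed and $\ell_t\to\infty$, giving $\sup\ell/m=\infty$. The implication ``$\Rightarrow$'' is immediate: if no such pair exists then $\ell/m\le p^{N+1}$ for every $(m,\ell)\in R$ with $m\ge 1$, so the supremum is finite, and indeed $U$ is horizontally $M_0$-power-free with $M_0\colonequal p^{N+1}+2$.

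I expect the technical heart to be the construction of $\mathcal A_R$ --- in particular realizing the bounded universal quantifier ``for all $0\le k<\ell$'' as a genuinely finite-state test (a \emph{repetition automaton}) and checking that the addition, projection, and complementation steps behave correctly over $\Z$ in base $-p$. Once $\mathcal A_R$ is available, the reduction to ``$\sup\ell/m<\infty$'' and the pigeonhole estimate above are routine, the only real care being the most-significant-first digit bookkeeping in the pumping step.
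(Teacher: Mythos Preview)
Your proof is correct and follows essentially the same approach as the paper's. Both arguments build a finite automaton recognising a two-parameter set of horizontal repetitions (your $R$, the paper's $S$, with the roles of period and exponent swapped), and then decide whether the ratio of the two parameters is unbounded by a pumping/cycle-detection argument on that automaton.

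The differences are cosmetic. The paper first reduces to the four quadrants (your parenthetical remark), so that all variables live in $\N$ and the automaton is over base $p$ from the outset; you instead work directly in the base-$(-p)$ first-order framework over $\Z$ and then silently convert the resulting $\N^2$-relation to a base-$p$ automaton for the pumping step. The paper phrases the final decidable criterion structurally (existence of paths $u,w,v$ with prescribed label constraints in the automaton), while you give the equivalent quantitative form ``does some accepted pair satisfy $\ell>p^{N+1}m$?'' and then pump explicitly. Your bookkeeping in the pumping step (the leading digit-pair has nonzero second component, so the pumped $\ell$-value diverges while the $m$-value is fixed) is fine.
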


\begin{proof}
We reduce the decidability of horizontal $M$-power-freeness of $U$ to that of each quadrant.

An occurrence of a horizontal $M$-power $w^M$ with $|w| = \ell$ in the sequence $(U_{m, n})_{(m, n) \in \Z \times \Z}$ is a word of the form $U_{m, n} \cdots U_{m + M \ell - 1, n}$ satisfying $U_{i, n} = U_{i + \ell, n}$ for all $i$ in the interval $m \leq i \leq m + (M - 1) \ell - 1$.
Therefore $U$ is horizontally $M$-power-free if and only if the set
\[
	S\colonequal\{ ( M,\ell): (\exists m \geq 0) (\exists n \geq 0) (\forall i) ((0\leq i \leq M \ell - 1) \to (U_{m+i, n}= U_{m+i+\ell, n}))\}
\]
is empty.
We follow Charlier, Rampersad, and Shallit~\cite[Theorem 4]{Charlier--Rampersad--Shallit-2012}.
The configuration $U$ is horizontally $M$-power-free for arbitrarily large $M$ if and only if for all $k\geq 0$, $S$ contains a pair $(M,\ell)$ with $M >\ell p^{k}$.
Padding the shorter word with zeros if necessary, we write the base-$p$ representation of the pair $(M,\ell)$ as $(M_e, \ell_e), (M_{e -1}, \ell_{e-1}), \dots, (M_0,\ell_0)$. Thus for every $k\geq 0$, $S$ contains a pair $(M,\ell)$ with $M\geq \ell p^{k}$ if and only if $S$ contains a pair $(M,\ell)$ whose base-$p$ representation starts with $(d_1,0),(d_2,0), \dots, (d_k,0)$, where $d_1\neq 0$ and each other $d_i\in \F_p$. 
Given the automaton $\mathcal M$ which generates $\chi_S$, $S$ contains a pair $(M,\ell)$ with $M\geq \ell p^{k}$ for arbitrarily large $k$ if and only if there are words $u$, $w$, and $v$ on the alphabet $\F_p\times \F_p$ with the second entries of all letters in $w$ and $v$ all equal to $0$, and where $u$ is the label of a path from the initial state of $\mathcal M$ to a state $s$, $w$ is the label of a cycle at $s$, and $v$ is the label of a path from $s$ to a state whose corresponding output is $1$. Whether three such words exist is decidable.
\end{proof}

For fixed $M$, the set $S$ in the proof is a {\em $p$-definable set} (see \cite[Definition 6.34]{Rigo-2014}), and horizontal $M$-power-freeness can be determined by constructing an automaton; see \cite[Section~6.4]{Rigo-2014} and \cite{Mousavi}.

\begin{corollary}\label{not point mass}
Let $\Phi:\F_p^\Z\rightarrow \F_p^\Z$ be a linear cellular automaton, let $U \in \F_p^{\Z\times \Z}$ be a $[-p,-p]$-automatic spacetime diagram for $\Phi$, and let $(Y_U,\sigma)$ be the $\Z$-subshift defined by $U$.
If $U$ is horizontally $M$-power-free for some $M \geq 2$, then there exists a $(\sigma,\Phi)$-invariant measure $\lambda$ on $Y_U$ which is neither the Haar measure, nor finitely supported.
\end{corollary}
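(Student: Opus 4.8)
The plan is to obtain $\lambda$ as the $\pi$-pushforward of a $(\sigma_1,\sigma_2)$-invariant measure on $X_U$ and to invoke Proposition~\ref{not Lebesgue}; the only substantive point is to arrange that the measure upstairs is not finitely supported, which is exactly where horizontal $M$-power-freeness enters.

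First I would observe that $X_U$ is a nonempty compact two-dimensional subshift: it contains $U$, and since shifting a bi-infinite configuration in either direction leaves its language unchanged, $X_U$ is invariant under $\sigma_1^{\pm 1}$ and $\sigma_2^{\pm 1}$. As recalled just before Proposition~\ref{not Lebesgue}, the Krylov--Bogolyubov theorem then yields a $(\sigma_1,\sigma_2)$-invariant Borel probability measure $\mu$ on $X_U$.

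Next I would show that no such $\mu$ can be finitely supported. If $\mu$ were finitely supported, then, as noted in Section~\ref{measure}, since $\mu$ is $\sigma_1$-invariant and $\sigma_1$ is a homeomorphism, $\sigma_1$ permutes the finite set $\operatorname{supp}(\mu)$; hence every $V\in\operatorname{supp}(\mu)$ satisfies $\sigma_1^q(V)=V$ for some $q\geq 1$. Reading off the $0$-th row of such a $V$ produces a bi-infinite word of period $q$, so the horizontal word $w^M$ with $w=V_{0,0}\cdots V_{q-1,0}$ of length $q\geq 1$ occurs in $V$, and therefore in $U$ because $\mathcal L_V\subseteq\mathcal L_U$. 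This contradicts the assumption that $U$ is horizontally $M$-power-free. Hence every $(\sigma_1,\sigma_2)$-invariant measure on $X_U$ fails to be finitely supported.

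Finally I would set $\lambda\colonequal\mu\circ\pi^{-1}$. Part~(1) of Proposition~\ref{not Lebesgue} gives that $\lambda$ is a $(\sigma,\Phi)$-invariant measure on $Y_U$ that is not the Haar measure, and Part~(2), applied to the measure $\mu$ just produced, gives that $\lambda$ is not finitely supported. There is essentially no hard step here: the corollary is a bookkeeping combination of Proposition~\ref{not Lebesgue} with the power-free hypothesis, the only care needed being the standard fact that a finitely supported shift-invariant measure is carried by periodic configurations — precisely the configurations excluded by horizontal $M$-power-freeness.
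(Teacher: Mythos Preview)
Your proof is correct and follows essentially the same approach as the paper: use Krylov--Bogolyubov to get a $(\sigma_1,\sigma_2)$-invariant measure on $X_U$, push it down via $\pi$, and invoke Proposition~\ref{not Lebesgue}, with horizontal $M$-power-freeness ruling out periodic configurations so that the resulting measure cannot have finite support. The only cosmetic difference is that the paper argues directly that $Y_U$ is aperiodic (hence no $\sigma$-invariant measure on $Y_U$ can be finitely supported), whereas you argue upstairs that $\mu$ is not finitely supported and then invoke Part~(2) of Proposition~\ref{not Lebesgue}; these are equivalent rephrasings of the same observation.
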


\begin{proof}
Recall that a finitely-supported $\sigma$-invariant measure $\lambda$ is supported on a set $\{y_1, y_2, \dots, y_n\} \subseteq Y_U$ where each $y_i$ is periodic.
If $X_U$ is horizontally $M$-power-free, then $Y_U$ is aperiodic.
Thus for any $(\sigma_1, \sigma_2)$-invariant measure $\mu$ on $(X, \sigma_1, \sigma_2)$, $\mu \circ \pi^{-1}$ is a $(\sigma, \Phi)$-invariant measure which is not finitely supported. By Proposition~\ref{not Lebesgue}, $\mu \circ \pi^{-1}$ is not the Haar measure.
\end{proof}

Note that if we take the initial condition $u$ to be an aperiodic fixed point of a primitive substitution, then, by results of Moss\'{e}~\cite{Mosse:92}, $u$ is $M$-power-free for some $M$.

Continuing with Example~\ref{intersection set}, Schmidt~\cite[Example 29.8]{Schmidt-1995} identifies a $(\sigma, \Phi)$-invariant measure which is supported on $\pi(X_U)$, where $\Phi$ is the Ledrappier cellular automaton, $U=\ST_\Phi(u)$, $u_m = 0$ for all $m \leq -1$, and $(u_m)_{m \geq 0}$ is a fixed point of the Thue--Morse substitution.
He does not study whether this measure is finitely supported;
our experiments suggest that this measure is a point mass supported on the constant zero configuration.
However in the next theorem we identify a family of substitutions which do yield nontrivial $(\sigma,\Phi)$-invariant measures for the Ledrappier cellular automaton.

Given a substitution $\theta:\F_p\rightarrow \F_p^p$, we write $\theta(a) = \theta_0(a) \cdots \theta_{p-1}(a)$.
We say that $\theta$ is {\em bijective} if, for each $i$ in the interval $0\leq i\leq p-1$, $\{\theta_i(a): a\in \F_p \}= \F_p$.

\begin{theorem}\label{power-free}
Let $\Phi:\F_3^\Z\rightarrow \F_3^\Z$ be the linear cellular automaton with generating polynomial $\phi(x)= x + 1$, let $\theta$ be a primitive bijective substitution on $\F_3$, and suppose that $u\in \F_3^\Z$ is a bi-infinite aperiodic fixed point of $\theta$.
Then there exists $M$ such that $\ST_\Phi(u)$ is horizontally $M$-power-free.
\end{theorem}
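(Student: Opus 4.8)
The plan is as follows. Since $\theta$ is primitive and $u$ is aperiodic, Moss\'{e}'s theorem (as in \cite{Mosse:92}) gives a constant $M_0$ with $u$ being $M_0$-power-free. The theorem will follow once we exhibit a single $M$ such that every row $\Phi^N(u)$ of $\ST_\Phi(u)$ is $M$-power-free, and I obtain this by reducing a hypothetical long horizontal power in some row back to a long power in $u=\Phi^0(u)$.

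The engine of the reduction is the self-similarity of the spacetime diagram, coming from $u$ being a fixed point of the \emph{affine bijective} length-$3$ substitution $\theta$ together with the freshman's dream $\phi(x)^{3^k}=(x+1)^{3^k}=x^{3^k}+1$ over $\F_3$. Writing the column maps of $\theta^k$ as $x\mapsto A_jx+B_j$ with each $A_j\in\F_3^\times$ (here bijectivity of $\theta$ is exactly what makes the $A_j$ invertible), and writing $N=3^k t$ with $3\nmid t$, an elementary generating-function computation gives $\Phi^{N}(u)_{3^k m+j}=A_j\,\Phi^{t}(u)_m+B_j\,2^{t}$; more generally, for $0\le i\le k$ the restriction of row $N$ to the residue class $j\bmod 3^i$ is an affine bijective image of row $3^{k-i}t$. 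Now suppose $\Phi^N(u)$ contains a horizontal $M$-power $w^M$ of period $\ell$, with $w$ primitive, and let $3^i=\gcd(\ell,3^k)$. Restricting this power to a residue class modulo $3^i$ rescales the period to $\ell/3^i$ while preserving the exponent $M$ up to an additive $O(1)$, and lands it inside an affine image of $\Phi^{3^{k-i}t}(u)$; moreover either $i=k$, so we have reached a row with exponent $3\nmid t$, or $i<k$ and now $3\nmid \ell/3^i$. Thus, at no cost, we are reduced either to a row $\Phi^{t}(u)$ with $3\nmid t$, or to a horizontal power of period coprime to $3$ in a row $\Phi^{3^{k'}t}(u)$ with $k'\ge 1$ and $3\nmid t$. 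In the latter case I use the local rule directly: writing $\Phi^{3^{k'}t}(u)_r=\sum_s\binom{t}{s}u_{r-3^{k'}s}$, the power relation forces $e_r:=u_r-u_{r+\ell}$, restricted to each residue class modulo $3^{k'}$, to lie in $\ker\Phi^{t}$; since the solutions of $(x+1)^t=0$ over $\F_3$ have the form $(\text{polynomial in }m)\cdot 2^m$, which collapse to sequences periodic with period bounded by an absolute constant, and since each residue class of $u$ is an affine image of $u$, one gets that $u$ satisfies, on a window of length $\Omega(M)$ measured in its own period-length, a fixed-lag linear recurrence with periodic forcing of bounded period; iterating this recurrence and using $\lambda^2=1$ for $\lambda\in\F_3^\times$ (so that three iterations annihilate the forcing term mod $3$) shows $u$ is periodic on that window with a bounded-ratio period, hence carries an $\Omega(M)$-power, contradicting $M_0$-power-freeness for $M$ large.

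It remains to treat the rows $\Phi^{t}(u)$ with $3\nmid t$ (including $t=0$, i.e.\ $u$ itself) by descent on $t$: with $t=3t'+j$, $j\in\{1,2\}$, each residue class of row $t$ is expressed through row $t'$ by a radius-$\le 1$ linear cellular automaton $c_{i,0}\Phi^{t'}(u)_{m'}+c_{i,1}\Phi^{t'}(u)_{m'-1}+(\text{const})$ with coefficients determined by $\theta$ and $j$. When the $3\times 2$ coefficient matrix has rank $2$, two residue classes recover $\Phi^{t'}(u)$ exactly, so a $3$-divisible horizontal power descends to row $t'$ with no loss while a period coprime to $3$ is handled as in the previous paragraph; in the degenerate rank-$1$ case two residue classes of row $t$ are forced to be constant, so a long power there forces a long run in the remaining residue class, which is an affine image of $\Phi^{t'+1}(u)$.

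The hard part — and where the bijectivity of $\theta$ and the precise shape $\phi(x)=x+1$ are essential — is controlling the \emph{total} loss: descending naively through the base-$3$ digits of $N$ incurs a bounded factor at each of $\sim\log N$ steps, which does not give a uniform $M$. The point is that all the residue-class reductions are genuinely loss-free (affine bijective images preserve power-freeness, and the rank-$2$ matrices let one recover earlier rows exactly rather than through a non-injective coding), so that only one truly lossy step is incurred, with an absolute implied constant. This produces a single $M$ valid for all $N$, and hence shows $\ST_\Phi(u)$ is horizontally $M$-power-free.
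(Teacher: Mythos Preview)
Your approach diverges substantially from the paper's, and it has real gaps that I do not see how to close.

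\textbf{The paper's route.} The paper never descends through the base-$3$ digits of the row index at all. It proves the single structural claim
\[
\Phi^{n\cdot 3^k}(u)\big|_{[m3^k,(m+1)3^k-1]}\in\{\theta^k(a),\,2\theta^k(a):a\in\F_3\}=:H_k,
\]
and observes (via Moss\'e) that every word in $H_k$ is $M$-power-free, so a short-period power cannot tile an entire $H_k$-block and hence must sit inside two adjacent blocks. Given a power $w^l$ in row $n$, one chooses $k$ so that $3^{k+1}\le |w^l|<3^{k+2}$, applies $\Phi$ \emph{forward} fewer than $3^k$ times to reach the next multiple of $3^k$, and then a two-case comparison of $|\bar w|$ against $3^k/M$ gives $l<9M$ directly. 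There is exactly one step, and the bound is uniform by construction.

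\textbf{Where your argument breaks.} Two places are not repairable as written.
\begin{enumerate}
\item Your assertion that ``solutions of $(x+1)^t=0$ over $\F_3$ \dots collapse to sequences periodic with period bounded by an absolute constant'' is false. A basis for the solution space of the recurrence with characteristic polynomial $(z+1)^t$ is $\{\binom{m}{j}(-1)^m:0\le j<t\}$, and by Lucas' theorem $m\mapsto\binom{m}{j}\bmod 3$ has period $3^{\lceil\log_3(j+1)\rceil}$. For $j$ close to $t$ this period is of order $t$, not $O(1)$. Hence the ``periodic forcing of bounded period'' in your recurrence for $u$ is not bounded, and the subsequent ``three iterations annihilate the forcing'' step does not go through.
\item The descent on $t$ with $3\nmid t$ (writing $t=3t'+j$) is asserted to be loss-free in the rank-$2$ case and to force constant residue classes in the rank-$1$ case, but you have not verified either claim for a general primitive bijective $\theta$; these coefficient matrices depend on $\theta$ and on $j$, and there is no reason their ranks behave as you say. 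Even granting the rank dichotomy, the total number of descent steps is $\Theta(\log t)$, and you have only asserted, not shown, that the aggregate loss is $O(1)$.
\end{enumerate}

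\textbf{What you did get right.} The affine description $u_{3^k m+j}=A_j u_m+B_j$ with $A_j\in\F_3^\times$ is correct for $p=3$ (every bijection of $\F_3$ is affine), and your formula $\Phi^{3^k t}(u)_{3^k m+j}=A_j\,\Phi^t(u)_m+B_j\,2^t$ is valid. This is essentially a pointwise form of the paper's structural claim above. But the paper exploits it by going forward to the nearest multiple of $3^k$ at the \emph{correct} scale $k$ (chosen from $|w^l|$), rather than by unwinding the full base-$3$ expansion of $n$; that choice is what makes the bound uniform with no accumulation of losses.
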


\begin{proof}
Since $\theta$ is bijective, $\theta$ satisfies Identity~\eqref{total sum}:
\[ 
	\theta^k(0) + \theta^k(1) + \theta^k(2) = 0^{3^k}.
\]
We claim that, for each $k\geq 1$, for each $n\geq 0$, and for each $m\in \Z$, we have
\begin{equation}\label{first claim}
	\Phi^{n\cdot 3^k}(u)|_{[m3^k, (m + 1) 3^k - 1]} \in
	\begin{cases}
		\{\theta^k(0), \theta^k(1), \theta^k(2)\}		& \text{if $n$ is even} \\
		\{2 \theta^k(0), 2 \theta^k(1), 2 \theta^k(2)\}	& \text{if $n$ is odd.}
	\end{cases}
\end{equation}
Fix $k\geq 1$.
Since $u$ is a bi-infinite fixed point of $\theta$, we have $u|_{[m3^k, (m + 1) 3^k - 1]} \in \{\theta^k(0), \theta^k(1), \theta^k(2)\}$.
Let $n = 1$.
Since $(x + 1)^{3^k} = x^{3^k} + 1$, we have
\begin{align*}
	\Phi^{3^k}(u)|_{[m3^k, (m + 1) 3^k-1]}
	&= u|_{[m3^k, (m + 1) 3^k-1]}+ u|_{[ (m-1)3^k, m 3^k - 1]} \\
	&= \theta^k(u_m) + \theta^k(u_{m-1}) \\
	&\in \{2\theta^k(0), 2\theta^k(1), 2\theta^k(2)\}
\end{align*}
for each $m\in \Z$.
The claim follows by induction on $n$ by replacing $u$ with $\Phi^{3^k}(u)$.

For each $k$, let
\[
	H_k = \{ \theta^k(0), 2\theta^k(0), \theta^k(1), 2\theta^k(1), \theta^k(2), 2\theta^k(2)\}.
\]
(Note that $H_k$ is not a group, contrary to the definition of an intersection set.)
Since $u$ is an aperiodic fixed point of a primitive substitution, Moss\'{e}'s theorem~\cite{Mosse:92} tells us that $u$ is $M$-power-free for some $M \geq 2$.
This implies that $\theta^k(a)$ is $M$-power-free for each $a\in \F_3$, and hence $2\theta^k(a)$ is also $M$-power-free.
Thus all words in $H_k$ are $M$-power-free, so 
if a power $w^l$ occurs as a subword of a word in $H_k$, then $l < M$.

Next note that, again because words in $H_k$ are $M$-power-free, if a word in $H_k$ is tiled by a word $w$ (that is, is a subword of $w^\infty$), then $|w| > \frac{3^k}{M}$.
This implies that if $|w| \leq \frac{3^k}{M}$ and $w^l$ occurs as a subword of $W_1\cdots W_{j} \in H_k^{j}$, then $w^l$ occurs as a subword of $W_i W_{i + 1}$ for some $1\leq i \leq j-1$,
 and so $l \leq 2M - 2$.

Given a word $w=w_1 \cdots w_m$ of length $m \geq 2$, define $\Phi(w) \colonequal (w_1+w_2) \cdots (w_{m -1}+w_m)$.
Suppose $w^l$ occurs in the $n$-th row of $\ST_\Phi(u)$.
We show that $l < 9 M$.
Let $k$ be such that $3^{k+1}\leq |w^l |=l |w|<3^{k+2}$.
Then $|w| < \frac{3^{k + 2}}{l}$.
Let $N$ be such that ${N\cdot 3^k}\leq n< (N+1)\cdot 3^k$.
Write $\Phi^{ (N+1)\cdot 3^k - n}(w^l)= \bar w^{\bar l} \bar v$, where the words $\bar w$ and $\bar v$ are such that $\bar l \geq 1$ is maximal and $\bar v$ is a prefix of $\bar w$ with $0\leq |\bar v|\leq |\bar w|-1$.
We have $|\bar w| \leq |w|$ since the period length of a word does not increase after applying $\Phi$.
There are two cases.

If $|\bar w| \geq \frac{3^k}{M}$, then $\frac{3^k}{M} \leq |\bar w| \leq |w| < \frac{3^{k + 2}}{l}$, so $l < 9 M$.

If $|\bar w| < \frac{3^k}{M}$, then, since $\bar w^{\bar l}$ occurs on row $(N+1) \cdot 3^k$, by \eqref{first claim} $\bar w^{\bar l}$ occurs as a subword of $W_1\cdots W_{j} \in H_k^{j}$ for some $j$.
By the argument above, $\bar w^{\bar l}$ occurs as a subword of $W_iW_{i+1}$ and therefore $\bar l \leq 2M-2$.
We also have
\begin{align*}
	|\bar w^{\bar l} \bar v| =
	|\Phi^{ (N+1)\cdot 3^k - n}(w^l)|
	&= |w^l| - \left((N+1)\cdot 3^k - n\right) \\
	&\geq 3^{k+1}- (N+1)\cdot 3^k + N \cdot 3^k \\
	&= 2\cdot 3^k,
\end{align*}
so
\[
	2\cdot 3^k \leq |\bar w^{\bar l} \bar v| < ( \bar l+1) |\bar w| \leq (2M -1)|\bar w|\leq (2M -1) |w|.
\]
Therefore $\frac{2\cdot 3^k}{2 M - 1} < |w| < \frac{3^{k + 2}}{l}$, so $l < \frac{9}{2} (2 M - 1) < 9 M$.

It follows that $\ST_\Phi(u)$ is $(9 M)$-power-free.
\end{proof}

\begin{remark}
Analogous to the construction preceding Example~\ref{intersection set}, we construct the shift $Y_k$ using $H_k$.
We do not need $H_k$ to be a group since we have shown that $\Phi^{n\cdot 3^k} (u)$ is a concatenation of words that belong to $H_k$. Since \eqref{first claim} holds for each $k\geq 1$, we have $\pi(X_{\ST_\Phi(u)})=\cap_k Y_k$.
\end{remark}

\begin{example}\label{p=3 example part 5}
We continue with our running example, in particular from Example~\ref{p=3 example part 4}, where $p=3$, $\Phi$ is the cellular automaton with generating function $\phi(x) = x + 1$, and the initial condition is generated by the substitution $\theta(0)=001$, $\theta(1)=112$, $\theta(2)=220$. We saw that $H_k$, the group generated by $\{ \theta^k(0), \theta^k(1), \theta^k(2) \}$, is
\[ H_k = \{ 0^{3^k}, 1^{3^k}, 2^{3^k}, \theta^k(0), 2\theta^k(0), \theta^k(1), 2\theta^k(1), \theta^k(2), 2\theta^k(2)\}.\]
If we take $u= \cdots u_{-2}u_{-1}\cdot u_{0} u_{1} \cdots$ to be any bi-infinite fixed point of $\theta$, then $\ST_\Phi(u)$ is horizontally $M$-power-free for some $M$ by Theorem~\ref{power-free}.
\end{example}

In Theorem~\ref{power-free}, we fixed the cellular automaton and prime $p$, and we let $\theta$ vary over a family of substitutions.
Next, for each $p$ we fix a substitution and vary the cellular automaton to obtain nontrivial $(\sigma, \Phi)$-invariant measures for a family of cellular automata.

\begin{definition}
For fixed $p$, let $W \colonequal 01 \cdots (p-1)$ and define $\theta: \F_p\rightarrow \F_p^p$ by $\theta(a) = W + a^{p}$, where $a^p$ denotes the word $a a \cdots a$ of length $p$. We call $\theta$ the (base-$p$) {\em parity substitution}.
\end{definition}

If $u\in \F_p^\N$ is the fixed point of the parity substitution starting with $0$, then $u_m$ is the sum, modulo $p$, of the digits in the base-$p$ representation of $m$.

\begin{lemma}
The fixed point $u \in \F_p^\N$ of the parity substitution $\theta:\F_p\rightarrow \F_p^{p}$ is not eventually periodic.
\end{lemma}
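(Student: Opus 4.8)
The plan is to show that eventual periodicity of $u$ leads to a contradiction with the self-similar structure forced by the parity substitution $\theta$. Recall that $u_m$ equals the digit sum modulo $p$ of the base-$p$ representation of $m$, so the claim is really a statement about this classical sequence.

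First I would assume, toward a contradiction, that $(u_m)_{m \geq 0}$ is eventually periodic: there exist $m_0 \geq 0$ and a period $q \geq 1$ such that $u_{m+q} = u_m$ for all $m \geq m_0$. Next I would exploit the substitutive structure: since $u$ is the fixed point of a length-$p$ substitution, for every $k$ the sequence decomposes into blocks $u|_{[j p^k, (j+1)p^k - 1]} = \theta^k(u_j)$, and because $\theta$ is bijective in each coordinate, $\theta^k(a)$ and $\theta^k(b)$ differ in \emph{every} position whenever $a \neq b$. Concretely, $u_{j p^k + i} = u_j + u_i$ for $0 \le i \le p^k-1$ (the digit-sum identity), so within a window of length $p^k$ the sequence is just a constant shift of $u|_{[0,p^k-1]}$, the shift being $u_j$.

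Now I would pick $k$ large enough that $p^k > m_0$ and $p^k > q$, and compare two windows of length $p^k$ that, on the one hand, are related by the period $q$ and, on the other hand, have different block values $u_j \ne u_{j'}$. Since $\theta$ is primitive, every letter of $\F_p$ appears among the $u_j$, so I can find indices $j \ne j'$ with $u_j \ne u_{j'}$ and with $j p^k, j' p^k \geq m_0$; by choosing $j' = j + q'$ appropriately (adjusting $q$ by a suitable multiple to align block boundaries, using that one can take the period to be a multiple of $p^k$ — eventual periodicity with period $q$ implies eventual periodicity with period $p^k q$) I arrange that the block starting at $j p^k$ and the block starting at $j' p^k$ are identical by periodicity, hence $\theta^k(u_j) = \theta^k(u_{j'})$, forcing $u_j = u_{j'}$, a contradiction.

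The main obstacle — really the only delicate point — is the bookkeeping that aligns the period with the block structure: I need to know that if $(u_m)$ is eventually periodic with period $q$ then it is eventually periodic with period $p^k q$ (immediate) and then that block boundaries at multiples of $p^k$ are eventually respected by this period, so that a period-translation carries one $\theta^k$-block exactly onto another. Once that alignment is in place, primitivity supplies two blocks with distinct seed letters that get identified, and bijectivity of $\theta$ (hence injectivity of $\theta^k$ on letters) delivers the contradiction. Alternatively, and perhaps more cleanly, I would phrase the whole argument through the digit-sum description: if $u_{m+q} = u_m$ for all large $m$, evaluate at $m = p^k$ for $k$ large with $p^k > q$, so that $m+q$ and $m$ have the same leading digit $1$ and the periodicity reduces to a statement about $u$ on $[0, p^k)$ being invariant under adding $q$, which can be iterated to contradict the fact that $u$ takes all values in $\F_p$ with controlled positions; I would present whichever of these two formulations is shorter in the final write-up.
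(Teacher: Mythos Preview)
Your first approach contains a circularity. You want to pick $j,j'$ with $u_j\neq u_{j'}$, both large enough, \emph{and} with $j'-j$ a multiple of the period $q$ (so that the $\theta^k$-blocks at $jp^k$ and $j'p^k$ are identified by periodicity). But the block argument you set up proves exactly that whenever $j,j'\geq 1$ and $j\equiv j'\pmod q$ one has $\theta^k(u_j)=\theta^k(u_{j'})$, hence $u_j=u_{j'}$. In other words, your argument shows $(u_j)_{j\geq1}$ is itself periodic with period $q$; it does not produce two period-aligned blocks with distinct seeds. Primitivity only tells you that all letters occur somewhere in $(u_j)$, not that two distinct letters occur at congruent positions modulo $q$. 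So the ``contradiction'' you announce is not yet there; a further step is needed (and that step turns out to require a case split on whether $u_q=0$).

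Your second, digit-sum formulation is the right one and is essentially what the paper does, but the paper is completely explicit where your sketch is vague. For a candidate period $q$ with base-$p$ representation $q_\ell\cdots q_0$ ($q_\ell\neq0$), the paper distinguishes two cases. If $u_q\neq0$, take $m=p^N$ with $N>\ell$: then $u_m=1$ while $u_{m+q}=1+u_q$, so $u_m\neq u_{m+q}$. If $u_q=0$, your choice $m=p^N$ no longer separates the values; instead one takes $m=p^N+(p-q_\ell)p^\ell$ with $N>\ell+1$, engineering a single carry at position~$\ell$ so that $u_m\equiv 1-q_\ell$ while $u_{m+q}\equiv 2-q_\ell$. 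Either way one exhibits arbitrarily large $m$ with $u_m\neq u_{m+q}$. Your ``iterate to contradict that $u$ takes all values'' does not substitute for this second case; make the carry argument explicit.
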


\begin{proof}

For each candidate period length $k$, we show that there are arbitrarily large $m$ such that $u_m \neq u_{m+k}$.
Let $k_\ell \cdots k_1 k_0$ be the base-$p$ representation of $k$, with $k_\ell \neq 0$.
If $u_k \neq 0$, let $m = p^N$ for some $N > \ell$; then $u_m = 1 \nequiv 1 + u_k \equiv u_{m+k} \mod p$.
If $u_k = 0$, let $m = p^N + (p - k_\ell) p^\ell$ for some $N > \ell + 1$; then $u_m \equiv 1 + p - k_\ell \nequiv 2 - k_\ell \equiv u_{m+k} \mod p$.

\end{proof}

\begin{theorem}\label{power-free general}
Let $u\in \F_p^\Z$ be a fixed point of the parity substitution $\theta:\F_p\rightarrow \F_p^{p}$,
let $\Phi:\F_p^\Z\rightarrow \F_p^\Z$ be a linear cellular automaton,
and let $L$ be the number of nonzero monomials in the generating polynomial of $\Phi$.
If $p$ does not divide $L$, then there exists $M$ such that $\ST_\Phi(u)$ is horizontally $M$-power-free.
\end{theorem}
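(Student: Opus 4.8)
The plan is to follow the proof of Theorem~\ref{power-free}, replacing its two ingredients special to $\F_3$ by general-$p$ analogues. The role of $(x+1)^{3^k}=x^{3^k}+1$ is played by the freshman's dream in the form $\phi(x)^{p^k}=\phi\bigl(x^{p^k}\bigr)$, valid because $\phi$ has coefficients in $\F_p$; the role of the relation $\theta^k(0)+\theta^k(1)+\theta^k(2)=0^{3^k}$ is played by the affine structure of the parity substitution. Writing $\mathbf 1_N$ for the all-ones word of length $N$, an easy induction on $k$ gives $\theta^k(a)=\theta^k(0)+a\,\mathbf 1_{p^k}$ for every $a\in\F_p$, so $a\mapsto\theta^k(a)$ is an affine map $\F_p\to\F_p^{p^k}$.

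The first step is to show that for all $k\geq 0$ and $n\geq 0$ the $(n\,p^k)$-th row of $\ST_\Phi(u)$ is a concatenation of words of the form $\lambda\,\theta^k(a)$ with $\lambda\in\F_p\setminus\{0\}$ and $a\in\F_p$. Since $\phi(x)^{n p^k}=\phi\bigl(x^{p^k}\bigr)^n$ is supported on exponents divisible by $p^k$, and since $u$ is a fixed point of $\theta$ so that $u|_{[m p^k,(m+1)p^k-1]}=\theta^k(u_m)$, the restriction of row $n p^k$ to the window $[m p^k,(m+1)p^k-1]$ equals $\sum_j\gamma_j^{(n)}\,\theta^k(u_{m-j})$, where $\gamma_j^{(n)}$ is the coefficient of $y^j$ in $\phi(y)^n$. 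By the affine identity this block equals $\bigl(\sum_j\gamma_j^{(n)}\bigr)\theta^k(0)+c\,\mathbf 1_{p^k}$ for some $c\in\F_p$, and $\sum_j\gamma_j^{(n)}=\phi(1)^n$, which is nonzero since $p\nmid L$. Using the affine identity once more, the block is therefore $\phi(1)^n\,\theta^k\bigl(\phi(1)^{-n}c\bigr)$, a nonzero scalar multiple of some $\theta^k(a)$. Hence every such block lies in the finite set $H_k\colonequal\{\lambda\,\theta^k(a):\lambda\in\F_p\setminus\{0\},\ a\in\F_p\}$.

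It then remains to run the power-free bookkeeping of Theorem~\ref{power-free} with a general radius. The parity substitution is primitive and $u$ is aperiodic by the lemma above, so Moss\'e's theorem gives $M\geq 2$ with $u$ being $M$-power-free; hence so is every $\theta^k(a)$, and, since multiplication by a nonzero scalar is a bijection of $\F_p$, so is every word in every $H_k$. As in Theorem~\ref{power-free}, a power $w^l$ with $|w|$ small relative to $p^k$ occurring inside a concatenation of $H_k$-words must already occur inside a concatenation of two of them, bounding $l$ in terms of $M$; a power with $|w|$ large relative to $p^k$ is bounded by a length estimate. Finally, given an arbitrary power $w^l$ in row $n$, I would pick $k$ so that $l\,|w|$ lies between two consecutive powers of $p$ with a margin depending on $\ell+r$, evolve $w^l$ forward by $\Phi$ into the nearest ``nice'' row $(N+1)p^k$ (using that $\Phi$ never increases the minimal period of a word and shortens a finite word by exactly $\ell+r$ per step), and apply the previous step. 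This bounds $l$ in terms of $M$ and $\ell+r$, so $\ST_\Phi(u)$ is horizontally $M'$-power-free for a suitable $M'$.

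The main obstacle is this last reduction: one must verify that after evolving $w^l$ into the nice row the surviving word is still long enough, relative to its period, to contain a large power, uniformly over all $n$ — the delicate estimate carried out for $\ell+r=1$ in the proof of Theorem~\ref{power-free}, now needing a margin of size about $\log_p(\ell+r)$ in the choice of $k$. A secondary point is that the $\F_3$ argument used a genuinely linear relation among the $\theta^k(a)$, whereas here only the affine identity $\theta^k(a)=\theta^k(0)+a\,\mathbf 1_{p^k}$ is available; one must check that, combined with $p\nmid L$, this still confines every block of every nice row to the uniformly power-free family $H_k$.
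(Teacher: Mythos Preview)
Your approach is essentially the paper's: show that the length-$p^k$ blocks on rows $np^k$ all lie in a uniformly $M$-power-free family $H_k$, then push an arbitrary power $w^l$ forward to such a row and bound $l$. The paper carries out exactly the bookkeeping you sketch, choosing $k$ so that $(\ell+r)p^{k+1}\le |w^l|<(\ell+r)p^{k+2}$ and obtaining the explicit bound $l\le\max\bigl((\ell+r)p^2M,\,\lceil\tfrac{p^2}{p-1}(2M-1)\rceil\bigr)$; your anticipated ``margin of size about $\log_p(\ell+r)$'' is precisely this shift in the choice of $k$, and the length estimate $|\bar w^{\bar l}\bar v|\ge (\ell+r)(p-1)p^k$ goes through just as in the $\F_3$ case.

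One genuine gap to close: you correctly compute the scalar on the nice rows as $\phi(1)^n$, but then write ``nonzero since $p\nmid L$''. In general $\phi(1)=\sum_i\alpha_i$ is not the same as $L$, the \emph{number} of nonzero $\alpha_i$. For instance, with $\phi(x)=2+x^{-1}$ over $\F_3$ one has $L=2$ but $\phi(1)=0$; then every block on row $3^k$ is a constant word of length $3^k$, so the spacetime diagram is not horizontally $M$-power-free for any $M$. The paper writes the scalar as $L^n$ rather than $\phi(1)^n$, but what the argument actually needs, in either formulation, is $\phi(1)\not\equiv 0\pmod p$; the stated hypothesis $p\nmid L$ only guarantees this when all nonzero coefficients of $\phi$ equal $1$. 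You should either adjust the hypothesis or note this restriction explicitly.
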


\begin{proof}
The proof is similar to that of Theorem~\ref{power-free}.
We refine Equation~\eqref{first claim} to claim that
\begin{equation}\label{first claim extra}
	\Phi^{n\cdot p^k}(u)|_{[mp^k, (m + 1) p^k - 1]} \in
		\{(L^n \bmod p) \theta^k(0) + a^p: a\in \F_p\}
\end{equation}
for each $n\in \N$. The proof of this claim is by induction, as in Theorem~\ref{power-free}. Note that $L^n \nequiv 0 \mod p$ for every $n$, since $p$ does not divide $L$.
Next we let
\[
	H_k = \{ j \theta^k(0)+ a^p : \text{$a\in \F_p$ and $j \equiv L^n \mod p$ for some $n\in \N$}\}.
\]
As in the proof of Theorem~\ref{power-free}, there exists $M \geq 2$ such that all words in $H_k$ are $M$-power-free.
Also, if $|w| \leq \frac{p^k}{M}$ and $w^l$ occurs as a subword of $W_1\cdots W_{j} \in H_k^{j}$, then $l \leq 2M - 2$.

Let $\ell$ and $r$ be the left and right radii of $\Phi$.
Given a word $w=w_1 \cdots w_m$ of length $m \geq 2$, define $\Phi(w)$ to be the word of length $m - \ell - r$ obtained by applying $\Phi$'s local rule.
Suppose $w^l$ occurs in the $n$-th row of $\ST_\Phi(u)$.
We show that
\[
	l \leq \max\left((\ell + r) p^2 M, \left\lceil \frac{p^2}{p - 1} (2 M - 1) \right\rceil\right).
\]
If $\frac{| w^l |}{\ell + r} < p$, then $l \leq l |w| < (\ell + r) p < (\ell + r) p^2 M$.
If $\frac{| w^l |}{\ell + r} \geq p$, let $k$ be such that $(\ell + r) p^{k + 1} \leq | w^l | = l |w| < (\ell + r) p^{k + 2}$.
Then $|w| < \frac{(\ell + r) p^{k + 2}}{l}$.
Let $N$ be such that ${N\cdot p^k}\leq n< (N+1)\cdot p^k$.
Write $\Phi^{ (N+1)\cdot p^k - n}(w^l)= \bar w^{\bar l} \bar v$, where the words $\bar w$ and $\bar v$ are such that $\bar l \geq 1$ is maximal and $\bar v$ is a prefix of $\bar w$ with $0\leq |\bar v|\leq |\bar w|-1$.
We have $|\bar w| \leq |w|$ since the period length of a word does not increase after applying $\Phi$.
There are two cases.

If $|\bar w| \geq \frac{p^k}{M}$, then $\frac{p^k}{M} \leq |\bar w| \leq |w| < \frac{(\ell + r) p^{k + 2}}{l}$, so $l < (\ell + r) p^2 M$.

If $|\bar w| < \frac{p^k}{M}$, then, since $\bar w^{\bar l}$ occurs on row $(N+1) \cdot p^k$, by \eqref{first claim extra} $\bar w^{\bar l}$ occurs as a subword of $W_1\cdots W_{j} \in H_k^{j}$ for some $j$.
By the same argument in the proof of Theorem~\ref{power-free}, $\bar w^{\bar l}$ occurs as a subword of $W_iW_{i+1}$ and therefore $\bar l \leq 2M-2$.
We also have
\begin{align*}
	|\bar w^{\bar l} \bar v| =
	|\Phi^{ (N+1)\cdot p^k - n}(w^l)|
	&= |w^l| - \left((N+1)\cdot p^k - n\right)(\ell + r) \\
	&\geq (\ell + r) p^{k + 1} - (N+1)\cdot p^k(\ell + r) + N \cdot p^k(\ell + r) \\
	&= (\ell + r) p^{k + 1} - p^k (\ell + r) \\
	&= (\ell + r) (p - 1) p^k
\end{align*}
so
\[
	(\ell + r) (p - 1) p^k \leq |\bar w^{\bar l} \bar v| < ( \bar l+1) |\bar w| \leq (2M -1)|\bar w|\leq (2M -1) |w|.
\]
Therefore $\frac{(\ell + r) (p - 1) p^k}{2 M - 1} < |w| < \frac{(\ell + r) p^{k + 2}}{l}$, so $l < \frac{p^2}{p - 1} (2 M - 1) \leq \left\lceil \frac{p^2}{p - 1} (2 M - 1) \right\rceil$.

It follows that $\ST_\Phi(u)$ is $\max\left((\ell + r) p^2 M, \left\lceil \frac{p^2}{p - 1} (2 M - 1) \right\rceil\right)$-power-free.
\end{proof}

\begin{question}
Given a linear cellular automaton $\Phi:\F_p^\Z\rightarrow \F_p^\Z$, what is the proportion of length-$p$ substitutions $\theta:\F_p\rightarrow \F_p^p$, with a bi-infinite $\theta$-fixed point $u$, for which there exists an $M \geq 2$ such that $\ST_\Phi(u)$ is horizontally $M$-power-free?
\end{question}

Einsiedler~\cite{Einsiedler-2004}, as well as finding the invariant sets that are discussed in Section~\ref{invariant sets}, shows the existence of shift-invariant measures supported on a subset of $X_\Phi$ (the set of spacetime diagrams for a linear cellular automaton $\Phi$).
He asks: What are the ergodic measures on $X$?
Our contribution is to identify simplices of invariant measures that are generated by ergodic measures supported on codings of substitutional sets. The invariant measures of a substitutional dynamical system can be derived from its incidence matrix: see \cite{Bezuglyi:2010} for a thorough description of how to compute them from the relevant Perron vectors of the matrix. The theory for two-dimensional substitutions is very similar and is described for primitive substitutions in \cite{Bartlett-2018}.

\begin{theorem}\label{nature of measures}
Let $\Phi:\F_p^\Z\rightarrow \F_p^\Z$ be a linear cellular automaton, and let $U \in \F_p^{\Z\times \Z}$ be a $[-p,-p]$-automatic spacetime diagram for $\Phi$.
Then there exists a simplex of $(\sigma_1,\sigma_2)$-invariant measures generated by the relevant Perron vectors of the incidence matrices of the four substitutions defining $U$.
\end{theorem}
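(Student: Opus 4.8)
The plan is to reduce the description of $(\sigma_1,\sigma_2)$-invariant measures on $X_U$ to the known structure theory of invariant measures of two-dimensional substitutive subshifts, by showing that every invariant measure on $X_U$ is ``blind to the coordinate axes'' of $U$ and hence lives on one of four substitutive pieces. First I would unpack the hypothesis: by Proposition~\ref{-p iff p} each of the four quadrant restrictions $U|_{Q_i}$, with $Q_i$ ranging over $(\pm\N)\times(\pm\N)$, is $[p,p]$-automatic, so by Part~(2) of Theorem~\ref{Cobham} there are substitutions $\Theta_i:\mathcal A_i\to\mathcal A_i^{p\times p}$, codings $\tau_i:\mathcal A_i\to\F_p$, and letters $a_i$ with $U|_{Q_i}=\tau_i(\Theta_i^\infty(a_i))$; these are the four substitutions of the statement. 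Let $X_i\colonequal\{V\in\F_p^{\Z\times\Z}:\mathcal L_V\subseteq\mathcal L_{U|_{Q_i}}\}$ be the $\Z\times\Z$-subshift carrying the language of the $i$-th quadrant. Since $\mathcal L_{U|_{Q_i}}\subseteq\mathcal L_U$ we have $X_i\subseteq X_U$, so every invariant measure on $X_i$ is an invariant measure on $X_U$.

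The heart of the argument is the reverse containment at the level of supports. Call a finite word a \emph{boundary word} if it lies in $\mathcal L_U\setminus\bigcup_i\mathcal L_{U|_{Q_i}}$, i.e.\ it occurs in $U$ but inside no single quadrant; such a word occurs in $U$ only in a strip of bounded width around one of the two coordinate axes, so any $N\times N$ window of $U$ contains at most $C\,N$ occurrences of a fixed boundary word $w$ (and none when the window meets neither axis), for a constant $C$ depending only on $w$. Now fix an invariant measure $\mu$; by ergodic decomposition it suffices to treat $\mu$ ergodic. For $V\in X_U$ every window $V|_{[-N,N]^2}$ occurs in $U$, hence contains $O(N)$ occurrences of $w$; by the pointwise ergodic theorem for the $\Z^2$-action generated by $\sigma_1,\sigma_2$ the density of occurrences of $w$ in a $\mu$-generic $V$ equals $\mu([w])$ and is bounded by $\lim_{N} O(N)/(2N+1)^2=0$. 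Thus $\mu([w])=0$ for every boundary word $w$, so $\mu$-almost every $V$ contains no boundary word. For such a $V$, every window $V|_{[-N,N]^2}$ lies in some $\mathcal L_{U|_{Q_{i(N)}}}$; by the pigeonhole principle one index $i$ recurs for arbitrarily large $N$, and since the language of a quadrant is closed under subwords this forces $\mathcal L_V\subseteq\mathcal L_{U|_{Q_i}}$, i.e.\ $V\in X_i$. Hence $\operatorname{supp}\mu\subseteq X_1\cup X_2\cup X_3\cup X_4$, and since each $X_i$ is measurable and $(\sigma_1,\sigma_2)$-invariant, ergodicity gives $\mu(X_i)=1$ for some $i$; thus $\mu$ is an invariant measure of the single subshift $X_i$.

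It follows that the ergodic $(\sigma_1,\sigma_2)$-invariant measures on $X_U$ are exactly the union of the ergodic invariant measures of $X_1,\dots,X_4$. To finish I would invoke the structure theory of invariant measures of two-dimensional substitutive subshifts: decomposing each $\Theta_i$ into its recurrent primitive components and applying the two-dimensional analogue described in \cite{Bartlett-2018} (with \cite{Bezuglyi:2010} as the one-dimensional template), the invariant measures of $X_i$ form a simplex whose extreme points are the measures built from the relevant Perron eigenvectors of the incidence matrices of the components of $\Theta_i$, pushed forward by the coding $\tau_i$. Taking the weak-$*$ closed convex hull of all these measures over $i=1,\dots,4$ produces the asserted simplex; that the full set of $(\sigma_1,\sigma_2)$-invariant probability measures on the compact metric system $X_U$ is a Choquet simplex is the standard fact for $\Z^2$-actions, as in \cite{Walters-1982}.

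The main obstacle is this final step: controlling the invariant measures of the individual quadrant subshifts when the substitutions $\Theta_i$ delivered by Cobham's theorem need not be primitive. This requires reducing a constant-length two-dimensional substitution to its recurrent primitive components and checking that the coding $\tau_i$ introduces no extra ergodic measures beyond the Perron data of those components — the point at which the proof leans on the external substitution-theoretic references rather than on the combinatorics developed in this paper. The remaining ingredients (Proposition~\ref{-p iff p}, Theorem~\ref{Cobham}, the ergodic theorem, and the strip estimate for boundary words) are routine.
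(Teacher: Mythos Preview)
The paper provides no proof of this theorem: it is stated bare, with the preceding paragraph pointing to \cite{Bezuglyi:2010} and \cite{Bartlett-2018} for the structure of invariant measures of substitutive systems, and implicitly relying on Proposition~\ref{-p iff p} and Part~(2) of Theorem~\ref{Cobham} to produce the four substitutions. So there is nothing to compare against beyond that outline.

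Your proposal is therefore more detailed than the paper itself. The reduction you give --- that every ergodic $(\sigma_1,\sigma_2)$-invariant measure on $X_U$ must live on one of the four quadrant subshifts $X_i$ --- is correct and is the natural way to make the statement precise. The key estimate, that any boundary word $w \in \mathcal L_U \setminus \bigcup_i \mathcal L_{U|_{Q_i}}$ has at most $O(N)$ occurrences in any $N\times N$ block of $U$ (because every occurrence must straddle a coordinate axis), together with the pointwise $\Z^2$ ergodic theorem, does force $\mu([w])=0$; the pigeonhole argument then pins $\mu$-generic $V$ to a single $X_i$. The paper never writes this step down.

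The one place your proof is no stronger than the paper's is exactly the step you flag: passing from a possibly non-primitive constant-shape substitution $\Theta_i$ to its primitive components and reading off the ergodic measures from the Perron data. The paper defers this to the cited references just as you do. Note also that the theorem as literally phrased (``there exists a simplex \dots'') is an existence claim and could be read more weakly than the characterisation you prove; the introduction, however, makes clear that the intended content is the stronger statement you establish.
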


\subsection{Automatic spacetime diagrams with finitely supported invariant measures}\label{Spacetime diagrams with atomic frequencies}

Given a length-$p$ substitution $\theta:\mathcal A \rightarrow \mathcal A^{*}$, recall that we write 
$\theta (a)= \theta_0(a) \cdots \theta_{p-1}(a)$, i.e.\ for $0\leq i \leq p-1$ we have a map
$\theta_i:\mathcal A \rightarrow \mathcal A$ where $\theta_i(a)$ is the 
$(i+1)$-st letter of $\theta(a)$.
We say that $\theta$ {\em has a coincidence} if 
there exists $k \geq 1$ and $i_1, \dots, i_k$ such that
\[
	|\theta_{i_1} \circ \dots \circ \theta_{i_k}(\mathcal A)| = 1.
\]
(The notion of a coincidence has dynamical significance, as a constant-length substitution with a coincidence defines a subshift which has discrete spectrum and so is measure theoretically a group rotation. There are various generalisations of the notion of a coincidence, such as the {\em strong coincidence condition}~\cite{Arnoux-Ito-2001} for non-constant-length substitutions; it is conjectured that a substitution satisfying the strong coincidence condition also has discrete spectrum.)
By considering a power of $\theta$ if necessary, we assume that the coincidence is achieved by $\theta$, i.e.\ $|\theta_i(\mathcal A)|=1$ for some $i$. Analogously, 
 we say that a $p$-automatic sequence $u$ has a coincidence if $u = \tau(\theta^\infty(a))$ for some length-$p$ substitution $\theta$ with a coincidence.
Given a word $w=w_0 w_1 \cdots w_n$, let $w_{[i,j)} \colonequal w_i w_{i+1} \cdots w_{j-1}$. 

Let $\Phi$ be a linear cellular automaton, let $ u \in \F_p^\Z$, and let $U=\ST_\Phi(u)$. Notice that $X_U$ contains the constant zero configuration if for all $N$ and $m$ there exists $n>N$ and $k$ such that $0^{m}$ occurs in the row $\Phi^n(u)$ starting at index $k$, as this implies that $\ST_\Phi(u)$ contains arbitrarily large triangles of $0$'s.
We investigate when $X_U$ contains constant configurations.

\begin{remark}\label{left radius 0}
In the following two theorems we assume that the cellular automaton $\Phi$ has left radius $0$. This is not a serious restriction for the following reason.
If $\Phi$ has generating polynomial $\phi(x)$ and has left radius $\ell$, then the generating polynomial $x^{-\ell}\phi(x)$ is the generating polynomial of a linear cellular automaton $\Psi$ with left radius $0$. Further, the $n$-th row of $\ST_\Psi(u)$ is the left shift, by $\ell n$ units, of the $n$-th row of $\ST_\Phi(u)$. In the case where $u_m = 0$ for $m \leq 0$, this tells us that the shears of $\ST_\Psi(u)$ and $\ST_\Phi(u)$ coincide.
By Theorem~\ref{Eilenberg shear}, the unsheared spacetime diagram $\ST_\Phi(u)$ has a finite $[p, p]$-kernel if and only if the sheared spacetime diagram $\ST_\Psi(u)$ is $[p, p]$-automatic.
\end{remark}

Note that Theorems~\ref{coincidence} and \ref{unique letter at nonzero coefficients} do not apply to the generating polynomial $\phi(x) = x + 1 \in \F_3[x]$ in Examples~\ref{p=3 example}, \ref{p=3 example part 2}, and \ref{p=3 example part 3} (even after shearing as in Remark~\ref{left radius 0}), since $\sum_{i = -\ell}^r \alpha_i \neq 0$.

\begin{theorem}\label{coincidence}
Let $u \in \F_p^\Z$ be such that $(u_m)_{m\geq 0}$ is $p$-automatic with a coincidence, and let $U=\ST_{\Phi}(u)$.
Let $\Phi:\F_p^\Z \rightarrow \F_p^\Z$ be a linear cellular automaton of left radius $0$ with generating polynomial $\phi(x) = \sum_{i=0}^r \alpha_i x^{-i} \in \F_p[x^{-1}]$.
If $\sum_{i=0}^{r}\alpha_i = 0$, then the constant zero configuration is an element of $X_U$.
\end{theorem}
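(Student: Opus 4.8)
The plan is to exhibit, in rows of $\ST_\Phi(u)$ occurring at arbitrarily late times, horizontal runs of $0$'s of arbitrary length; by the observation made just before the theorem this forces the constant zero configuration into $X_U$. Write $(u_m)_{m\ge 0}=\tau(\theta^\infty(a))$ for a length-$p$ substitution $\theta$ with a coincidence and, as noted in the excerpt, pass to a power of $\theta$ so that the coincidence is achieved by $\theta$ itself: there are $i^*\in\{0,\dots,p-1\}$ and a letter $c^*$ with $(\theta(b))_{i^*}=c^*$ for every letter $b$.

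The first step is to extract the ``frozen block'' structure this coincidence imposes on $u$. Since $\theta^\infty(a)$ is a fixed point of $\theta^K$, we have $u|_{[p^K m,\ p^K(m+1))}=\tau(\theta^K(d_m))$, where $d_m$ denotes the $m$-th letter of $\theta^\infty(a)$. Decomposing $\theta^K(d_m)=\theta^{K-1}\!\big((\theta(d_m))_0\big)\cdots\theta^{K-1}\!\big((\theta(d_m))_{p-1}\big)$ and using $(\theta(d_m))_{i^*}=c^*$ shows that the sub-block of $\theta^K(d_m)$ on positions $[\,i^*p^{K-1},(i^*+1)p^{K-1}\,)$ equals $\theta^{K-1}(c^*)$, independently of $d_m$. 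Hence for all $K\ge 1$ and all $m\ge 0$,
\[
	u\big|_{[\,p^K m+i^*p^{K-1},\ p^K m+(i^*+1)p^{K-1}\,)}=\tau(\theta^{K-1}(c^*)),
\]
a single word of length $p^{K-1}$; in particular, for each fixed $j\in[\,i^*p^{K-1},(i^*+1)p^{K-1}\,)$ the sequence $t\mapsto u_{p^K t+j}$ is constant on $t\ge 0$.

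The second step invokes the ``freshman's dream''. Because the coefficients $\alpha_i$ lie in $\F_p$, the generating polynomial of $\Phi^{p^K}$ is $\phi(x)^{p^K}=\sum_{i=0}^r\alpha_i x^{-ip^K}$, so $(\Phi^{p^K}u)_n=\sum_{i=0}^r\alpha_i u_{n+ip^K}$. Setting $n=p^K t+j$ with $t\ge 0$ and $j\in[\,i^*p^{K-1},(i^*+1)p^{K-1}\,)$ gives $(\Phi^{p^K}u)_{p^K t+j}=\sum_{i=0}^r\alpha_i u_{p^K(t+i)+j}$, and by the first step every term $u_{p^K(t+i)+j}$ equals one and the same value $\gamma_j:=\tau\big((\theta^{K-1}(c^*))_{\,j-i^*p^{K-1}}\big)$. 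Therefore $(\Phi^{p^K}u)_{p^K t+j}=\gamma_j\sum_{i=0}^r\alpha_i=0$, using the hypothesis $\sum_i\alpha_i=0$. Taking $t=0$, the row $\Phi^{p^K}(u)$ of $\ST_\Phi(u)$ contains $0^{p^{K-1}}$ starting at index $i^*p^{K-1}$; since the times $p^K$ and the lengths $p^{K-1}$ are both unbounded as $K\to\infty$, the criterion recalled above yields the constant zero configuration in $X_U$. Only positions $\ge 0$ of $u$ are ever referenced, since $\Phi$ has left radius $0$, so the uncontrolled left half of $u$ plays no role; the degenerate case $r=0$ is trivial, since then $\sum_i\alpha_i=\alpha_0=0$ makes $\Phi$ the zero map.

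The hard part will be purely combinatorial: one must carefully match the positions of $\theta^K(d_m)$ that the level-one coincidence freezes with the arithmetic progressions modulo $p^K$ along which the decimation identity $(\Phi^{p^K}u)_{p^K t+j}=\sum_i\alpha_i u_{p^K(t+i)+j}$ lets $\Phi^{p^K}$ annihilate constants. Once that alignment is set up correctly, the remainder is routine bookkeeping.
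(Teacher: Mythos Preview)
Your proof is correct and follows essentially the same approach as the paper's: locate the frozen block $\tau(\theta^{K-1}(c^*))$ sitting in a fixed residue window inside every $\theta^K$-block of $u$, then apply the freshman's dream so that $\Phi^{p^K}$ sums $r+1$ identical copies of that block with weights $\alpha_i$ summing to $0$. The only cosmetic difference is that the paper first treats the case where the coincidence occurs in column $0$ and then remarks that a general column $L$ is handled by translation, whereas you carry the general column $i^*$ through directly; your bookkeeping is in fact slightly cleaner.
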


\begin{proof}
Let $\theta:\mathcal A\rightarrow \mathcal A^{p}$ and $\tau:\mathcal A\rightarrow \F_p$ be the underlying substitution and coding defining $(u_m)_{m\geq 0}$.
Suppose first that $|\{\theta_{0}(a):a \in \mathcal A\}|=1$, i.e.\ that the coincidence is achieved in the leftmost column $\theta_0$, and also that the coincidence is attained by $\theta$. 
Thus there exists $a^*$ such that $\theta_0(a)=a^{*}$ for each $a\in \mathcal A$ and $ u_{np}=\tau (a^{*})$ for each $n\geq 0$. 
Since $ u$ is the coding of a $\theta$-fixed point, we have that $ u_{[ np^{j+1},np^{j+1}+p^j )} = \tau (\theta^j(a^{*}))$ for each $j\geq 0$ and each $n\geq 0$.

Since $\Phi^{p^\ell}$ has generating polynomial $ \sum_{i=0}^{r}\alpha_i x^{-ip^\ell}$, then
\[
	\Phi^{p^{j+1}}( u )_{ [ 0,p^j ) }
	= \sum_{i=0}^{r}\alpha_i u _{ [ ip^{j+1},ip^{j+1}+p^j)}
	= \sum_{i=0}^{r}\alpha_i \tau (\theta^j(a^{*} ))=0^{p^j},
\]
and in fact for each $m\geq 0$
\[
	\Phi^{p^{j+1}}(u )_{ [ mp^{j+1},mp^{j+1}+p^j ) }
	= \sum_{i=0}^{r}\alpha_i u _{ [ ip^{j+1}+mp^{j+1}, ip^{j+1}+mp^{j+1}+p^j)}
	= \sum_{i=0}^{r}\alpha_i \tau (\theta^j(a^{*} ))=0^{p^j}.
\] 

If the coincidence is achieved in the column $\theta_L$, we translate the above argument, starting with the modification that $ u_{np+L}=\tau (a^{*})$ for each $n\geq 0$, and adjusting accordingly.
\end{proof} 

\begin{example}\label{column_number_1_example}
Let $\theta$ be the substitution $\theta(a)=ab, \theta(b)=cd, \theta(c)=ac, \theta(d)=da$, and let $\tau(a)=\tau(c)=0, \tau(b)=\tau(d)=1$. Then $\theta^4$ has a coincidence in the 5-th column.
Let $u \colonequal \tau(\theta^{\infty}(a))$ and let $\phi(x) = 1 + x^{-1}$. Theorem~\ref{coincidence} tells us that $\ST_{\Phi}(u)$ contains arbitrarily large patches of $0$; see Figure~\ref{column_number_1_picture}.
The left half of the initial condition is the image under $\tau$ of the left-infinite fixed point of $\theta^3$ ending with $a$.
\end{example}

\begin{figure}
	\center{\includegraphics[scale=.75]{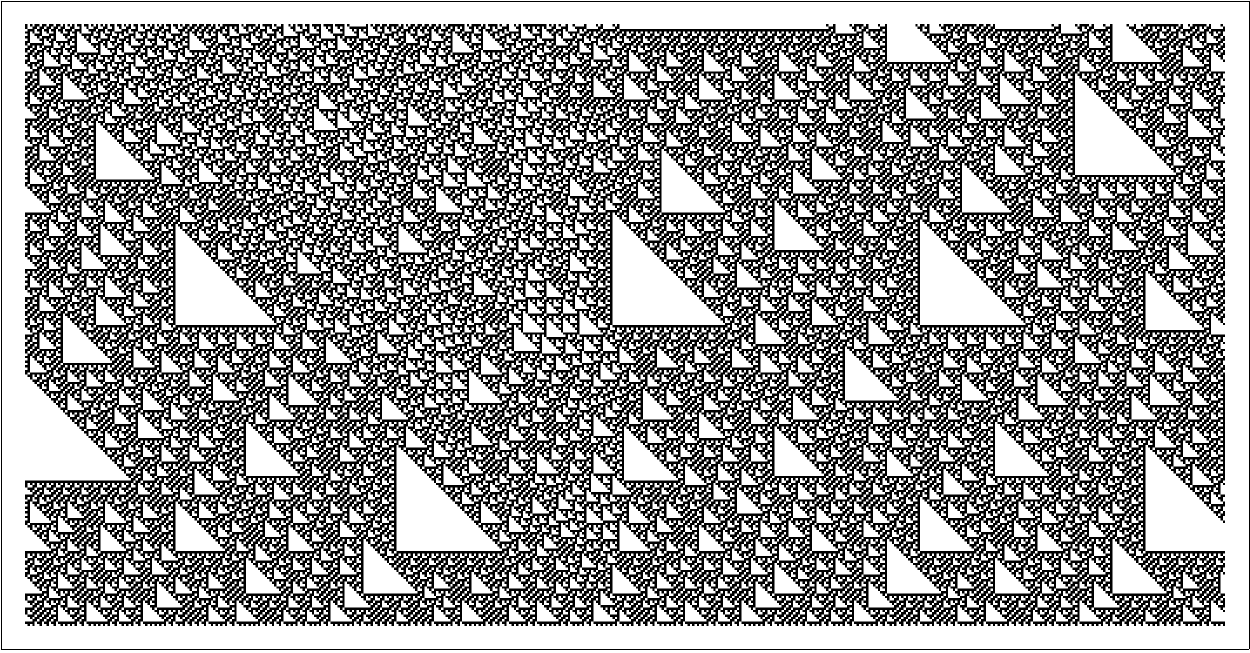}}
	\caption{Spacetime diagram for the Ledrappier cellular automaton, whose generating polynomial is $\phi(x) = 1 + x^{-1}$, with a $2$-automatic initial condition generated by the substitution in Example~\ref{column_number_1_example}.
	The dimensions are $511 \times 256$.}
	\label{column_number_1_picture}
\end{figure}

Substitutions with coincidences are not the only ones which generate shift spaces contain the constant zero configuration.
The next proposition identifies cellular automata and initial conditions which always give such a subshift.

\begin{theorem}\label{unique letter at nonzero coefficients}
Let $u \in \F_p^\Z$ be such that $(u_m)_{m\geq 0}$ is $p$-automatic, and let $U= \ST_{\Phi}(u)$.
Let $\theta : \mathcal A \to \mathcal A^p$ and $\tau : \mathcal A \to \F_p$ be such that $(u_m)_{m\geq 0} =\tau( \theta^\infty(a))$.
Let $\Phi : \F_p^\Z \rightarrow \F_p^\Z$ be a linear cellular automaton of left radius $0$ with generating polynomial $\phi(x) = \sum_{i = 0}^r \alpha_i x^{-i} \in \F_p[x^{-1}]$ such that $\sum_{i = 0}^r \alpha_i = 0$.
If there exists a finite word $w = w_0 w_1 \cdots w_r \in \mathcal A^{r + 1}$ such that $w$ occurs in $\theta^\infty(a)$ and $|\{w_i : \alpha_i \neq 0\}| = 1$, then $X_U$ contains the constant zero configuration.
\end{theorem}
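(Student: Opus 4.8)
The plan is to follow the strategy of Theorem~\ref{coincidence}: recall that $X_U$ contains the constant zero configuration as soon as, for every $m$ and every $N$, some row $\Phi^n(u)$ with $n>N$ contains the block $0^m$, since this forces arbitrarily large triangles of $0$'s in $\ST_\Phi(u)$. So I would aim to exhibit, for each large $j$, a long run of zeros in the row $\Phi^{p^j}(u)$, with the hypothesis on the word $w$ playing the role that the coincidence played in Theorem~\ref{coincidence}.

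First I would record that, by the freshman's dream in characteristic $p$, the cellular automaton $\Phi^{p^j}$ has generating polynomial $\phi(x)^{p^j}=\sum_{i=0}^r\alpha_i x^{-ip^j}$, so that $\Phi^{p^j}(u)_m=\sum_{i=0}^r\alpha_i\,u_{m+ip^j}$ for all $m\geq 0$. Next I would fix an occurrence of $w$ in $\theta^\infty(a)$, at some position $q\geq 0$, so that $(\theta^\infty(a))_{q+i}=w_i$ for $0\leq i\leq r$; since $\theta^\infty(a)$ is a fixed point of the length-$p^j$ substitution $\theta^j$, this gives $u|_{[(q+i)p^j,\,(q+i+1)p^j)}=\tau(\theta^j(w_i))$ for each $j$ and each $0\leq i\leq r$.

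The main computation is then to evaluate $\Phi^{p^j}(u)_m$ for $m$ in the block $[qp^j,\,(q+1)p^j)$. Writing $m=qp^j+s$ with $0\leq s<p^j$, the index $m+ip^j=(q+i)p^j+s$ lies in the $(q+i)$-th block above, so $u_{m+ip^j}$ is the letter at position $s$ of $\tau(\theta^j(w_i))$. Terms with $\alpha_i=0$ drop out, and the hypothesis $|\{w_i:\alpha_i\neq 0\}|=1$ says that all surviving $w_i$ equal a single letter $b$, so $u_{m+ip^j}$ is the letter at position $s$ of $\tau(\theta^j(b))$, which no longer depends on $i$. Therefore
\[
	\Phi^{p^j}(u)_m=\Bigl(\sum_{i:\,\alpha_i\neq 0}\alpha_i\Bigr)\,\tau\bigl(\theta^j(b)\bigr)_s=\Bigl(\sum_{i=0}^r\alpha_i\Bigr)\,\tau\bigl(\theta^j(b)\bigr)_s=0
\]
by the hypothesis $\sum_{i=0}^r\alpha_i=0$. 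Hence the row $\Phi^{p^j}(u)$ contains $0^{p^j}$ starting at index $qp^j$; letting $j\to\infty$ produces runs of zeros of unbounded length occurring at unbounded times, so the observation recalled above shows that the constant zero configuration lies in $X_U$.

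I do not expect a genuine obstacle here; the only delicate point is the index bookkeeping that aligns the single occurrence of $w$ with the $p^j$-blocks arising from $\theta^j$, together with the observation that the cancellation in the display works precisely because every letter of $w$ lying above a nonzero coefficient $\alpha_i$ is the same letter. (For a cellular automaton of positive left radius one would first reduce to left radius $0$ by shearing, as in Remark~\ref{left radius 0}.)
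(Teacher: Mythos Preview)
Your proposal is correct and follows essentially the same approach as the paper's own proof: both apply the freshman's dream to reduce to $\Phi^{p^j}$, locate the occurrence of $\theta^j(w)$ inside $\theta^\infty(a)$, and then use the hypothesis $|\{w_i:\alpha_i\neq 0\}|=1$ together with $\sum_i\alpha_i=0$ to cancel the contributions and produce the block $0^{p^j}$ in row $p^j$. Your version makes the index bookkeeping (the position $q$ and the alignment $m=qp^j+s$) a bit more explicit than the paper does, but the argument is the same.
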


\begin{proof}
Let $\{b\}=\{w_i:\alpha_i \neq 0 \}$. For each $j\geq 0$, since $w$ occurs in $\theta^\infty(a)$, then $\theta^j(w)$ also occurs in $\theta^\infty(a)$. Also, for each $i$ in the interval $0\leq i \leq r$ such that $\alpha_i \neq 0$, $\theta^j(b)$ occurs
at $\theta^j(w)_{[p^j i, p^j (i+1))}$. Since $\Phi^{p^j}$ has generating polynomial $\phi(x)^{p^j}= \sum_{i = 0}^r \alpha_i x^{-p^j i}$, we have, for each $k$ in the interval $0\leq k < p^j$,
\[ \left( \Phi^{p^j}\tau \left(\theta^j(w)\right) \right)_k = \sum_{i=0}^r \alpha_i \tau \left( \theta^j(w)_{p^j i +k} \right) = \sum_{i=0}^r \alpha_i \tau \left( \theta^j(b)_k\right) =\left(\sum_{i=0}^r \alpha_i\right) \tau \left(\theta^j(b)_k\right) =0,\]
so that the word $0^{p^j}$ occurs in $\ST_{\Phi}(u)$. The result follows.
\end{proof}

We remark that in the previous proof, it is sufficient that the word $w$ occurs once in $\theta^\infty(a)$, since for each $j$ we obtain a triangular region of $0$'s. Also, appropriate versions of the previous two theorems could be stated without left radius $0$; then we would also need to specify the left side of the initial condition.
Finally, given a $p$-automatic initial condition $u$, one can always find a linear cellular automaton $\Phi$ such that $\ST_{\Phi}(u)$ contains arbitrarily large words which are identically zero.
Conversely, given a linear cellular automaton $\Phi$ whose generating polynomial satisfies $\phi(1) = 0$, one can find an initial condition such that $\ST_{\Phi}(u)$ contains large words which are identically zero.
Theorems~\ref{coincidence} and \ref{unique letter at nonzero coefficients} are useful tools in Section~\ref{measure}, where we wished to avoid finitely supported invariant measures.

\begin{corollary}\label{p=2 Ledrappier}
Let $u \in \F_2^\Z$ be such that $(u_m)_{m\geq 0}$ is $2$-automatic, and let $U=\ST_\Phi(u)$.
Let $\Phi:\F_2^\Z \rightarrow \F_2^\Z$ be the Ledrappier cellular automaton with generating polynomial $\phi(x) = 1+x^{-1}\in \F_2[x^{-1}]$.
Then $X_U$ contains the constant zero configuration.
\end{corollary}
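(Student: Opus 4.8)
The plan is to derive the corollary from Theorem~\ref{unique letter at nonzero coefficients}, with Theorem~\ref{coincidence} mopping up a degenerate case. The first thing I would do is check that the standing hypotheses of Theorem~\ref{unique letter at nonzero coefficients} hold automatically for the Ledrappier automaton over $\F_2$: here $\ell=0$, $r=1$, and $\alpha_0=\alpha_1=1$, so $\phi(x)=\sum_{i=0}^{1}\alpha_i x^{-i}$ has the required shape and, crucially, $\sum_{i=0}^{r}\alpha_i = 1+1 = 0$ in $\F_2$. Thus the only thing that needs to be supplied is the combinatorial input: writing $(u_m)_{m\geq 0}=\tau(\theta^\infty(a))$ for a length-$2$ substitution $\theta\colon\mathcal A\to\mathcal A^2$ and a coding $\tau$ (Theorem~\ref{Cobham}, replacing $\theta$ by a power if necessary so that $\theta(a)$ begins with $a$), I must produce a factor $w=w_0w_1$ of $\theta^\infty(a)$ with $|\{w_i:\alpha_i\neq 0\}|=|\{w_0,w_1\}|=1$, i.e.\ two equal consecutive letters $bb$.

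Next I would handle the main case, in which $\theta^\infty(a)$ does contain such a letter-square $bb$. Applying Theorem~\ref{unique letter at nonzero coefficients} with $w=bb$ shows that $0^{2^j}$ occurs in $\ST_\Phi(u)$ for every $j$ (indeed the proof puts it on row $2^j$, since $\theta^j(bb)=\theta^j(b)\theta^j(b)$ is a factor of $\theta^\infty(a)$ and collapses under $\Phi^{2^j}$). These zero words have lengths tending to infinity and sit arbitrarily far down the diagram; since $\Phi$ has left radius $0$, a horizontal run of $m$ zeros on one row forces runs of lengths $m-1,m-2,\dots$ on the rows below, i.e.\ a triangle of $0$'s, so $\ST_\Phi(u)$ contains arbitrarily large all-zero rectangles and hence the constant zero configuration lies in $X_U$, as recorded in the paragraph preceding Remark~\ref{left radius 0}.

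It then remains to deal with the case where $\theta^\infty(a)$ has no letter-square. When $\mathcal A=\F_2$ this forces $\theta^\infty(a)\in\{(01)^\infty,(10)^\infty\}$, which in turn forces $\theta(0)=\theta(1)=01$ (resp.\ $10$); such a $\theta$ has a coincidence in its first column, so $(u_m)_{m\geq 0}$ has a coincidence and the conclusion follows from Theorem~\ref{coincidence} instead. The step I expect to be the real obstacle is exactly the transition between these two cases: a general $2$-automatic sequence is only guaranteed to be a coding of a length-$2$ substitution on a \emph{larger} alphabet $\mathcal A$, and a square-free fixed point of such a $\theta$ need not have a coincidence (a bijective, non-collapsing substitution is the typical culprit, e.g.\ one whose coding is eventually periodic). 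So the crux is to argue that one may always re-choose the pair $(\theta,\tau)$ — for instance by passing to the minimal automaton and exploiting that two letters with the same residual sequence may be identified under $\tau$ — so that either a letter-square appears or a coincidence is present; in the residual genuinely (eventually) periodic situations one can alternatively verify $\mathbf 0\in X_U$ by hand directly from the periodicity of $u$.
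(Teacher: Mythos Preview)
You have correctly located the crux: Theorem~\ref{unique letter at nonzero coefficients} requires a repeated letter $bb$ in the \emph{substitution} fixed point $\theta^\infty(a)\in\mathcal A^\N$, not merely $00$ or $11$ in the coded sequence $(u_m)_{m\ge0}\in\F_2^\N$, and once $|\mathcal A|\ge3$ these are different conditions. Your proposal in fact mirrors the paper's own proof (the paper also splits on whether $00$ or $11$ occurs in $(u_m)_{m\ge0}$ and invokes Theorem~\ref{unique letter at nonzero coefficients} in the main case, Theorem~\ref{coincidence} otherwise); what you have added is the honest observation that the passage from an $\F_2$-square to an $\mathcal A$-square is not automatic.

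This gap cannot be filled, because the corollary as stated is false. Take the bi-infinite period-$3$ sequence $u\in\F_2^\Z$ with $u_0u_1u_2=001$; its right half is periodic, hence $2$-automatic, and contains $00$. A direct computation shows that the rows of $\ST_\Phi(u)$ have period-$3$ patterns $001,011,101,110,011,101,110,\dots$, so no horizontal run of more than two zeros ever appears and $\mathbf 0\notin X_U$. Consequently \emph{no} representation of $001001\cdots$ as $\tau(\theta^\infty(a))$ can carry a letter-square in $\theta^\infty(a)$ or a coincidence in $\theta$: either would, via the proofs of Theorems~\ref{unique letter at nonzero coefficients} and~\ref{coincidence}, force arbitrarily long zero-runs in the right half of the diagram. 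So your suggested repairs --- re-choosing $(\theta,\tau)$, or verifying the periodic case by hand --- both fail on this example, as does the paper's own argument.
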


\begin{proof}
If $00$ or $11$ occurs in $(u_m)_{m\geq 0}$, we are done by Theorem~\ref{unique letter at nonzero coefficients}.
Otherwise, $(u_m)_{m\geq 0}$ is $0101\cdots$ or $1010\cdots$.
Since each of these sequences has a coincidence, we are done by Theorem~\ref{coincidence}.
\end{proof}

\begin{example}\label{Ledrappier example}
Let $\Phi$ be the Ledrappier cellular automaton
Let $\theta$ be the Thue--Morse substitution, $\theta(0)=01$ and $\theta(1)=10$, and let $p=2$. Then $00$ and $11$ occur in both fixed points of $\theta$ and the conditions of Corollary~\ref{p=2 Ledrappier} are satisfied; see Figure~\ref{Z x Z}.
\end{example}

\section*{Acknowledgement}

We thank Benjamin Hellouin de Menibus and Marcus Pivato for helpful discussions, and the referee for a careful reading.
Reem Yassawi thanks IRIF, Universit\'e Paris Diderot-Paris 7, for its hospitality and support.

\bibliographystyle{acm}
\bibliography{bibliography}

\end{document}